\newtheorem{definition}{Definition}
\newtheorem{fact}{Fact}
\newtheorem{proposition}{Proposition}
\newtheorem{remark}{Remark}
\def\citeapos#1{\citeauthor{#1}'s (\citeyear{#1})}
\def\R{\mathbb{R}}
\definecolor{myPaleBlue}{RGB}{187,204,238}
\definecolor{myPaleYellow}{RGB}{238,238,187}
\definecolor{myPaleRed}{RGB}{255,204,204}
\newcolumntype{H}{>{\setbox0=\hbox\bgroup}c<{\egroup}@{}}
\newcites{SOMpapers}{References}
\def\citeSOMpapersapos#1{\citeauthor{#1}'s (\citeyear{#1})}
\newlength\mystoreparindent
\newenvironment{myparindent}[1]{%
  \setlength{\mystoreparindent}{\the\parindent}
  \setlength{\parindent}{#1}
}{%
  \setlength{\parindent}{\mystoreparindent}
}
\title{
  Decision Making under Uncertainty: An Experimental Study in Market Settings 
}
\author{
  Federico Echenique 
  \and 
  Taisuke Imai 
  \and 
  Kota Saito 
  \thanks{
  Echenique: Division of the Humanities and Social Sciences, California Institute of Technology, \href{mailto:fede@hss.caltech.edu}{\texttt{fede@hss.caltech.edu}}. 
  Imai: Department of Economics, LMU Munich, \href{mailto:taisuke.imai@econ.lmu.de}{\texttt{taisuke.imai@econ.lmu.de}}. 
  Saito: Division of the Humanities and Social Sciences, California Institute of Technology, \href{mailto:saito@caltech.edu}{\texttt{saito@caltech.edu}}. 
  } 
}
\date{
  April 30, 2021
}
\begin{document}

\maketitle 

\begin{abstract}
We implement nonparametric revealed-preference tests of subjective expected utility theory and its generalizations. We find that a majority of subjects' choices are consistent with the maximization of some utility function. They respond to price changes in the direction subjective expected utility theory predicts, but not to a degree that makes them consistent with the theory. Maxmin expected utility adds no explanatory power. The degree of deviations from the theory is uncorrelated with demographic characteristics. Our findings are essentially the same in laboratory data with a student population and in a panel survey with a general sample of the U.S. population. 
\end{abstract}

\clearpage 
{\footnotesize 
\paragraph{Acknowledgement.}
The authors wish to thank Aur\'{e}lien Baillon, Yoram Halevy, Yves Le~Yaouanq, Olivia Mitchell, Pietro Ortoleva, Maricano Siniscalchi, and Charles Sprenger for helpful discussions. The authors also thank Noriko Imai for developing the software for the experiment, Yimeng Li for research assistance, John Duffy, Michael McBride, and Jason Ralston for supporting our experiments at ESSL, and Tania Gutsche and Bart Orriens at CESR (University of Southern California) for setting up and implementing the survey on the Understanding America Study. 
This research is supported by Grant SES-1558757 from the National Science Foundation and the TIAA Institute and Wharton School's Pension Research Council/Boettner Center. In addition, Echenique thanks the NSF for its support through the grant CNS-1518941, Imai acknowledges financial support by the Deutsche Forschungsgemeinschaft through CRC TRR 190, and Saito thanks the NSF for its support through the grant SES-1919263. The project described received funding from the TIAA Institute and Wharton School's Pension Research Council/Boettner Center. The content is solely the responsibility of the authors and does not necessarily represent official views of the TIAA Institute or Wharton School's Pension Research Council/Boettner Center. The project described in this paper relies on data from surveys administered by the Understanding America Study, which is maintained by the Center for Economic and Social Research (CESR) at the University of Southern California. The content of this paper is solely the responsibility of the authors and does not necessarily represent the official views of USC or UAS. A summary of our research was published in the TIAA Institute's ``Trends and issues,'' and a first draft of this paper on their ``Research dialogue.''}

\AtBeginEnvironment{thebibliography}{\linespread{1}\selectfont}

\clearpage 
\onehalfspacing
\section{Introduction}
\label{section:introduction} 

Subjective expected utility theory \citep[SEU;][]{savage1954} is the standard model of decision making in the face of uncertainty, where objective probabilities about uncertain states of the world are not known to agents. The theory postulates an agent that behaves as if they have a subjective probabilistic belief over states of the world and maximizes expected utility with respect to this belief. 

While SEU is the leading theory of choice under uncertainty, it is well known to face empirical challenges. In an influential paper, \cite{ellsberg1961} argued that many agents would not conform to SEU. The phenomenon he uncovered, known as the ``Ellsberg paradox,'' suggests that agents may seek to avoid betting on uncertain events in ways that cannot be reconciled with a subjective probability. Such avoidance of uncertain bets is termed ambiguity aversion and subsequent empirical literature has identified it in different contexts and in different subject populations \citep{trautmann2015ambiguity}. 

The empirical literature has relied almost exclusively on the thought experiment discussed in \cite{ellsberg1961}, where agents are offered bets on the color of balls drawn from urns whose composition is not fully specified. The simple binary choice structure of \citeauthor{ellsberg1961} makes it easy to identify violations of SEU through violations of the so-called ``sure-thing principle'' \citep[postulates P2 and P4 of][]{savage1954}. However, the artificial nature of the experiment may question the external validity of its findings. Despite its difficulty, designing choice environments that are more ``natural,'' while providing clean identification, is an important step toward deeper empirical understandings of decision making under uncertainty. 

In this study, we present an empirical investigation of SEU and its generalization, maxmin expected utility \citep[MEU;][]{gilboa1989}, from a different angle, combining an experimental paradigm and measurement techniques that are inspired by recent development in revealed preference theory. 

We consider a ``market'' environment in which an agent chooses a portfolio of Arrow-Debreu securities, given state prices and a budget. 
\cite{echenique2015savage} provide a necessary and sufficient condition for an agent's behavior in the market to be consistent with (risk-averse) SEU. Similarly, \cite{chambers2016} provide a condition for MEU when there are two states of the world. \cite{echenique2018approximate} characterize an ``approximate'' version of SEU, allowing for errors and mistakes. These revealed-preference characterizations provide tests for SEU and MEU, as well as a measure quantifying ``how much'' a dataset deviates from these theories. The tests are {\em nonparametric} in the sense that they do not impose any specific functional forms on utility functions, such as CRRA or CARA. They do assume that agents are risk averse or risk neutral (i.e., they impose a concave von~Neumann-Morgenstern utility). 

We bring these nonparametric revealed-preference tests to actual choices people make in the face of uncertainty. Following the spirit of portfolio-choice tasks introduced by \cite{loomes1991evidence} and \cite{choi2007}, and later used in many other studies \citep[e.g.,][]{ahn2014,carvalho2019complexity,choi2014who,hey2014}, subjects were asked to purchase bundles of state-contingent payoffs under varying budget constraints while not knowing the probabilities of the states of the world. 

Our exploratory analysis starts with checking whether subjects are consistent with SEU, MEU, or more general utility maximization; and, if they are not consistent, how large their violations are. 
In order to investigate the effect of the source of uncertainty on behavior, we generate uncertainty from two different sources. The first source is the classical Ellsberg-style ``urns and balls.'' The second one comes from simulated stock prices. 
To understand the robustness of our findings across different subject populations,  we ran experiments in the laboratory, where we recruited undergraduate students, and on a large-scale internet panel, where we recruited subjects from a general sample of the adult U.S.\ population. 
Finally, we compare our measures of degree of deviation from SEU and the standard measure of ambiguity attitude {\`a} la Ellsberg.

\subsection{Overview of Results}
Our main findings are that: (1) subjects are consistent with general utility maximization and \citeapos{machina1992} probabilistic sophistication, but not SEU; (2) MEU adds no explanatory power to SEU; (3) demand responds to price changes in the direction predicted by SEU, but not enough to make the data fully consistent with SEU; (4) subjects in the laboratory and in the panel display similar patterns; and (5) the correlation between the aforementioned results and demographic characteristics are weak. 

The main purpose of our study was to nonparametrically test theories of decision making under uncertainty. We find that most subjects are utility maximizers (they satisfy the Generalized Axiom of Revealed Preference), and satisfy \citeapos{epstein2000probabilities} necessary condition for probabilistic sophistication.\footnote{Since we test a necessary condition for probabilistic sophistication, we can only say that subjects are {\em not inconsistent} with probabilistic sophistication.} 
However, the news is not good for more restrictive theories. In our experiments, the vast majority of subjects, both in the laboratory and on the panel survey, do not conform to SEU. This finding would be in line with the message of the Ellsberg paradox, except that pass rates for MEU are just as low as for SEU. In fact, in all of our sample, there is only one subject whose choice is consistent with MEU but not SEU. 

One might conjecture that the theories could be reconciled with the data if one allows for  mistakes, but our measures of the distance from the theory do not suggest so. A more forgiving test is to check if prices are negatively correlated with quantities: we refer to this property as ``downward-sloping demand,'' and it bears a close connection to SEU (see \cite{echenique2018approximate} for details). The vast majority of subjects exhibit the downward-sloping demand property, at least to some degree, but not to the extent needed to make them fully consistent with SEU. 

Our panel experiment allows us to connect the distance to SEU with subjects'  sociodemographic characteristics. We find that the distance to SEU is weakly correlated with financial literacy, with more financially-literate subjects being closer to SEU than less literate subjects. A notable finding is the absence of a significant correlation with factors that have been shown to matter for related theories of choice \citep{choi2014who,echenique2018approximate}. In particular, older subjects, subjects with lower educational backgrounds, and subjects with lower cognitive ability, do not necessarily exhibit lower degrees of compliance with SEU. 

One final implication of our results is worth discussing. Our experiments included a version of the standard Ellsberg questions. The distance to SEU, or the degree of compliance with downward-sloping demand, are not related to the answers to the Ellsberg questions, but the variability of uncertainty in our market experiment is. Our between-subject experimental design included a treatment on the variability of the uncertain environment, specifically the variability in the sample paths of the stock price whose outcomes subjects were betting on. Subjects who were exposed to more variable uncertainty seem less ambiguity averse (in the sense of Ellsberg) than subjects who were exposed to less variable uncertainty.

\subsection{Related Literature}
Starting with an influential thought experiment by \cite{ellsberg1961}, many studies have tested SEU and related models of decision making under uncertainty using data from laboratory experiments. \cite{trautmann2015ambiguity} provide an overview of this large but still growing empirical literature. Typical experiments involve ``urns and balls'' following \citeapos{ellsberg1961} original thought experiment, and individual's attitude towards ambiguity is inferred by looking at valuations or beliefs elicited through a series of binary choices \citep[e.g.,][]{abdellaoui2011source, baillon2015ambiguity, chew2017partial, epstein2019ambiguous, halevy2007}. 

Other studies try to estimate parameters of the models of decision making under uncertainty \citep[e.g.,][]{ahn2014, dimmock2015estimating, hey2010, hey2014}. Unlike these studies, our approach is nonparametric, imposing no assumptions on functional form other than risk-aversion.

While the use of artificially generated ambiguity as in Ellsberg-style urns and balls has attractive features that make the interpretation of choice behavior, and experimental implementation, simple, it has been argued that researchers should not rely too much on a paradigm that uses an artificial source of ambiguity. Instead, one should study more ``natural'' sources of ambiguity.\footnote{For example, \cite{camerer1992} note that: ``Experimental studies that do not directly test a specific theory should contribute to a broader understanding of betting on natural events in a wider variety of conditions where information is missing. There are diminishing returns to studying urns!'' (p.~361). Similarly, \cite{gilboa2009} writes: ``David Schmeidler often says, `Real life is not about balls and urns.' Indeed, important decisions involve war and peace, recessions and booms, diseases and cures'' (p.~136).}
In response to these concerns, several studies use non-artificial sources of ambiguity such as stock market indices and temperature \citep{abdellaoui2011source, baillon2015ambiguity, baillon2018learning}. \cite{baillon2018measuring} introduce a method that elicits ambiguity attitudes for natural events while controlling for unobservable subjective likelihoods. \cite{anantanasuwong2019ambiguity} apply the methodology of \cite{baillon2018measuring} to elicit ambiguity perceptions and attitudes from a sample of Dutch investors. 

It is also important to note that there are several studies that try to understand the relationship between sociodemographic characteristics, ambiguity attitudes, and real-world behavior (especially financial).\footnote{\cite{trautmann2015ambiguity} note the importance of this direction: ``Interestingly, the empirical literature has so far provided little evidence linking individual attitudes toward ambiguity to behavior outside the laboratory. Are those agents who show the strongest degree of ambiguity aversion in some decision task also the ones who are most likely to avoid ambiguous investments?'' (p.~89).} This is a subset of a growing empirical literature that seeks to understand the common foundation of a wide class of (behavioral) preferences and to relate cross-/within-country heterogeneity and cultural or sociodemographic characteristics \citep[e.g.,][]{bianchi2019ambiguity, bonsang2015risk, dimmock2015estimating, dimmock2016portfolio, dimmock2016ambiguity, dohmen2018relationship, falk2018global, huffman2019time, sunde2016aging, tymula2013like}. 

Finally, the analysis of our data uses theoretical tools developed and discussed in \cite{chambers2016}, \cite{echenique2015savage}, and \cite{echenique2018approximate}. They require coupling SEU and MEU with risk-aversion. The methods in \cite{polissonquahrenou} avoid the assumption of risk-aversion, but are computationally hard to implement in the case of SEU (their paper contains an application to objective EU, for which their method is efficient). \citeauthor{polissonquahrenou} also develop a test for first-order stochastic dominance in models with known (objective) probabilities. Their test could be seen as a first step towards an understanding of probabilistic sophistication.

\section{Revealed Preferences}
\label{section:theoretical_background} 

We introduce our notions of rationality and ways to test them nonparametrically. The discussion in this section serves to motivate our experimental design (Section~\ref{section:design}) as well as our strategies for data analysis (Section~\ref{section:results}). 

Let $S$ be a finite set of {\em states}. 
Let $\Delta_{++} = \{ \mu \in \R_{++}^S : \sum_{s=1}^S \mu_s = 1 \}$ denote the set of strictly positive probability measures on $S$. 
In the models we consider below, the objects of choice are state-contingent monetary payoffs, or simply {\em monetary acts}, which are vectors in $\R_+^S$. 

A {\em dataset} is a finite collection $(p^k, x^k)_{k=1}^K$, where each $p^k \in \R_{++}^S$ is a vector of strictly positive prices and each $x^k \in \R_+^S$ is a monetary act. $K$ indicates the number of observations. 
The interpretation of a dataset is that each pair $(p^k, x^k)$ consists of a monetary act $x^k$ chosen from the budget $B (p^k, p^k \cdot x^k) = \{ x \in \R_+^S : p^k \cdot x \leq p^k \cdot x^k \}$ of affordable acts. 
We now introduce several concepts of rationalization of the dataset, ordered from the most restrictive to the least. 

Following \cite{echenique2015savage}, we say that a dataset $(p^k, x^k)_{k=1}^K$ is {\em subjective expected utility (SEU) rational with risk aversion} if there exist $\mu \in \Delta_{++}$ and a concave and strictly increasing function $u : \R_+ \to \R$ such that, for all $k$, 
\begin{eqnarray*}
y \in B (p^k, p^k \cdot x^k) \Longrightarrow \sum_{s \in S} \mu_s u (y_s) \leq \sum_{s \in S} \mu_s u (x_s^k) . 
\end{eqnarray*}

\cite{gilboa1989} suggest that an agent in Ellsberg's example may have too little information to form a unique subjective belief, and hence entertains multiple subjective probabilities. Being ambiguity averse, the agent maximizes the minimal expected utility over all possible subjective probabilities she entertains. The resulting theory is called maxmin expected utility.

Following \cite{chambers2016}, we say that a dataset $(p^k, x^k)_{k=1}^K$ is {\em maxmin expected utility (MEU) rational with risk aversion} if there exist a convex set $\Pi \subseteq \Delta_{++}$ and a concave and strictly increasing function $u : \R_+ \to \R$ such that, for all $k$, 
\begin{eqnarray*}
y \in B (p^k, p^k \cdot x^k) \Longrightarrow \inf_{\pi \in \Pi} \sum_{s \in S} \pi_s u (y_s) \leq \inf_{\pi \in \Pi} \sum_{s \in S} \pi_s u (x_s^k) . 
\end{eqnarray*}

\cite{echenique2015savage} and \cite{chambers2016} develop behavioral axiomatic characterizations of risk-averse SEU and risk-averse MEU with two states, which they term the Strong Axiom of Revealed Subjective Expected Utility and Strong Axiom of Revealed Maxmin Expected Utility, respectively.\footnote{\cite{chambers2016} have results for MEU with more states than two, but only under the assumption of risk neutrality.}
Discussing these axioms is beyond the scope of the paper, but roughly speaking, they say that prices and quantities must be inversely related, subject to certain qualifications. We term this {\em downward-sloping demand} property. 

We are able to check whether a given dataset is consistent with SEU or MEU by solving the linear program that is equivalent to the corresponding axiom characterizing each model.  

\begin{fact}\label{fact:seu_rp_test}
A dataset $(x^k, p^k)_{k=1}^K$ is SEU rational with risk aversion if and only if there are strictly positive numbers $v_s^k$, $\lambda^k$, and $\mu_s$ for $s = 1, \dots, S$ and $k = 1, \dots, K$ such that 
  \[ \mu_s v_s^k = \lambda^k p_s^k , \;\; x_s^k > x_{s'}^{k'} \Longrightarrow v_s^k \leq v_{s'}^{k'} . \]
\end{fact}

\begin{fact}\label{fact:meu_rp_test}
Given a dataset $(x^k, p^k)_{k=1}^K$, let $K^0 = \{ k : x^k_1 = x^k_2 \}$, $K^1 = \{ k : x^k_1 < x^k_2 \}$ and $K^2 = \{ k : x^k_1 > x^k_2 \}$. 
A dataset $(x^k, p^k)_{k=1}^K$ is MEU rational with risk aversion and two states if and only if there are strictly positive numbers $\underline{\pi}$, $\bar{\pi}$, $\pi^k$, $v_s^k$, and $\lambda^k$ for $s = 1, 2$ and $k = 1, \dots, K$ such that 
  \[ \pi^k v_s^k = \lambda^k p_s^k , \;\; \bar{\pi} \geq \underline{\pi} , \;\; x_s^k > x_{s'}^{k'} \Longrightarrow v_s^k \leq v_{s'}^{k'} , \]
where $\pi^k = \bar{\pi}$ if $k \in K^1$, $\pi^k = \underline{\pi}$ if $k \in K^2$, and $\pi^k \in [\underline{\pi}, \bar{\pi}]$ if $k \in K^0$. 
\end{fact}

Facts~\ref{fact:seu_rp_test} and~\ref{fact:meu_rp_test} stem from the first-order conditions for the maximization of SEU and MEU. Thanks to these facts, testing for SEU or MEU rationality boils down to finding numbers like $v_s^k$, $\lambda^k$, and $\mu_s$. See Online Appendix~\ref{appendix:implementation} for details. 

When imposed on a dataset, requiring that a decision-maker maximizes expected utility {\em exactly}, without errors, may be too demanding. In order to capture situations where the model holds {\em approximately}, \cite{echenique2018approximate} relax the previous definition of SEU rationality by ``perturbing'' some elements of the model.\footnote{\cite{echenique2018approximate} introduce perturbation of utilities, prices, and beliefs and show that these three sources of perturbations are equivalent. We assume price perturbations here since this source is best suited to our empirical applications.} 

Let $e \in \R_+$ be a number that controls the size of permissible perturbations. We say that a dataset $(x^k,p^k)_{k=1}^K$ is {\em $e$-price-perturbed SEU rational with risk aversion} if there exist $\mu \in \Delta_{++}$, a concave and strictly increasing function $u : \R_+ \rightarrow \R$, and $\varepsilon^k \in \R_{+}^S$ for each $k \in K$ such that, for all $k$, 
  \[ y \in B (\tilde{p}^k, \tilde{p}^k \cdot x^k) \Longrightarrow \sum_{s \in S} \mu_s u (y_s) \leq \sum_{s \in S} \mu_s u (x^k_s) , \]
where for all $k \in K$ and $s \in S$, 
  \[ \tilde{p}^k_s = p^k_s \varepsilon^k_s , \]
and for all $k, l \in K$ and $s, t \in S$, 
  \[ \frac{\varepsilon^k_s/\varepsilon^k_t}{\varepsilon^l_s/\varepsilon^l_t} \leq 1+e . \]
The idea behind the model is that prices are measured, or perceived, with error. We consider the multiplicative form $p^k_s \varepsilon^k_s$ for mathematical convenience. 
As above, we can check this notion of ``approximate'' rationality by setting up a linear programming problem. 

\begin{fact}\label{fact:seu_approx_test}
Given $e \in \R_+$, a dataset $(x^k, p^k)_{k=1}^K$ is $e$-price-perturbed SEU rational with risk aversion if and only if there are strictly positive numbers $v_s^k$, $\lambda^k$, $\mu_s$, and $\varepsilon_s^k$ for $s = 1, \dots, S$ and $k = 1, \dots, K$ such that 
\[ \mu_s v_s^k = \lambda^k \varepsilon_s^k p_s^k , \;\; x_s^k > x_{s'}^{k'} \Longrightarrow v_s^k \leq v_{s'}^{k'} , \]
and for all $k, l \in K$ and $s, t \in S$, 
\[ \frac{\varepsilon_s^k/\varepsilon_t^k}{\varepsilon_s^l/\varepsilon_t^l} \leq 1+e . \]
\end{fact}

Note that price-perturbed SEU with $e = 0$ corresponds to the exact SEU rationality as discussed above, and any dataset becomes $e$-price-perturbed SEU rational if we set $e$ large enough. We are thus interested in the {\em smallest~$e$} for which the dataset becomes $e$-price-perturbed SEU rational. We term this number {\em minimal~$e$} and denote it simply by $e_*$. In the sequel, minimal~$e$ will be our notion of distance between the observed dataset and SEU. 
Using Fact~\ref{fact:seu_approx_test}, we can compute $e_*$ by setting up a constrained minimization problem as follows. 

\begin{fact}\label{fact:minimal_e}
Minimal~$e$ for SEU is a solution to the following problem: 
\begin{equation*}
\begin{aligned}
\min_{(\mu_s, v^k_s, \lambda^k, \varepsilon^k_s)_{k,s}} & \max_{k, l \in K, s, t \in S}  \frac{\varepsilon_s^k/\varepsilon_t^k}{\varepsilon_s^l/\varepsilon_t^l} \\
\rm{ s.t. } & \;\; \mu_s v^k_s = \lambda^k \varepsilon^k_s p^k_s ,  \;\; x^k_s > x^{k'}_{s'} \Longrightarrow v^k_s \leq v^{k'}_{s'} . 
\end{aligned}
\end{equation*}
\end{fact}

\cite{echenique2018approximate} study perturbed versions of objective and subjective expected utility. We can extend their framework to define $e$-price-perturbed MEU and obtain minimal~$e$ for MEU in a similar manner. Given a dataset $(x^k, p^k)_{k=1}^K$, let us define $K^0 = \{ k : x^k_1 = x^k_2 \}$, $K^1 = \{ k : x^k_1 < x^k_2 \}$, and $K^2 = \{ k : x^k_1 > x^k_2 \}$ as in Fact~\ref{fact:meu_rp_test}. 

\begin{fact}\label{fact:minimal_e_meu}
Given $e \in \R_+$, a dataset $(x^k, p^k)_{k=1}^K$ is $e$-price-perturbed MEU rational with risk aversion and two states if and only if there are strictly positive numbers $\underline{\pi}$, $\bar{\pi}$, $\pi^k$, $v_s^k$, $\lambda^k$, and $\varepsilon_s^k$ for $s = 1, 2$ and $k = 1, \dots, K$ such that 
  \begin{equation}\label{eq:e_meu_constraints}
  \pi^k v_s^k = \lambda^k \varepsilon_s^k p_s^k , \;\; \bar{\pi} \geq \underline{\pi} , \;\; x_s^k > x_{s'}^{k'} \Longrightarrow v_s^k \leq v_{s'}^{k'} ,
  \end{equation}
where $\pi^k = \bar{\pi}$ if $k \in K^1$, $\pi^k = \underline{\pi}$ if $k \in K^2$, and $\pi^k \in [\underline{\pi}, \bar{\pi}]$ if $k \in K^0$, and for all $k, l \in K$ and $s, t \in S$, 
\[ \frac{\varepsilon_s^k/\varepsilon_t^k}{\varepsilon_s^l/\varepsilon_t^l} \leq 1+e . \]
Minimal~$e$ for MEU is a solution to the following problem: 
\begin{equation*}
\begin{aligned}
\min_{(\underline{\pi}, \bar{\pi}, \pi^k, v^k_s, \lambda^k, \varepsilon^k_s)_{k,s}} & \max_{k, l \in K, s, t \in S}  \frac{\varepsilon_s^k/\varepsilon_t^k}{\varepsilon_s^l/\varepsilon_t^l} \\
\rm{ s.t. } & \;\; \rm{constraints~\eqref{eq:e_meu_constraints}} . 
\end{aligned}
\end{equation*}
\end{fact}

We now  turn to the most basic Bayesian model of decision under uncertainty. \cite{machina1992} postulate that agents may have a unique subjective probability, but not necessarily decide according to the expected utility with respect to this probability.\footnote{\cite{machina1992} were motivated by paradoxes of choice under risk, not uncertainty.} 
An agent is {\em probabilistically sophisticated} if $x \in \R_+^S$ is evaluated by the distribution it induces given some prior $\mu \in \Delta_{++}$. \cite{epstein2000probabilities} proposes the following necessary condition. 
\begin{fact}\label{fact:epstein_ps}
If a dataset $(x^k, p^k)_{k=1}^K$ is probabilistically sophisticated, then there cannot exist $k, k' \in K$ and $s, t \in S$ such that 
\begin{enumerate}
\item $p_t^k \geq p_s^k$ and $p_s^{k'} \geq p_t^{k'}$, with at least one inequality being strict, and 
\item $x_t^k > x_s^k$ and $x_s^{k'} > x_t^{k'}$. 
\end{enumerate}
\end{fact}

Finally, we know, from \cite{afriat1967} and \cite{varian1982}, that the Generalized Axiom of Revealed Preference (GARP) is a necessary and sufficient condition for a dataset to be consistent with maximization of a well-behaved utility function. 
We say that a bundle $x^k$ is {\em directly revealed preferred to} another bundle $x$, denoted $x^k \succeq^R x$, if $p^k \cdot x^k \geq p^k \cdot x$, and is {\em strictly directly revealed preferred to} $x$, denoted $x^k \succ^R x$, if $p^k \cdot x^k > p^k \cdot x$. 

\begin{fact}\label{fact:garp}
A dataset $(x^k, p^k)_{k=1}^K$ satisfies GARP if and only if for any sequence $((x^{k_1}, p^{k_1}), \dots, (x^{k_L}, p^{k_L}))$, 
  \[ x^{k_1} \succeq^R x^{k_2} , \;\; x^{k_2} \succeq^R x^{k_3} , \;\; \dots, \;\; x^{k_{L-1}} \succeq^R x^{k_L} \;\; \Longrightarrow \;\; \text{not } x^{k_L} \succ^R x^{k_1} . \]
\end{fact}

When a dataset does not satisfy GARP, we are interested in measuring how severe this violation is. Most of the existing studies applying revealed preference methods use the measure called Critical Cost Efficiency Index, inspired by \citeapos{afriat1967} observation that the violation of GARP disappears if expenditures at each observation are deflated.\footnote{CCEI is not without problems: see \cite{echenique2011} for a discussion and a proposed alternative.}

\begin{fact}\label{fact:ccei}
Given a dataset $(x^k, p^k)_{k=1}^K$ and a number $e \in [0, 1]$, define a pair of modified revealed preference relations $\langle \succeq^{R (e)}, \succ^{R (e)} \rangle$ by $x^k \succeq^{R (e)} x$ if $e p^k \cdot x^k \geq p^k \cdot x$ and $x^k \succ^{R (e)} x$ if $e p^k \cdot x^k > p^k \cdot x$. 
We say that a dataset $(x^k, p^k)_{k=1}^K$ satisfies $\text{GARP} (e)$ if and only if for any sequence $((x^{k_1}, p^{k_1}), \dots, (x^{k_L}, p^{k_L}))$, 
  \[ x^{k_1} \succeq^{R (e)} x^{k_2} , \;\; x^{k_2} \succeq^{R (e)} x^{k_3} , \;\; \dots, \;\; x^{k_{L-1}} \succeq^{R (e)} x^{k_L} \;\; \Longrightarrow \;\; \text{not } x^{k_L} \succ^{R (e)} x^{k_1} . \]
Critical Cost Efficiency Index (CCEI) is the supremum over all the numbers $e$ such that $(x^k, p^k)_{k=1}^K$ satisfies $\text{GARP} (e)$: 
  \[ \text{CCEI} = \sup \left\{ e \in [0, 1] : (x^k, p^k)_{k=1}^K \text{ satisfies GARP} (e) \right\} . \]
\end{fact}

\section{Experimental Design}
\label{section:design} 

The goal of our experiment is to nonparametrically test models of decision making under uncertainty, measure the degree of consistency of the data with the models, and relate this degree to the standard measure of ambiguity attitude, as well as subjects' demographic characteristics. Our design mirrors the environment described in Section~\ref{section:theoretical_background}. 

We conducted experiments at the Experimental Social Science Laboratory at the University of California, Irvine (hereafter {\em the laboratory}), and on the Understanding America Study (UAS) panel, a longitudinal survey platform (hereafter {\em the panel}).\footnote{Our experiment was approved by the Institutional Review Board of California Institute of Technology (\#15-0478). It was then reviewed and approved by the director of ESSL and the board of UAS. The module number of our UAS survey is~116 (\href{https://uasdata.usc.edu/survey/UAS+116}{\url{https://uasdata.usc.edu/survey/UAS+116}}).} 
The general structure of tasks in the laboratory and on the panel was the same, but there were several differences between the two. We shall first describe the basic tasks in Section~\ref{section:design_tasks}. Then, in Section~\ref{section:design_implementation}, we turn to the specific features of each implementation\textemdash{} such as recruiting procedures, treatment variations, and incentives. Further details and instructions appear in Online Appendices~\ref{appendix:design_detail} and~\ref{appendix:instruction}.

\subsection{Tasks}
\label{section:design_tasks}

We first describe two tasks used in our experiments: the market task (also referred to as the allocation task), and the Ellsberg two-urn choice task. The market task has two versions, depending on the source of uncertainty. The exact set of tasks differed somewhat depending on the platform: the laboratory or the panel. Table~\ref{table:exp_task_structure} presents an overview of the laboratory and the panel experiments. 

\begin{table}[!t]
\centering
\caption{Structure of the experiment.}
\label{table:exp_task_structure}
\resizebox{\textwidth}{!}{%
\begin{tabular}{lccccc}
\toprule 
 & Treatment & Task~1 & Task~2 & Task~3 & Task~4 \\
\midrule 
Laboratory & Large volatility & Market-stock & Market-Ellsberg & Standard Ellsberg & Survey \\
 & Small volatility & Market-stock & Market-Ellsberg & Standard Ellsberg & Survey \\
\midrule 
Panel & Large volatility & Market-stock & Standard Ellsberg & --- & --- \\
 & Small volatility & Market-stock & Standard Ellsberg & --- & --- \\
\bottomrule 
\end{tabular}%
}
\end{table}

\paragraph{Market task.} 
In the market task, a subject chooses among portfolios of Arrow-Debreu commodities given state prices and a budget. 
The dataset we intend to collect in this task is of the form $(x^k, p^k)_{k=1}^K$, as introduced in Section~\ref{section:theoretical_background}. 
Experimental implementations of similar portfolio-choice problems were introduced by \cite{loomes1991evidence} and \cite{choi2007}, and later used in \cite{ahn2014}, \cite{choi2014who}, and \cite{hey2014}, among others. 

Uncertainty is represented through an underlying three-state {\em state space} $\Omega = \{ \omega_1, \omega_2, \omega_3 \}$. The probabilities of these states are unknown to the subjects. For each choice problem, there are two relevant {\em events}, denoted by $E_s$, $s = 1, 2$. This three-state, two-events, design is part of the methodological innovation in our paper; its purpose will be clear below. Events are sets of states, which are lumped together in ways that will be clear below. The events $E_1$ and $E_2$ are mutually exclusive (i.e., a partition of $\Omega$). Subjects are endowed with~100 (divisible) tokens in each round. An event-contingent payoff may be purchased at a price, which experimentally is captured through an ``exchange value.'' Exchange values, denoted $z_s$, $s = 1, 2$, relate tokens allocated to an event, and monetary outcomes. Given a pair of exchange values $(z_1, z_2)$, subjects are asked to decide on the allocation of tokens, $(a_1, a_2)$, between the two events. A subjects who decides on an allocation $(a_1, a_2)$ earns $x_s = a_s \times z_s$ if event $E_s$ occurs. The sets of exchange values $(z_1, z_2)$ used in the experiments are presented in Table~\ref{table:set_budgets} in the Online Appendix. 

An allocation $(a_1, a_2)$ of tokens is equivalent to buying a $x_s$ units of an Arrow-Debreu security that pays $\$1$ per unit if event $E_s$ holds, from a budget set satisfying $p_1 x_1 + p_2 x_2 = I$, where prices and income $(p_1, p_2, I)$ are determined by the token exchange values $(z_1, z_2)$ in the round.\footnote{We set $p_1 = 1$ (normalization) and $p_2 = z_1 / z_2$. Then, the income is given by $I = 100 \times z_1$.} 

Our design deviates from the other studies mentioned above by introducing a novel event structure. 
There are three underlying states of the world $(\omega_1, \omega_2, \omega_3)$ and we introduce two {\em types} of questions. In Type~1 questions, event~1 is $E_1^1 = \{\omega_1\}$ and event~2 is $E_2^1 = \{\omega_2, \omega_3\}$. In Type~2 questions,  event~1 is $E_1^2 = \{\omega_1, \omega_2\}$ and event~2 is $E_2^2 = \{\omega_3\}$. See Figure~\ref{fig:design_event_structure} for an illustration. This event structure requires SEU decision makers to behave consistently not only within each type of questions but also across two types of questions.\footnote{\citeapos{hey2014} design is the closest to ours. In their experiment, uncertainty was generated by the colors of balls in a Bingo Blower, and subjects were asked to make~76 allocation decisions in two different types. In the first type of problems, subjects were asked to allocate between two of the colors. In the second type, they were asked to allocate between one of the colors and the other two. Note that the motivation of \cite{hey2014} is a parametric estimation of leading models of ambiguity aversion. We test SEU and its generalization nonparametrically.} 

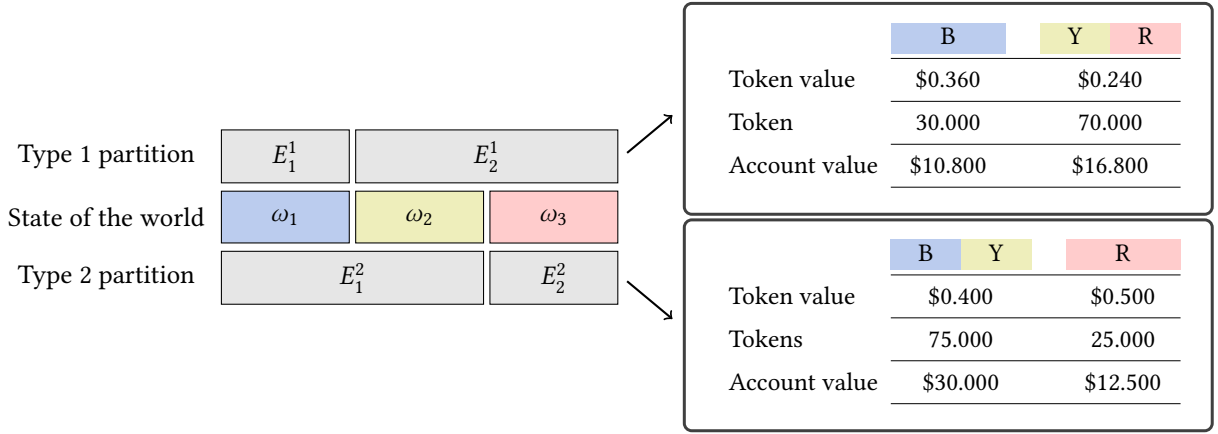
\begin{figure}[!t]
\centering 
\begin{minipage}[c]{0.54\textwidth}
\begin{tikzpicture}[scale=0.85, every node/.style={scale=0.85},
node distance=2.1cm,
textbox/.style={rectangle, minimum width=2cm, minimum height=0.75cm, inner sep=8pt},
state1/.style={rectangle, minimum width=2cm, minimum height=0.75cm, inner sep=8pt, text centered, draw=black, fill=myPaleBlue},
state2/.style={rectangle, minimum width=2cm, minimum height=0.75cm, inner sep=8pt, text centered, draw=black, fill=myPaleYellow},
state3/.style={rectangle, minimum width=2cm, minimum height=0.75cm, inner sep=8pt, text centered, draw=black, fill=myPaleRed},
event1/.style={rectangle, minimum width=2cm, minimum height=0.75cm, inner sep=5pt, text centered, draw=black, fill=gray!20},
event2/.style={rectangle, minimum width=4.1cm, minimum height=0.75cm, inner sep=5pt, text centered, draw=black, fill=gray!20},
]
\node (s1) [state1, align=left] {$\omega_1$};
\node (s2) [state2, right of=s1] {$\omega_2$};
\node (s3) [state3, right of=s2] {$\omega_3$};
\node (e11) [event1, above of=s1, yshift=-1.15cm] {$E_1^1$};
\node (e12) [event2, above of=s3, xshift=-1.05cm, yshift=-1.15cm] {$E_2^1$};
\node (a00) [right of=s3, xshift=-1.2cm, yshift=0.8cm] {$\phantom{1}$};
\node (a01) [right of=s3, xshift=0cm, yshift=1.8cm] {$\phantom{1}$};
\draw [thick,->] (a00) --  (a01);
\node (e21) [event2, below of=s1, xshift=1.05cm, yshift=1.15cm] {$E_1^2$};
\node (e22) [event1, below of=s3, yshift=1.15cm] {$E_2^2$};
\node (b00) [right of=s3, xshift=-1.2cm, yshift=-0.8cm] {$\phantom{1}$};
\node (b01) [right of=s3, xshift=0cm, yshift=-1.8cm] {$\phantom{1}$};
\draw [thick,->] (b00) --  (b01);
\node (text0) [textbox, left of=s1, xshift=-0.7cm] {State of the world};
\node (text1) [textbox, above of=text0, xshift=-0cm, yshift=-1.15cm] {Type~1 partition};
\node (text2) [textbox, below of=text0, xshift=-0cm, yshift=1.15cm] {Type~2 partition};
\end{tikzpicture}
\end{minipage}
\begin{minipage}[c]{0.45\textwidth}
\newcolumntype{C}{>{\centering\arraybackslash}p{0.1\linewidth}}
\newcolumntype{F}{>{\centering\arraybackslash}p{0.2\linewidth}}
\centering 
\resizebox{0.95\textwidth}{!}{%
\tcbox[colback=white]{
\begin{tabular}{l F p{0.02\linewidth} C C}
 & \cellcolor{myPaleBlue} B & & \cellcolor{myPaleYellow} Y & \cellcolor{myPaleRed} R \\
\cmidrule{2-5}
Token value & \multicolumn{1}{c}{\$0.360} & & \multicolumn{2}{c}{\$0.240} \\
\cmidrule{2-5}
Token & \multicolumn{1}{c}{30.000} & & \multicolumn{2}{c}{70.000} \\
\cmidrule{2-5}
Account value & \multicolumn{1}{c}{\$10.800} & & \multicolumn{2}{c}{\$16.800} \\
\cmidrule{2-5} 
\end{tabular}
}}
\resizebox{0.95\textwidth}{!}{%
\tcbox[colback=white]{
\begin{tabular}{l C C p{0.02\linewidth} F}
 & \cellcolor{myPaleBlue} B & \cellcolor{myPaleYellow} Y & & \cellcolor{myPaleRed} R \\
\cmidrule{2-5}
Token value & \multicolumn{2}{c}{\$0.400} & & \multicolumn{1}{c}{\$0.500} \\
\cmidrule{2-5}
Tokens & \multicolumn{2}{c}{75.000} & & \multicolumn{1}{c}{25.000} \\
\cmidrule{2-5}
Account value & \multicolumn{2}{c}{\$30.000} & & \multicolumn{1}{c}{\$12.500} \\
\cmidrule{2-5}
\end{tabular}
}}
\end{minipage}
\caption{(Left) Event structure in two types of questions. (Right) Illustration of the allocation table for a type~1 question (top) and a type~2 question (bottom).} 
\label{fig:design_event_structure}
\end{figure}

The design allows us to examine a very basic aspect of SEU rationality: monotonicity of probability. The monotonicity follows from the fact that SEU-rational agent should consider event $E_1^2 = \{ \omega_1, \omega_2 \}$ is (weakly) more likely than event $E_1^1 = \{ \omega_1 \}$ and, hence, the agent should allocate more tokens on event $E_1^2$ than on event $E_1^1$ if the prices and income are the same. We term this property {\em event monotonicity}. In the experiment, we introduced two consecutive questions that have the same budget set, but with different event structures, to test for event monotonicity. Note that these two questions are asked consecutively, meaning that a severe violation of event monotonicity can be attributed to a lack of understanding of the task or inattention, rather than limited memory. 

Subjects in the experiment make decisions through a computer interface. The {\em allocation table} on the screen contains all the information subjects need to make their decisions in each question; see right panels in Figure~\ref{fig:design_event_structure}. The allocation table displays exchange values $(z_1, z_2)$ for the current question, their current allocation of tokens $(a_1, a_2)$, and implied monetary value of each account, referred to as the ``account value,'' $(a_1 \times z_1, a_2 \times z_2)$. 
Subjects can allocate tokens between two events using a slider at the bottom of the screen; every change in allocation is instantaneously reflected in the allocation table.\footnote{Tokens are divisible (the slider moves in the increment of $0.01$). This ensures that the point on the budget line which equalizes the payouts in the two events (i.e., on the 45-degree line) is technically feasible.}

An important feature of our design is that we implement the task under two different sources of uncertainty. Subjects face two versions of the market task, as we change the source of uncertainty. In the first version, called ``market-Ellsberg,'' uncertainty is generated with an Ellsberg urn. In the second version, termed ``market-stock,'' uncertainty is generated through a stochastic process that resembles the uncertain price of a financial asset, or a market index. The market-Ellsberg version follows \cite{ellsberg1961}, and the empirical literature on ambiguity aversion \citep{trautmann2015ambiguity}. Subjects are presented with a bag containing~30 red, yellow, and blue chips, but they are not told anything about the composition of the bag. 
The three states of the world are then defined by the color of a chip drawn from the bag: state~1 ($\omega_1$) corresponds to drawing a blue chip, state~2 ($\omega_2$) corresponds to drawing a yellow chip, and state~3 ($\omega_3$) corresponds to drawing a red chip.  

In the market-stock task, uncertainty is generated through the realization of simulated stock prices. 
Subjects are presented with a history of stock prices, as in Figure~\ref{fig:design_path_and_budgets}, panel~A.\footnote{We used a Geometric Brownian Motion to simulate~100 stock price paths that share the common starting price and the time horizon. After visually inspecting the pattern of each price path, we handpicked~28 paths and then asked workers on Amazon Mechanical Turk what they believed the future price of each path would be. The elicited belief distributions were then averaged across subjects. Some price paths, especially those with clear upward or downward trends, tend to be associated with skewed elicited belief distributions. Others have more symmetric distributions. We thus selected two relatively ``neutral'' ones from the latter set for the main experiment. See Online Appendix~\ref{appendix:design_detail_price_path}.}  
The chart shows the evolution of a stock price for $300$ periods; the next $200$ periods are unknown, and left blank. Subject are told that prices are determined through a model used in financial economics to approximate real world stock prices. They are told that the chart represents the realized stock price up to period $300$, and that the remaining periods will be determined according to the same model from financial economics. Let the price at period $300$ be the ``starting value'' and the price at period $500$ be the ``target value.'' 
We define three states, given some threshold $R \in (0, 1)$: $\omega_1 = (R, +\infty)$, in which the target value rises by more than $100 R \%$ compared to the starting value (see the blue region in the figure), $\omega_2 = [-R, R]$, in which the price varies by at most $100 R\%$ between the starting value and the target value (the yellow region in the figure), and $\omega_3 = [-1, -R)$, in which the target value falls by more than $100 R\%$ compared to the starting value (the red region in Figure~\ref{fig:design_path_and_budgets}, panel~A). 

\begin{figure}[!t]
\centering 
\includegraphics[width=0.9\textwidth]{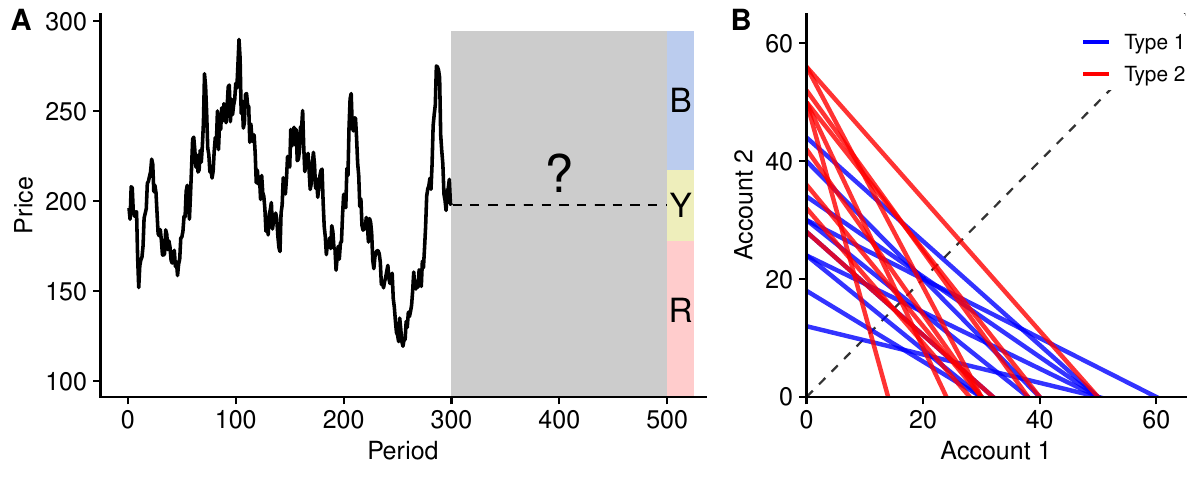}
\caption{(A) Source of uncertainty in the market-stock task. (B) Set of 20 budgets.}
\label{fig:design_path_and_budgets}
\end{figure}

We chose token exchange values $(z_1, z_2)$ for each question to increase the power of our tests. After running several choice simulations to calculate the power of our tests, we select~20 budgets (10 for type~1, 10 for type~2) shown in Figure~\ref{fig:design_path_and_budgets}, panel~B (and Table~\ref{table:set_budgets} in the Online Appendix). Note that event~1 is ``more likely'' in type~2 decision problems since $\{ \omega_1 \} = E_1^1 \subseteq E_1^2 = \{ \omega_1, \omega_2 \}$. In constructing budget sets, we made assets in account~1 relatively more expensive than assets in account~2 in type~2 questions. This is reflected in the steeper slopes for the budget lines presented in Figure~\ref{fig:design_path_and_budgets}, panel~B. 

Several remarks about our experimental design are in order. First, we allowed subjects to make fractional allocations of tokens (up to the third decimal points) between accounts.\footnote{The allocation table (Figure~\ref{fig:design_event_structure}) also displayed account values up to the third decimal place, but subjects were informed that the amount below one cent would be rounded up.} Our fractional allocation design sought to mimic choices from a continuous budget line as much as possible, as in the theoretical models we try to test. Second, we asked two types of allocation decisions. This makes our task demanding for subjects, but it creates a powerful environment for our revealed preference analysis.

\paragraph{Ellsberg two-urn choice task.} 
In addition to the market task described above, we presented our subjects with a standard two-urn version of \citeapos{ellsberg1961} binary choice question. The purpose of including this standard task is to compare the behavior of subjects in the different designs (allocation vs. choice). 
Using this comparison, we can investigate how traditional evaluations of ambiguity aversion via binary choices relate to the conclusions drawn from allocation decisions in a market setting. 

Subjects confront two bags: bag~A and bag~B, each of which contains~20 chips. They receive the following information: Bag~A contains~10 orange chips and~10 green chips. Bag~B contains~20 chips, each of which is either orange or green. The number of chips of each color in bag~B is unknown to them, so there can be anywhere from~0 to~20 orange chips, and anywhere from~0 to~20 green chips, as long as the total number of orange and green chips sums to~20. 

Subjects were offered choices between bets on the color of the chip that would be drawn at the end of the experiment. 
Before choosing between bets, subjects were first asked to choose a fixed color (orange or green; called ``Your Color'') for which they would be paid if they chose certain bets.  They were then asked three questions.\footnote{We adopted the three-question setting akin to \cite{epstein2019ambiguous}, as a way of identifying strict ambiguity preferences. The typical Ellsberg-style experiment would ask only one question, namely the second one.}

The first question asks to choose between a bet that pays \$$X+b$ if the color of the ball drawn from bag~A is ``Your Color'' (and nothing otherwise), and a bet that pays \$$X$ if the color of a ball drawn from bag~B is ``Your Color'' (and nothing otherwise). 
Similarly, the second question asks to choose between a bet that pays \$$X$ if the color of the ball drawn from bag~A is ``Your Color,'' and a bet that pays \$$X$ if the color of a ball drawn from bag~B is ``Your Color''. 
Finally, the third question asks to choose between a bet that pays \$$X$ if the color of the ball drawn from bag~A is ``Your Color'' and a bet that pays \$$X+b$ if the color of a ball drawn from bag~B is ``Your Color''. The payoff $X$ and the bonus $b$ depended on the platform: $(X, b) = (10, 0.5)$ in our laboratory study and $(X, b) = (100, 5)$ in the panel. In our laboratory experiments, the content of bag~B had already been determined at the beginning of the experiment by an assistant. The timing is important to ensure that there is no incentive to hedge \citep{baillon2015ris,epstein2019ambiguous,saito2015preference}. 
The subjects were allowed to inspect the content of each bag after completing the experiment. 

\paragraph{Post-experiment survey.} 
In the laboratory experiment, subjects were asked to fill out a short survey asking for their age, gender, major in college, the three-item cognitive reflection test \citep[CRT;][]{frederick2005cognitive}, and strategies they employed in the allocation tasks if any (see Online Appendix~\ref{appendix:design_detail_post-exp_survey}). 
In the panel study, before exiting the survey module, subjects answered how interesting or uninteresting the survey was and they were also asked to leave any comments if they wished. This is a standard questionnaire that the Understanding America Study (UAS) asks of all its panelist households. The demographic characteristics of the households were already recorded in the previous survey run by the UAS. We could also access datasets from previous surveys that other researchers conducted on the UAS to create additional cognitive and behavioral measures.

\subsection{Implementation}
\label{section:design_implementation}

\paragraph{Interface.} 
We prepared an experimental interface that runs on a web browser. In the panel study, our interface was embedded in the survey page of the UAS. Therefore, subjects in both the laboratory and panel experiments interacted with the exact same interface.

\paragraph{Recruiting and sampling.} 
Subjects for our laboratory study were recruited from a database of undergraduate students enrolled in the University of California at Irvine. The recruiting methodology for the UAS survey is described in detail in the survey website.\footnote{\url{https://uasdata.usc.edu/index.php}.} Within the UAS sample, we drew a stratified random sub-sample with the aim of obtaining a balanced sample of subjects in different age cohorts. 
In particular, we recruited subjects in three age groups: from~20 to~39, from~40 to~59, and~60 and above, randomly from the pool of survey participants.\footnote{The choice of age as the stratification variable is based on the result in \cite{echenique2018approximate}, which shows that the degree of conformity to objective expected utility theory is negatively correlated (younger subjects are closer to the theory than elder subjects) with risk in a similar portfolio choice under known probabilities implemented on several nationally-representative panels.} 

\paragraph{Treatments.} 
In the market-stock task, we prepared two simulated paths of stock prices with different degree of volatility, so that one path seems relatively more volatile than the other, while keeping the general trend in prices as similar as possible between the two paths. Since the perception of volatility is only relative, we embed each path in the common market ``context'' as shown in Figure~\ref{fig:design_stock_var}. Here, the bold black lines indicate the stock under consideration, and the other lines in the background are the same in the two treatments. 

\begin{figure}[t]
\centering 
\includegraphics[width=\textwidth]{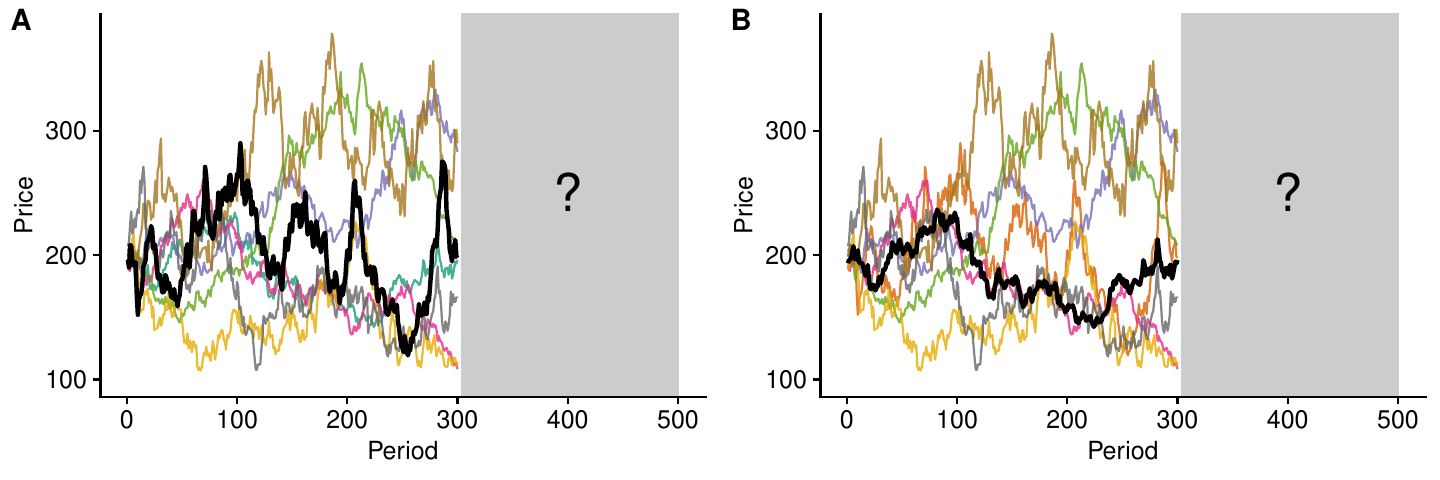}
\caption{Context of market information. (A) Large volatility (B) Small volatility. {\em Notes}: One of these two figures is included in the instructions, depending on the treatment.}
\label{fig:design_stock_var}
\end{figure}

Our treatment variation is the perceived volatility of simulated stock prices (we call the two treatments Large volatility and Small volatility). The subjects were randomly assigned to either a large volatility condition (Figure~\ref{fig:design_stock_var}, panel~A), or a small volatility condition (panel~B).\footnote{In the laboratory study, random assignment to one of the two treatments was done at the session level, meaning that all subjects in the same session were shown the same price path.} 
The instructions for the market-stock task included one of the two charts of Figure~\ref{fig:design_stock_var}, depending on the treatment (see Online Appendix~\ref{appendix:instruction}).

\paragraph{Order of the tasks.} 
Subjects in the laboratory study performed three tasks in the following order: market-stock, market-Ellsberg, and standard Ellsberg. Subjects in the Panel study performed two tasks, market-stock and standard Ellsberg, but due to time constraints we did not implement market-Ellsberg in the panel. Table~\ref{table:exp_task_structure}, which has a summary of the structure of the experiments and treatments, lists the order in which the tasks were completed.

\paragraph{Incentives.} 
In the laboratory study, we used the standard incentive structure of paying-one-choice-at-random. Subjects received a sealed envelope when they entered the laboratory room. The envelope contained a piece of paper, on which two numbers were written. The first number indicated the task number, and the second number indicated the question number in that task. Both numbers were randomly selected beforehand. At the end of the experiment, subjects brought the envelope to the experimenter's computer station. If the selected task was the market task with stock price information, the simulated ``future'' price path was presented on the screen. If, on the other hand, the selected task involved the Ellsberg urn, the subject was asked to pick one chip from the relevant bag. All subjects received a \$7 showup fee. 

In the panel study, four subjects were randomly selected to receive the bonus payment based on their choices in the experiments. Unlike the laboratory study, the bonus payment for these subjects was determined by a randomization implemented by the computer program, but payments were of a much larger scale. All subjects received a participation fee of \$10 by completing the entire survey.

\section{Results}
\label{section:results} 

We present results from the laboratory and panel experiments separately, but our data analysis follows the same structure. 
We shall first discuss the basic patterns of subjects' choices, and then proceed to the revealed-preference tests that were discussed in Section~\ref{section:theoretical_background} above. 
More precisely, we apply the ``exact'' revealed-preference tests for general utility maximization (GARP; Fact~\ref{fact:garp}), SEU (Fact~\ref{fact:seu_rp_test}), MEU (Fact~\ref{fact:meu_rp_test}), as well as the necessary condition for probabilistic sophistication (Fact~\ref{fact:epstein_ps}). 
After observing that most of the subjects' datasets fail the tests, we quantify the severity of violations by CCEI (Fact~\ref{fact:ccei}) and minimal~$e$ (Facts~\ref{fact:seu_approx_test} and~\ref{fact:minimal_e}). 

We also discuss the relationship between the degree of consistency with the models and subjects' demographic characteristics. Finally, we look at the subjects' distance from SEU rationality and their attitude toward ambiguity measured with a simple binary choice task commonly employed in the ambiguity literature. 

All statistical tests reported in this section are two-sided unless otherwise noted.

\subsection{Results from the Laboratory} 
\label{section:results_lab}

We conducted seven sessions at the Experimental Social Science Laboratory of the University of California, Irvine. 
A total of~127 subjects (62 in the small volatility treatment and 65 in the large volatility treatment; $\text{mean age} = 20.16$, $\text{SD} = 1.58$; 35\% male) participated in the study.\footnote{Three additional subjects participated in the study, but we excluded their data from the analysis. One subject accidentally participated in two sessions (thus, the data from the second appearance was excluded). Two subjects spent a significantly longer time on each decision than anyone else. We distributed the instructions for each task of the experiment just before they were to perform that task, meaning that each subject would have to wait until all the other subjects in the session completed the task. We had to ``nudge'' two extremely slow subjects to make decisions more quickly, and hence eliminated their choices from our data.} 
Each session lasted about an hour, and subjects earned on average \$21.3 (including a \$7 showup fee; $\text{SD} = 9.21$).

\paragraph{Allocation decisions in the market tasks.} 
Subjects faced budgets in random order, with one exception, which is related to event monotonicity discussed in Section~\ref{section:design_tasks}. We fixed two consecutive questions, questions~\#5 and~\#6, that had the same budget set, but with different event structures. These were the only questions that were not presented in random order. The purpose of having these questions in fixed order was to check that subjects had a basic understanding of the task. The 5th question was presented as a type~1 question while the 6th question was presented as a type~2 question (recall the terminology from Section~\ref{section:design}). 
Since the event upon which the first account pays off is a larger set in question~\#6 than in question~\#5 ($\{ \omega_1 \} = E_1^1 \subseteq E_1^2 = \{ \omega_1, \omega_2 \}$ by construction), while prices and budget remain the same, subjects should allocate more to the first account in question~\#6 than in question~\#5. 

More than 70\% of subjects satisfy event monotonicity, and this number increases to 90\% if we allow for a small margin of error of five tokens. Moreover, choices are clustered around the allocation which equalizes payout from the two accounts, which may reflect subjects' ambiguity aversion. See Figure~\ref{fig:uci_monotonicity} in the Online Appendix. 

The empirical content of expected utility is captured in part by a negative relation between state prices and allocations as \cite{echenique2018approximate} discuss in depth: a property that can be thought of as ``downward-sloping demand.'' 
We thus look at how subjects' choices responded to price variability between budgets; in particular, we focus on the relation between log price ratios, $\log{(p_2 / p_1)}$, and allocation shares, $x_2/(x_1+x_2)$, pooling choices from all subjects. 
Figure~\ref{fig:uci_dsd_aggregate} shows a negative relation between these two quantities, confirming the downward-sloping demand property at the aggregate level. It holds for both types of questions (type~1 and type~2 event partitions) and in both tasks (market-stock and market-Ellsberg). 

\begin{figure}[!t]
\centering 
\includegraphics[width=\textwidth]{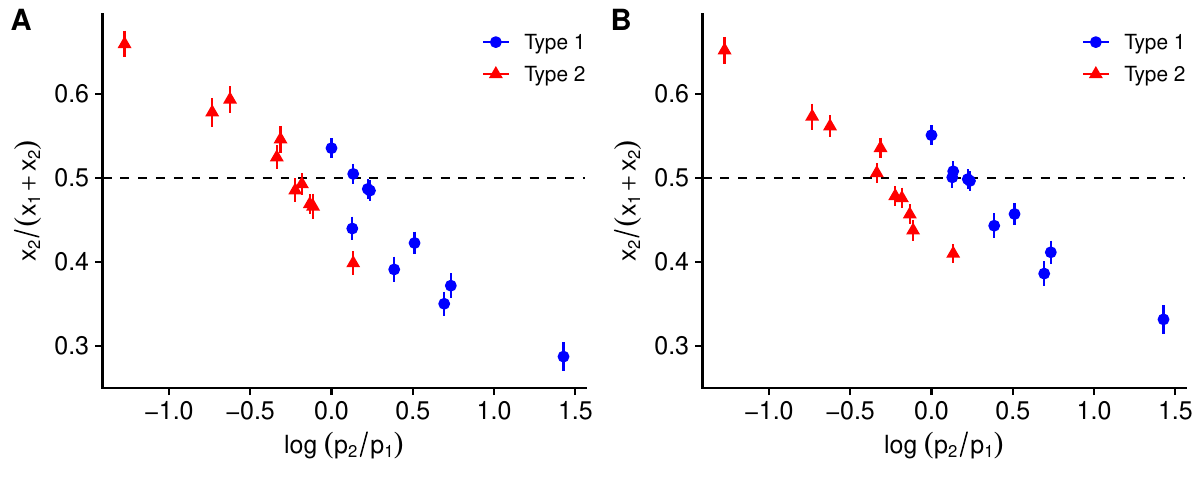}
\caption{Downward-sloping demand at the aggregate level. (A) Market-stock task. (B) Market-Ellsberg task. {\em Notes}: Each point represents mean $x_2/(x_1+x_2)$ at each $\log{(p_2 / p_1)}$ and bars indicate standard error of means.}
\label{fig:uci_dsd_aggregate}
\end{figure}

We also quantify the degree of compliance with the downward-sloping demand property at the individual level by calculating the  correlation $\rho^\text{dsd}$ between $\log{(p_2 / p_1)}$ and $x_2/(x_1+x_2)$.\footnote{We first calculate (Spearman's) correlation coefficient $\rho_t$ for each type ($t = 1, 2$) of questions. To obtain the ``average'' correlation coefficient $\rho^\text{dsd}$, we first convert correlation coefficients to $z$-values by Fisher's transformation, take the average, and convert it back to a correlation coefficient. This procedure is summarized as $\rho^\text{dsd} = \text{tanh} \left( \sum_{t=1}^2 \text{tanh}^{-1} (\rho_t) / 2 \right)$.} 
A significant majority of the subjects (92.1\% in the market-stock task and 88.2\% in the market-Ellsberg task) made choices that responded to prices negatively ($\rho^\text{dsd} < 0$; Figure~\ref{fig:uci_dsd_individual} in the Online Appendix). 

Individual-level data exhibit heterogeneous choice patterns. Figure~\ref{fig:uci_sample_allocation} presents the relationship between $\log{(p_2 / p_1)}$ and $x_2/(x_1+x_2)$ for five selected subjects. As in prior studies \citep[e.g.,][]{ahn2014,choi2007}, there are subjects who responded to price changes smoothly (panels~A and~B), partially or fully ``hedged'' uncertainty by choosing bundles on or close to the 45-degree line (panels~C and~D), and chose bundles all over the space (panel~E). 

\begin{figure}[!t]
\centering 
\includegraphics[width=\textwidth]{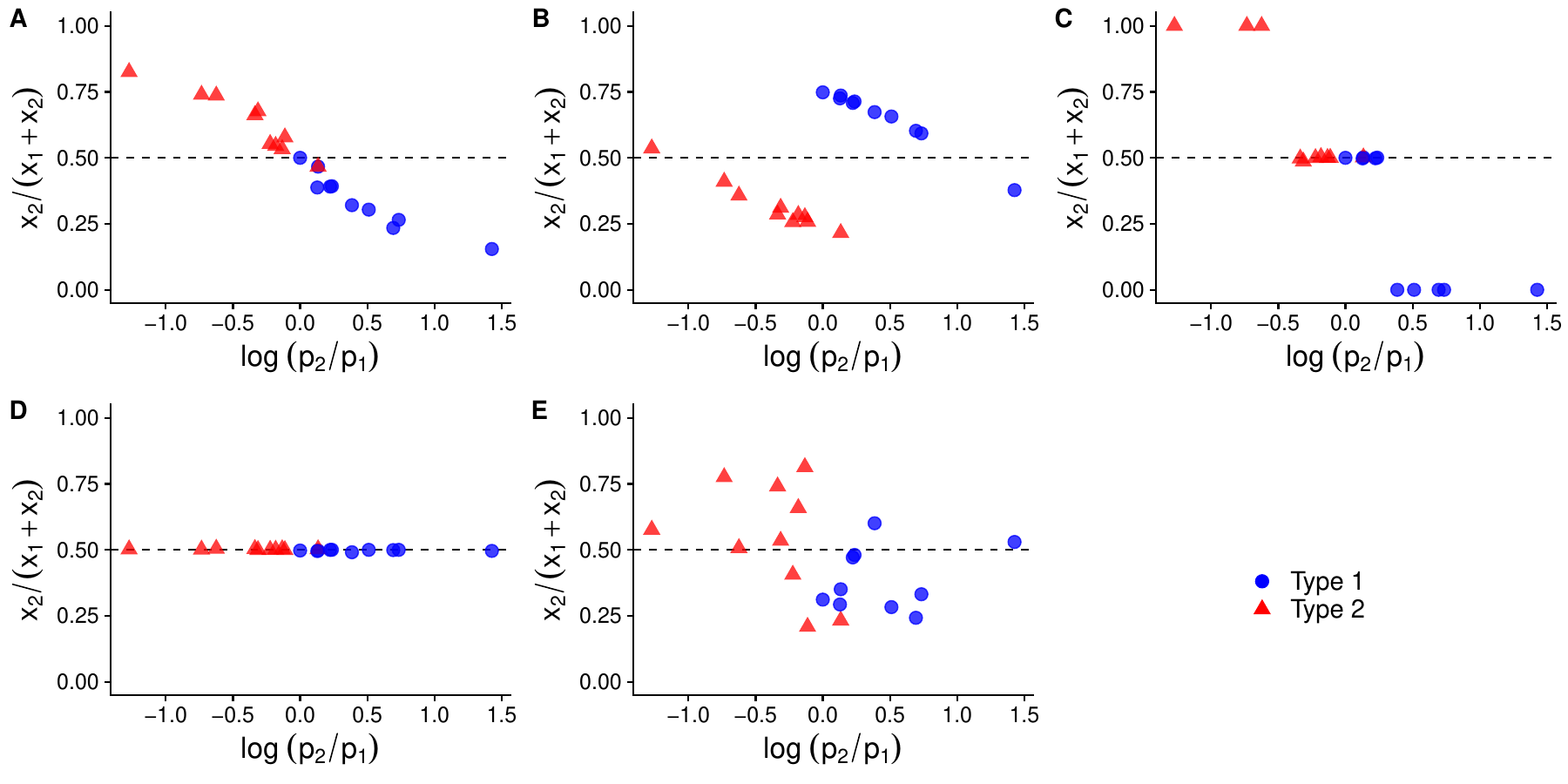}
\caption{The relationship between the log-price ratio $\log (p_2/p_1)$ and the allocation share $x_2/(x_1+x_2)$ for selected subjects.}
\label{fig:uci_sample_allocation}
\end{figure}

We proceed to ask the question: Did the subjects in our experiment make choices that are consistent with basic economic models of utility maximization, including the standard subjective expected utility (SEU) theory?

\paragraph{Revealed-preference tests.} 
We implement nonparametric, revealed-preference tests on each individual subject's choice data. These tests include: GARP, probabilistic sophistication \citep[hereafter PS;][]{machina1992}, SEU \citep[based on and extended from][]{echenique2015savage}, and MEU \citep[based on and extended from][]{chambers2016}. As discussed in Section~\ref{section:theoretical_background}, we can test whether a given dataset is consistent with SEU or MEU by solving the linear program implied by the axiom that characterizes the model. We say that a dataset {\em passes the test} if there is a solution to the problem. 

Recall that, depending on how we partition the state space, we have two types of decision problems. For GARP and PS, we first test each type of problem separately and then combine the results. We say that a subject's data satisfies GARP if it passes the GARP test for both types. Similarly, we say that a subject's data is not inconsistent with PS if it is not inconsistent with PS in the sense of \citeapos{epstein2000probabilities} condition for both types, and also satisfies event monotonicity. For SEU and MEU, we implement the test directly on the data combining the two types of problems. It is, at first glance, not obvious that this can be done. That the two types of problems can be combined, effectively testing the three-state design using bets on two events at a time, is one of the methodological contribution of our paper: see Online Appendix~\ref{appendix:implementation} for details.

\begin{table}[t]
\centering 
\caption{Pass rates.}
\label{table:uci_pass_rates}
\begin{tabular}{l cccc}
\toprule 
 & GARP & PS & SEU & MEU  \\
\midrule 
Market-stock & 0.5827 & 0.4803 & 0.0000 & 0.0000 \\ 
Market-Ellsberg & 0.6693 & 0.6220 & 0.0157 & 0.0157 \\ 
\bottomrule 
\end{tabular}
\caption*{\footnotesize {\em Notes}: $N = 127$. Since \citeapos{epstein2000probabilities} condition is only necessary for probabilistic sophistication, the numbers reported here capture the fraction of the subjects who are {\em not inconsistent} with probabilistic sophistication. Pass rates for each type of questions separately are presented in Table~\ref{table:uci_pass_rates_all} in the Online Appendix. The power of these revealed-preference tests are discussed in Online Appendix~\ref{appendix:power_calc}.} 
\end{table}

Table~\ref{table:uci_pass_rates} presents the {\em pass rate} of each test. That is, the fraction of subjects (out of~127) who passed each test. We find that a majority of subjects satisfy GARP, meaning that their choices are consistent with the maximization of {\em some} utility function. On the contrary, subjects clearly did not make choices that are consistent with SEU. The SEU pass rates are below~0.1, and not a single agent passed the SEU test in the market-stock task.\footnote{Along similar lines, \cite{echenique2018approximate} find that only five out of more than 3,000 participants in three online surveys \citep{carvalho2016poverty, carvalho2019complexity, choi2014who} make choices that are consistent with {\em objective} expected utility theory.}

Perhaps surprisingly, allowing for multiple priors via MEU does not change the result. Pass rates for MEU are the same as for SEU, implying that {\em MEU does not capture violations of SEU in our experiment.} These findings are consistent with data from the experiment in \cite{hey2014}: see \cite{chambers2016}, which performs the same kind of analysis as we do in the present paper for \citeapos{hey2014} data.

Finally, we look at PS to investigate whether observed behavior is (in)consistent with preferences being based on probabilities, using the necessary condition proposed by \cite{epstein2000probabilities} and checking event monotonicity in questions~\#5 and~\#6. We find that 48\% of subjects in the market-stock task and 62\% of subjects in the market-Ellsberg task are not inconsistent with PS. 

Testing for the exact compliance with the model may be too demanding. It is possible that small mistakes could account for a subjects' deviation from SEU or MEU. We now turn to quantifying the degree of compliance with the models, using CCEI and minimal~$e$ as described in Section~\ref{section:theoretical_background}.

\paragraph{Distance measures.}
The Critical Cost Efficiency Index (CCEI) is a measure of the degree of compliance with GARP that is widely used in the recent experimental literature \citep[e.g.,][]{choi2014who}. In our laboratory data, the average CCEI is above~0.98, which implies that on average budget lines needed to be shifted down by about two percent to eliminate a subject's GARP violations (Table~\ref{table:uci_distance}). The CCEI scores reported in Table~\ref{table:uci_distance} are substantially higher than those reported in \cite{choi2014who}, but close to the CCEI scores in \cite{choi2007}. This would seem to indicate a higher level of compliance with utility maximizing behavior than in the experiment by \cite{choi2014who}, and about the same as the experiment by \cite{choi2007}. 
Note, however, that there are several substantial differences in the settings and the designs between the two aforementioned studies and ours. We had two types of events (other studies typically have one fixed event structure), each type involved~10 budgets (i.e., total~20 budgets) while the cited studies had~25 and~50 budgets respectively. Most importantly, objective probabilities were not provided in our experiment. 

\begin{table}[t]
\centering 
\caption{Distance measures.}
\label{table:uci_distance}
\resizebox{\textwidth}{!}{%
\begin{tabular}{l ccc ccc ccc}
\toprule 
 & \multicolumn{3}{c}{CCEI} & \multicolumn{3}{c}{$e_*$ (SEU)} & \multicolumn{3}{c}{$e_*$ (MEU)} \\
 \cmidrule(lr){2-4} \cmidrule(lr){5-7} \cmidrule(lr){8-10}  
Task & Mean & Median & SD & Mean & Median & SD & Mean & Median & SD \\
\midrule 
Market-stock & 0.9805 & 1.0000 & 0.0450 & 1.5066 & 1.2791 & 0.9169 & 1.4949 & 1.2588 & 0.9224 \\ 
Market-Ellsberg & 0.9892 & 1.0000 & 0.0317 & 1.3094 & 1.0000 & 0.9108 & 1.3038 & 1.0000 & 0.9105 \\ 
\bottomrule 
\end{tabular}%
}
\end{table}

We use $e_*$ \citep[minimal~$e$; proposed by][]{echenique2018approximate} as a measure of the degree of deviation from SEU. Remember that the number $e_*$ is a perturbation to the model that allows SEU to accommodate the observed choices. It is zero when data are consistent with SEU, meaning that no perturbation is needed to rationalize the data by means of SEU, but takes a positive value if data violate SEU. The larger is $e_*$, the larger is the size of the perturbation needed to rationalize data by means of a perturbed version of SEU.

We find that $e_*$ in the market-stock task is significantly higher than in the market-Ellsberg task (paired-sample $t$-test; $t (126) = 2.635$, $p = 0.009$). See also Figure~\ref{fig:uci_e_comparison} panel~A. This finding suggests that subjects made choices that were closer to SEU when the source of information was an Ellsberg urn than when the source was a stock price, but the result has to be qualified because the order of the two market tasks was not counterbalanced. 

\begin{figure}[!t]
\centering 
\includegraphics[width=\textwidth]{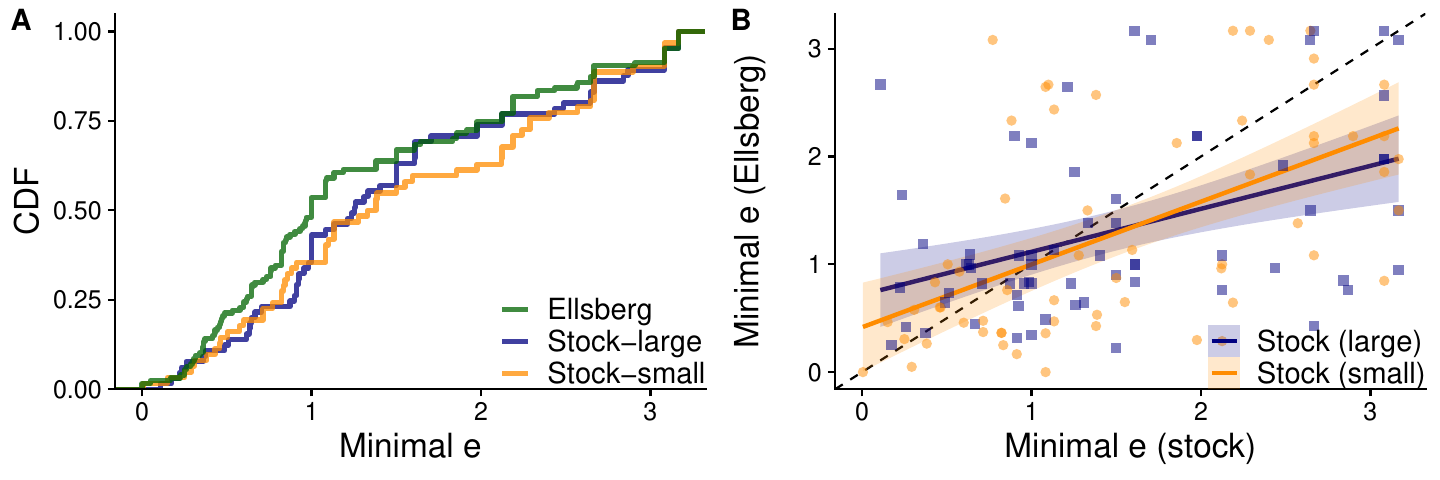}
\caption{$e_*$ from the market tasks. (A) Comparing $e_*$ from market tasks with different sources of uncertainty (Ellsberg urn, stock price with large volatility, and stock price with small volatility). (B) Correlation between $e_*$ from market-stock and market-Ellsberg tasks. Each dot represents a subject.}
\label{fig:uci_e_comparison}
\end{figure}

In the two market tasks, subjects faced the same set of~20 budgets in random order, with the exception of two budgets for which the order was fixed (see above). The choices made by about three-quarters of the subjects are positively correlated between the two tasks (Figure~\ref{fig:uci_allocation_corr_by_var} in the Online Appendix), and~36\% of those subjects exhibit statistically significant positive correlation (one-sided, at the~5\% significance level). 
This correlation is reflected in the degree of violation of SEU\textemdash{}Figure~\ref{fig:uci_e_comparison} panel~B shows that $e_*$ from the two tasks are highly correlated (Spearman's correlation coefficient: $r = 0.406$ for treatment Large, $r = 0.583$ for treatment Small). 

Table~\ref{table:uci_distance} also shows that the data is not much closer to MEU than to SEU. The MEU model has little added explanatory power beyond SEU. In other words, the way in which subjects' choices deviate from SEU is not captured by the MEU model. In  MEU, agents' beliefs can depend on choices, as in the perturbation of the SEU model behind our calculation of $e_*$. However, in MEU, the dependency is specific: beliefs are chosen so as to minimize expected utility. Our finding suggests that subjects' beliefs may depend on choices, but are not determined pessimistically. Therefore, the MEU model cannot explain the subjects' choices better than SEU; the size of perturbation required for MEU is not much lower than that for SEU.

We do not observe gender differences on $e_*$ but there is an effect of cognitive ability as measured with the three-item Cognitive Reflection Test \citep[CRT;][]{frederick2005cognitive}. Subjects who answered all three questions correctly exhibit lower $e_*$ than those who answered none of them correctly. This effect, however, is statistically significant only in the $e_*$ from the market-stock task (two-sample $t$-tests; Market-stock: $t(57)=1.50$, $p = 0.140$; Market-Ellsberg: $t(57)=3.24$, $p = 0.002$). See Figure~\ref{fig:uci_e_demographics} in the Online Appendix.

\paragraph{Ambiguity attitude.} 
Finally, we look at the relation between behavior in the market tasks and subjects' attitudes toward ambiguity, measured using a standard Ellsberg-paradox design. As explained in Section~\ref{section:design_tasks}, we asked three questions regarding choices between an ambiguous bet and a risky bet to identify subjects' attitude toward ambiguity. Figure~\ref{fig:uci_ellsberg_choice_freq} shows the frequency with which subjects preferred to bet on the risky urn, for each question. 

\begin{figure}[t]
\centering 
\includegraphics[width=\textwidth]{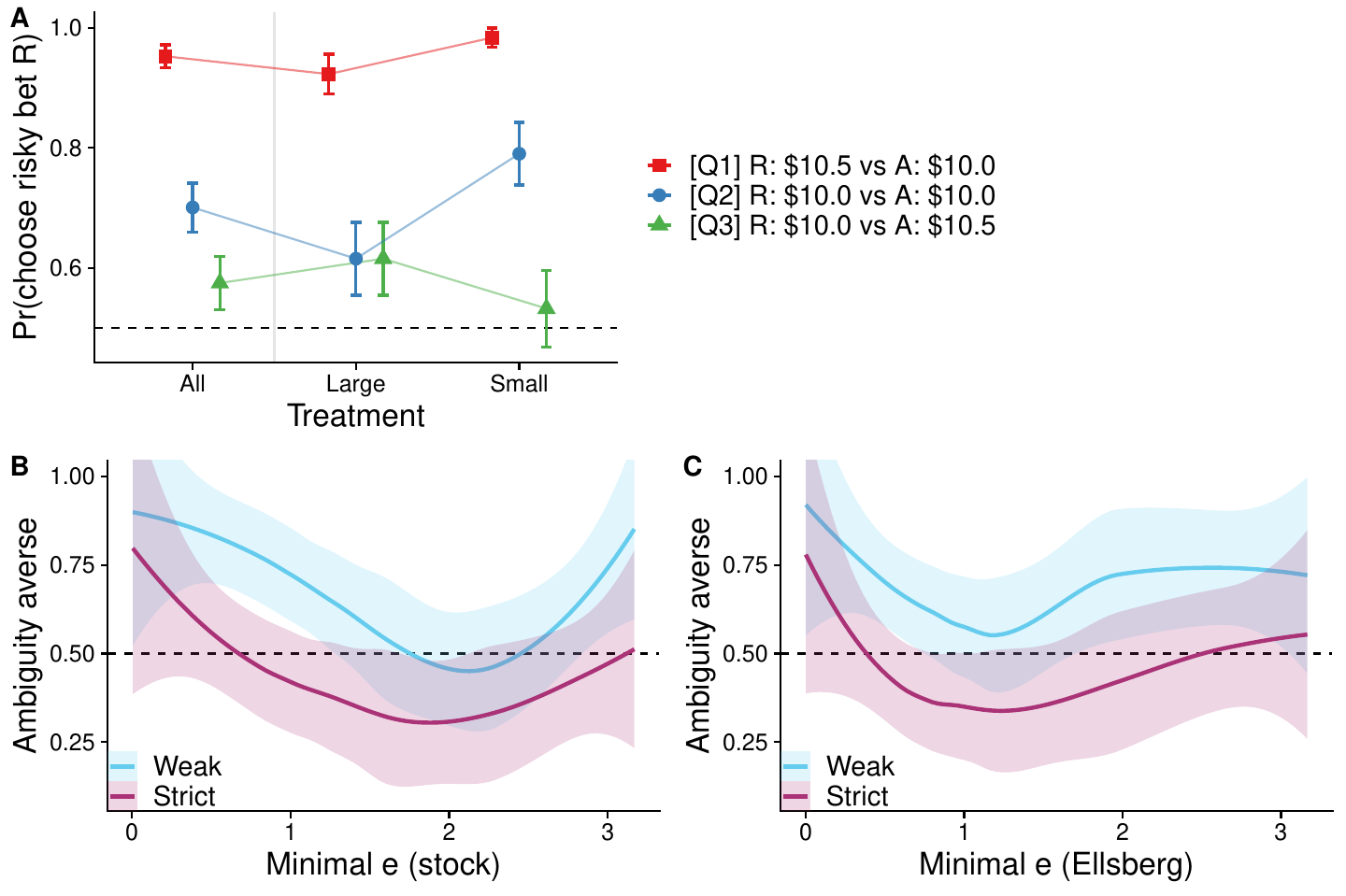}
\caption{(A) Frequency of choosing a risky bet in each question in the standard-Ellsberg task in the laboratory data. Bars indicate standard errors of means. (BC) LOESS curves relating $e_*$ and ambiguity attitude.}
\label{fig:uci_ellsberg_choice_freq}
\end{figure}

In the first question, the risky bet pays an additional \$0.5 in case of winning. This bonus made almost all (95.3\%) subjects choose the risky bet. The third question has instead a bonus for choosing the ambiguous bet, which then pays an additional \$0.5 in case of winning. A little more than half of the subjects (61.5\% in the Large treatment, 53.2\% in the Small treatment) preferred the risky bet, but the difference from 50\% (i.e., indifference at the aggregate level) is not significantly large ($z$-test for proportion; $p = 0.063$ in the Large treatment and $p = 0.612$ in the Small treatment). In the second question, which pays the equal winning prize in the two bets (as in many other Ellsberg-style studies), subjects in the Small treatment chose the risky bet more frequently than those in the Large treatment (61.5\% in the Large treatment and 73.0\% in the Small treatment; two-sample $z$-test for proportion, $p = 0.031$). 

We classify subjects as {\em weakly ambiguity averse} if they chose the risky bet, both in the first and in the second question (68.5\% of the subjects). Similarly, we classify subjects as {\em strictly ambiguity averse} if they chose the risky bet in all three questions (44.1\% of the subjects). In order to connect the deviation from SEU captured by $e_*$ and a measure of ambiguity attitude standard in the literature, we nonparametrically estimate how the probability of being classified as ambiguity averse depends on $e_*$. Figure~\ref{fig:uci_ellsberg_choice_freq}BC suggest a weak but quadratic relationship between these two. Ambiguity aversion is the leading explanation for violations of SEU, so our finding may seem counter-intuitive. One might instead expect a monotonic relation between $e_*$ and ambiguity aversion. It is, however, important to emphasize that $e_*$ captures {\em any} deviation from SEU. Not only those that could be traced to ambiguity aversion.

\subsection{Results from the Panel} 

A total of 764 subjects ($\text{mean age} = 50.26$, $\text{SD} = 15.45$; 50.4\% male) completed the study. 
The median survey length was 29.1 minutes. In addition to \$10 baseline payment for completing the survey, four randomly selected subjects received additional payment from one of the choices they made during the survey (average \$137.56). 

We tried to get subjects to do our experiment on a desktop or laptop computer, but many of them took it with their mobile devices---such as smartphones or tablets. These devices have screens that are smaller than desktop/laptop computers, which makes it quite difficult to understand our experiments, and perform the tasks we request them to complete. 
We thus analyze the data consisting of subjects who used desktop or laptop computer (66\%) as our ``core'' sample.  Table~\ref{table:uas_sociodemographic} provides distributions of individual sociodemographic characteristics in the entire sample as well as the core sample and the excluded sample (those who did not use desktop or laptop computers). It is evident that the type of device used is correlated with some of the demographic variables (age, education level, and income level; chi-squared tests in the last column in Table~\ref{table:uas_sociodemographic}). The sub-samples of subjects exhibited markedly different patterns of behavior as well (Online Appendix~\ref{appendix:additional_panel_sample_comparison}). 
Throughout the rest of the paper, we analyze data from the core sample.\footnote{Results from the same set of analyses on the entire subjects, or comparison across sub-samples, are available upon request.}

\begin{table}[!t]
\centering
\caption{Sociodemographic information.}
\label{table:uas_sociodemographic}
\resizebox{\textwidth}{!}{%
\begin{tabular}{l ccc r}
\toprule 
 & & \multicolumn{2}{c}{Device} \\
\cmidrule(l){3-4}
Variable & All & Desktop/laptop & Tablet/mobile phone & Test \\
\midrule 
\emph{Gender} \\
\hspace{3mm} Female & 0.496 & 0.471 & 0.544 & $\chi^2 (1) = 3.36$ \\ 
\hspace{3mm} Male & 0.504 & 0.529 & 0.456 & $p = 0.0669$ \\ 
\emph{Age group} \\
\hspace{3mm} 20-39 & 0.319 & 0.279 & 0.395 &  \\ 
\hspace{3mm} 40-59 & 0.353 & 0.345 & 0.369 & $\chi^2 (2) = 17.79$ \\ 
\hspace{3mm} 60- & 0.327 & 0.375 & 0.236 & $p = 0.0001$ \\ 
\emph{Education level} \\
\hspace{3mm} Less than high school & 0.258 & 0.190 & 0.388 \\
\hspace{3mm} Some college & 0.219 & 0.200 & 0.255 \\
\hspace{3mm} Assoc./professional degree & 0.187 & 0.200 & 0.163 & $\chi^2 (3) = 53.7$ \\ 
\hspace{3mm} College or post-graduate & 0.336 & 0.410 & 0.194 & $p < 0.0001$ \\ 
\emph{Household annual income} \\
\hspace{3mm} -- \$25k & 0.211 & 0.148 & 0.331 \\
\hspace{3mm} \phantom{-- }\$25k -- \$50k & 0.258 & 0.246 & 0.281 \\
\hspace{3mm} \phantom{-- }\$50k -- \$75k & 0.202 & 0.230 & 0.148 \\
\hspace{3mm} \phantom{-- }\$75k -- \$150k & 0.262 & 0.297 & 0.194 & $\chi^2 (4) = 43.97$ \\ 
\hspace{3mm} \phantom{-- }\$150k -- & 0.068 & 0.080 & 0.046 & $p < 0.0001$ \\ 
\emph{Occupation type} \\
\hspace{3mm} Full-time & 0.497 & 0.509 & 0.475 \\
\hspace{3mm} Part-time & 0.102 & 0.100 & 0.106 & $\chi^2 (2) = 0.78$ \\ 
\hspace{3mm} Not working & 0.401 & 0.391 & 0.418 & $p = 0.6759$ \\ 
\emph{Marital status} \\
\hspace{3mm} Married/live with partner & 0.690 & 0.713 & 0.646 & $\chi^2 (1) = 3.23$ \\ 
\hspace{3mm} Other & 0.310 & 0.287 & 0.354 & $p = 0.0724$ \\ 
\midrule 
\# of observations in the sample & 764 & 501 & 263 \\
\bottomrule
\end{tabular}%
}
\end{table}

The set of~20 budgets used in the market task is the 10-times scaled-up version of the one used in the laboratory (Figure~\ref{fig:design_path_and_budgets}, panel~B). This keeps the relative prices the same between two studies, making the distance measure $e_*$ comparable between data from the laboratory and the panel.

We start by checking event monotonicity, along the lines of our discussion for the laboratory experiment. Subjects' choices on questions~\#5 and~\#6 are informative about how attentive they are when they perform the tasks in our experiment. We find that about 60\% of subjects satisfy event monotonicity, and that this number jumps to 78\% if we allow for a  margin of error of five tokens (see Figure~\ref{fig:uas_monotonicity} in the Online Appendix). There is no treatment difference. Our subjects also made choices that are, to some extent, responsive to  underlying price changes: Figure~\ref{fig:uas_dsd} reports the degree of compliance with the downward-sloping demand property. 

\begin{figure}[t]
\centering 
\includegraphics[width=0.5\textwidth]{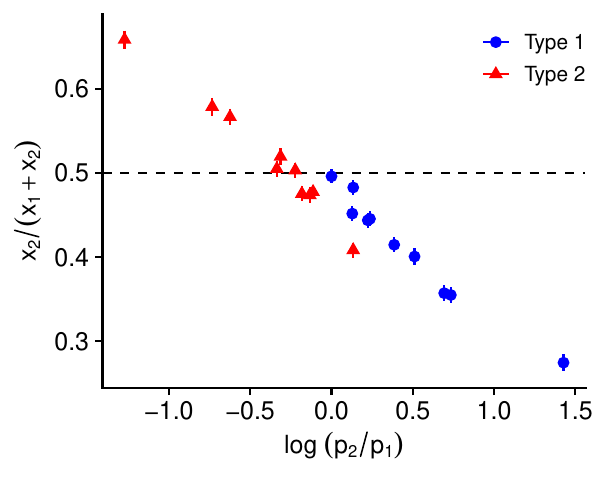}
\caption{Downward-sloping demand at the aggregate level. {\em Notes}: Each point represents mean $x_2/(x_1+x_2)$ at each $\log{(p_2 / p_1)}$ and bars indicate standard error of means.}
\label{fig:uas_dsd}
\end{figure}

\paragraph{Revealed-preference tests, distance measures, and ambiguity attitude.} 
Pass rates for GARP, SEU, and MEU presented in Table~\ref{table:uas_pass_rates} are similar to those of the laboratory data presented above. We find high GARP pass rates, but very low rates for SEU and MEU. Importantly, MEU again does not have more explanatory power than SEU: there is no room for additional rationalizations by allowing for multiple priors. Only one non-SEU subject is rationalized by MEU. High compliance with GARP pushes the average CCEI score to~$0.97$ (Table~\ref{table:uas_distance}). The average $e_*$ of~$1.610$ is not statistically significantly different from the average~$1.507$ in the laboratory data (two-sample $t$-test, $t (626) = 1.133$, $p = 0.258$).

\begin{table}[t]
\centering 
\caption{Pass rates.}
\label{table:uas_pass_rates}
\begin{tabular}{ll cccc}
\toprule 
Treatment & $N$ & GARP & PS & SEU & MEU \\
\midrule 
Large volatility & 245 & 0.4367 & 0.3959 & 0.0122 & 0.0122 \\ 
Small volatility & 256 & 0.4492 & 0.3945 & 0.0234 & 0.0273 \\ 
\midrule 
Combined & 501 & 0.4431 & 0.3952 & 0.0180 & 0.0200 \\ 
\bottomrule 
\end{tabular}
\caption*{\footnotesize {\em Notes}: Since \citeapos{epstein2000probabilities} condition is only necessary for probabilistic sophistication, the numbers reported here capture the fraction of the subjects who are {\em not inconsistent} with probabilistic sophistication. Pass rates for each type of questions separately are presented in Table~\ref{table:uas_pass_rates_all} in the Online Appendix. The power of these revealed-preference tests are discussed in Online Appendix~\ref{appendix:power_calc}.} 
\end{table}

\begin{table}[t]
\centering 
\caption{Distance measures.}
\label{table:uas_distance}
\resizebox{\textwidth}{!}{%
\begin{tabular}{ll ccc ccc ccc}
\toprule 
 & & \multicolumn{3}{c}{CCEI} & \multicolumn{3}{c}{$e_*$ (SEU)} & \multicolumn{3}{c}{$e_*$ (MEU)} \\
 \cmidrule(lr){3-5} \cmidrule(lr){6-8} \cmidrule(lr){9-11}  
Treatment & $N$ & Mean & Median & SD & Mean & Median & SD & Mean & Median & SD \\
\midrule 
Large volatility & 245 & 0.9720 & 0.9950 & 0.0509 & 1.6194 & 1.5369 & 0.9057 & 1.6097 & 1.5000 & 0.9107 \\ 
Small volatility & 256 & 0.9688 & 0.9958 & 0.0552 & 1.6002 & 1.4750 & 0.9243 & 1.5969 & 1.4750 & 0.9261 \\ 
\midrule 
Combined & 501 & 0.9704 & 0.9954 & 0.0531 & 1.6096 & 1.5000 & 0.9144 & 1.6032 & 1.5000 & 0.9177 \\ 
\bottomrule 
\end{tabular}%
}
\end{table}

The pattern of choices in the standard-Ellsberg task is also similar to what we observed in the laboratory data, but the overall frequency with which the risky bet is chosen is smaller. In particular, only~70\% of subjects (regardless of treatment) chose the risky bet in the first question, in which the risky bet pays a~\$5 more than the ambiguous bet in case of winning (note that almost everybody chose the risky bet in the laboratory, albeit with a reward magnitude that is $1/10$th of what we used in the panel). There are thus~44\% (26\%) of subjects who are weakly (strictly) ambiguity averse (Figure~\ref{fig:uas_ellsberg_choice_freq}). These numbers are lower than in the laboratory data. 
Now, using this classification, we look at the relationship between ambiguity aversion and $e_*$. Unlike Figure~\ref{fig:uci_ellsberg_choice_freq} panels~B and~C, using laboratory data, Figure~\ref{fig:uas_ellsberg_choice_freq} panel~A exhibits a decreasing relation between the two (there is a slight indication of reflection around $e_* \approx 1.2$, but it is not as strong as Figure~\ref{fig:uci_ellsberg_choice_freq}BC). Combining these two observations, we can see that subjects with small $e_*$ (close to SEU) are not necessarily less ambiguity averse in the standard Ellsberg task.

\begin{figure}[t]
\centering 
\includegraphics[width=\textwidth]{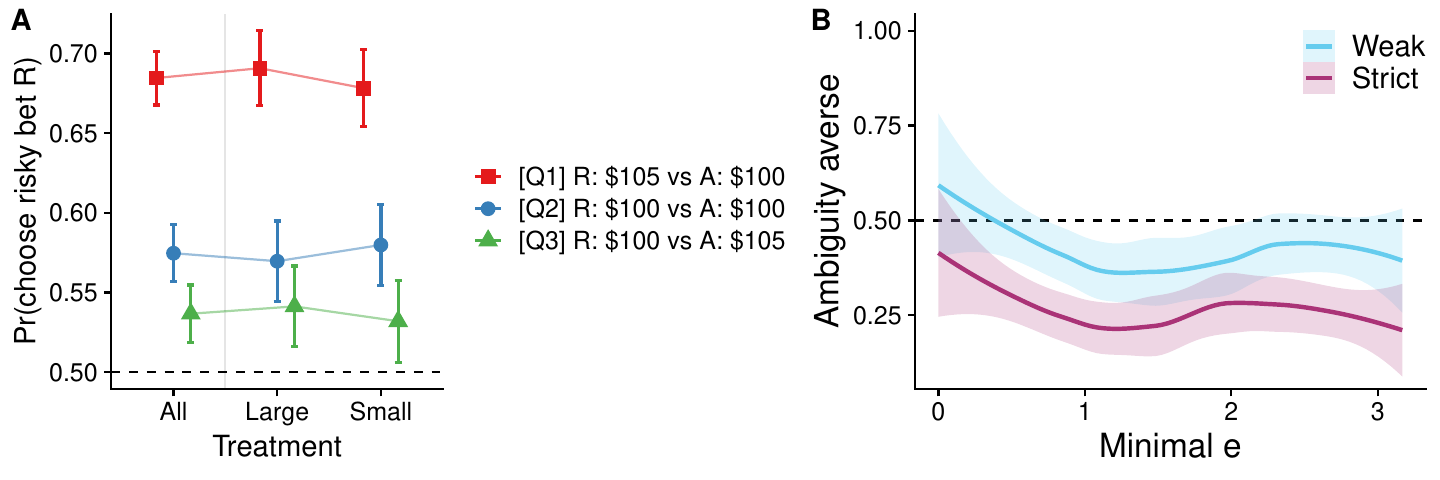}
\caption{(A) Probability of choosing a risky bet in each question in the standard-Ellsberg task in the panel data. (B) LOESS curves relating $e_*$ and ambiguity attitude.}
\label{fig:uas_ellsberg_choice_freq}
\end{figure}

\paragraph{Sociodemographic correlation.} 
One of the great advantages of using the UAS survey is that registered researchers can access datasets from past surveys, and use subject responses in related surveys and experiments. In particular, we use basic demographic information collected through the survey, as well as measures of cognitive ability, financial literacy, and other behavioral data from other experiments.\footnote{The cognitive ability measure is taken from survey module~\#1. Two financial literacy measures are taken from modules~\#1 and~\#6, which asked both the basic and sophisticated financial literacy questions in \cite{lusardi2017ordinary}. One caveat to this approach is the time lag between previous the surveys and ours. For example, the first survey module~\#1 was administered in May~2014.}

We estimated a linear model 
  \[ y_i = \boldsymbol{X}_i \boldsymbol{\beta} + \varepsilon_i , \]
where the dependent variable $y_i$ is subject~$i$'s value of $e_*$ or downward-sloping demand measured by correlation $\rho^\text{dsd}$ between $\log{(p_2/p_1)}$ and $x_2/(x_1+x_2)$, and $\boldsymbol{X}_i$ is a vector of sociodemographic characteristics. These explanatory variables include: age group (omitted category is ``20-39 years old''), above-median financial literacy (measured in UAS modules~\#1 and~\#6; omitted category is ``below-median score''), cognitive ability measured with CRT (omitted category is ``score is~0''), education level (omitted category is ``high school or less''), annual income group (omitted category is ``less than \$25,000''), gender, and employment status. The model is estimated by OLS with robust standard errors. We also estimate logistic regressions where the dependent variable $y_i$ is event monotonicity ($=1$ if monotonicity is violated with a margin of five tokens) and ambiguity attitude in the sense of standard Ellsberg ($=1$ if choices indicate weak ambiguity aversion).

\begin{table}[!p] 
\centering 
\caption{Relation between demographic characteristics and measures for several aspects of behavior in the experiment.} 
\label{table:uas_reg_e_demog} 
\resizebox*{!}{0.94\textheight}{%
\begin{tabular}{@{\extracolsep{0pt}}l D{.}{.}{-3} D{.}{.}{-3} D{.}{.}{-3} D{.}{.}{-3}} 
\toprule 
 & \multicolumn{2}{c}{OLS} & \multicolumn{2}{c}{logistic regression} \\ 
\cmidrule(lr){2-3} \cmidrule(lr){4-5} 
 & \multicolumn{1}{c}{(1)} & \multicolumn{1}{c}{(2)} & \multicolumn{1}{c}{(3)} & \multicolumn{1}{c}{(4)} \\ 
Dependent variable & \multicolumn{1}{c}{$e_*$} & \multicolumn{1}{c}{$\rho^\text{dsd}$} & \multicolumn{1}{c}{Violate mon.} & \multicolumn{1}{c}{Weak AA} \\ 
\midrule 
Treatment: Large & 0.032 & 0.026 & 0.003 & 0.182 \\ 
  & (0.083) & (0.030) & (0.233) & (0.198) \\ 
Age: 40-59 & -0.025 & -0.013 & -0.109 & -0.134 \\ 
  & (0.110) & (0.040) & (0.316) & (0.263) \\ 
Age: 60+ & 0.054 & -0.036 & 0.365 & -0.249 \\ 
  & (0.116) & (0.042) & (0.319) & (0.288) \\ 
Financial literacy (UAS \#1): High & 0.093 & 0.034 & -0.291 & 0.307 \\ 
  & (0.103) & (0.036) & (0.263) & (0.250) \\ 
Financial literacy (UAS \#6): High & -0.244 & -0.043 & 0.067 & 0.204 \\ 
  & (0.100) & (0.035) & (0.268) & (0.247) \\ 
CRT score (UAS \#1): 1 correct answer & -0.028 & -0.019 & -0.362 & 0.436 \\ 
  & (0.098) & (0.034) & (0.264) & (0.232) \\ 
CRT score (UAS \#1): 2 or 3 correct answers & -0.152 & -0.006 & -0.609 & 0.711 \\ 
  & (0.122) & (0.044) & (0.362) & (0.286) \\ 
Education: Some college & 0.112 & 0.005 & 0.142 & -0.070 \\ 
  & (0.133) & (0.047) & (0.342) & (0.331) \\ 
Education: Assoc. or pro. degree & -0.253 & -0.087 & -0.167 & -0.026 \\ 
  & (0.130) & (0.046) & (0.374) & (0.324) \\ 
Education: College or postgraduate & -0.019 & -0.032 & -0.478 & 0.574 \\ 
  & (0.121) & (0.043) & (0.346) & (0.299) \\ 
Income: 25,000-49,999 & 0.244 & 0.098 & -0.055 & 0.470 \\ 
  & (0.138) & (0.046) & (0.368) & (0.335) \\ 
Income: 50,000-74,999 & 0.418 & 0.126 & 0.635 & 0.071 \\ 
  & (0.142) & (0.048) & (0.374) & (0.353) \\ 
Income: 75,000-149,999 & 0.338 & 0.120 & -0.112 & 0.226 \\ 
  & (0.142) & (0.049) & (0.414) & (0.349) \\ 
Income: 150,000+ & 0.290 & 0.141 & -0.087 & 0.675 \\ 
  & (0.201) & (0.078) & (0.614) & (0.484) \\ 
Male & -0.129 & -0.034 & -0.130 & 0.277 \\ 
  & (0.087) & (0.030) & (0.247) & (0.210) \\ 
Working & 0.053 & 0.003 & -0.311 & -0.309 \\ 
  & (0.097) & (0.034) & (0.299) & (0.250) \\ 
Constant & 1.482 & -0.483 & -0.779 & -1.237 \\ 
  & (0.163) & (0.057) & (0.431) & (0.430) \\ 
\midrule
Observations & \multicolumn{1}{c}{490} & \multicolumn{1}{c}{490} & \multicolumn{1}{c}{490} & \multicolumn{1}{c}{490} \\ 
Adjusted $R^{2}$ / Log likelihood & \multicolumn{1}{c}{0.031} & \multicolumn{1}{c}{0.003} & \multicolumn{1}{c}{-239.470} & \multicolumn{1}{c}{-309.069} \\ 
\bottomrule
\end{tabular}%
}
\caption*{{\footnotesize {\em Notes}: Robust standard errors are presented in parentheses.}}
\end{table}

Regression results are presented in the first two columns of Table~\ref{table:uas_reg_e_demog}. First, there is no effect of age on $e_*$. Cognitive ability as measured by CRT is negatively associated with $e_*$ but the effect is not strong. The financial literacy variable measured in UAS module~\#6 is negatively correlated with $e_*$ (i.e., subjects with higher financial literacy are closer to SEU). Subjects in higher income brackets have larger $e_*$ (i.e., further away from SEU), compared to those in the lowest bracket in our sample. Educational background has an effect in the expected direction, but only in the category ``associate or professional degree,'' not in ``college or post-graduate degree.''\footnote{In contrast to these observations, \cite{echenique2018approximate} find that older subjects have larger $e_*$ for objective expected utility (OEU) (i.e., further away from OEU, not SEU) than younger subjects. This is a robust finding in the sense that it holds across data from three different panel surveys \citep{choi2014who, carvalho2016poverty, carvalho2019complexity}.} 
Demographic characteristics do not capture variation in the compliance with the downward-sloping demand property (column~2), but a similar effect of income is observed. Two other measures, violation event monotonicity and ambiguity attitude in the sense of Ellsberg, also exhibit non-significant association with demographic characteristics (except that high CRT score subjects tend to be ambiguity averse compared to low CRT score counterpart). 

Finally, we compare the distribution of $e_*$ in the laboratory and panel data. We can make this comparison because the same set of prices was used in the two experiments. Budgets were very different, but $e_*$ is about relative prices and not about budgets. It is evident from Figure~\ref{fig:uas_e_uas-uci} that there is no difference in distributions of $e_*$ ($p$-values from two-sample Kolmogorov-Smirnov tests are all larger than~0.1). As a basic check to compare that subjects' decisions are at least different than what random choices would offer, we compared the observed distributions to what purely random choices would give rise to: the two distributions are significantly different from the distribution of $e_*$ when simulated subjects make uniformly random choices ($p$-values from two-sample Kolmogorov-Smirnov tests are all below~0.01). 

\begin{figure}[t]
\centering 
\includegraphics[width=0.65\textwidth]{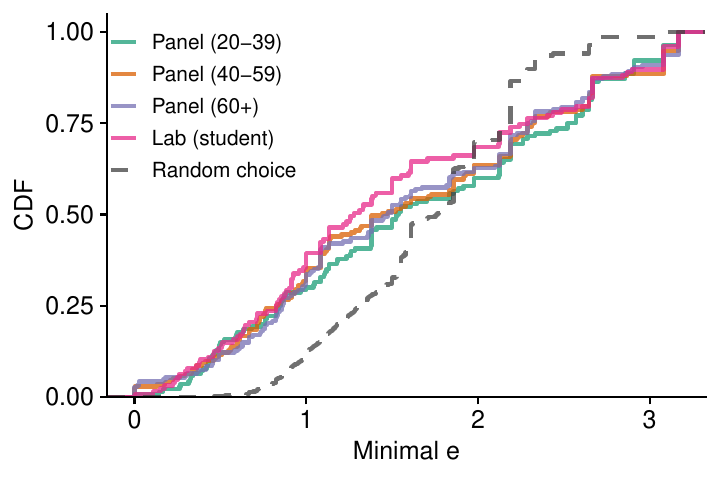}
\caption{Comparing distributions of $e_*$ from the panel study and the laboratory study.}
\label{fig:uas_e_uas-uci}
\end{figure}

\section{Conclusion}
\label{section:conclusion} 

Motivated by recent theoretical advances that provide revealed-preference characterizations of expected utility theory, we design and implement a novel experimental test of the theory of decision making under uncertainty. We find that subjects are mostly consistent with utility maximization, and respond to price changes in the expected direction: they satisfy the downward-sloping demand property, at least to some degree, but not enough to make their choices consistent with SEU. Our findings are the same, regardless of whether we look at laboratory or panel data. In fact, there is a striking similarity in how SEU is violated across the two studies. The subject populations are very different but look very similar in terms of the distribution of the degree of violation of SEU.

Motivated also by the literature on ambiguity aversion, we study the possibility that violations of SEU are due to ambiguity aversion, and look at whether maxmin expected utility (MEU) can explain the data. MEU adds no explanatory power to SEU: with a single exception, {\em all} subjects who fail to satisfy SEU also fail MEU. It is possible that other models of ambiguity aversion could do a better job of accounting for our experimental data. We are restricted to MEU because it is the only model for which there exist nonparametric tests of the kind that we use in our paper; it is also arguably the best known, and most widely applied, model in the ambiguity literature. The testable implications of other models of ambiguity-averse choice is an interesting direction for future research.

Finally, the results in our experiments are markedly unaffected by some of the demographic characteristics that other studies (on choice under risk, not uncertainty) have found significant. Older subjects do not seem to violate SEU to a larger degree than younger subjects. Neither do we see higher degrees of SEU violations in our broad sample of the U.S.\ population, compared to our laboratory experiment conducted on undergraduate students. There are modest effects of income and education. Financial literacy is correlated with subjects' distance to SEU. 

Further studies are necessary to fully understand the behavior in  environments that are more ``natural'' than traditional artificial Ellsberg-style settings. Our nonparametric revealed-preference tests and the empirical approach driven by these theories should hopefully be a useful tool to collect more evidence in this direction.

\bibliographystyle{ecta}
\bibliography{seu_exp}

\clearpage 
\setcounter{page}{1}
\appendix 

\numberwithin{equation}{section}
\counterwithin{figure}{section}
\counterwithin{table}{section}

\begin{center}
{\LARGE \bf Online Appendix}
\end{center}

\section{Implementation of Revealed Preference Tests}
\label{appendix:implementation} 

\subsection{Exact Revealed Preference Tests} 
\label{appendix:implementation_exact} 

We are able to check whether a given dataset is consistent with SEU or MEU by solving a linear programming problem that is equivalent to the axiom characterizing the model under consideration. 
The construction of linear programming problems closely follows the argument in the proofs of Theorems that appear in \citeSOMpapers{echenique2015savage} and \citeSOMpapers{chambers2016}. 

For example, \citeSOMpapers{echenique2015savage} prove in Lemma~7 that a dataset $(x^k, p^k)_{k=1}^K$ is SEU rational if and only if there are strictly positive numbers $v_s^k$, $\lambda^k$, and $\mu_s$ for $s = 1, \dots, S$ and $k = 1, \dots, K$ such that 
\[ \mu_s v_s^k = \lambda^k p_s^k , \;\; x_s^k > x_{s'}^{k'} \Longrightarrow v_s^k \leq v_{s'}^{k'} , \]
or equivalently,  
\[ \log v_s^k + \log \mu_s - \log \lambda^k - \log p_s^k = 0 , \;\; 
x_s^k > x_{s'}^{k'} \Longrightarrow \log v_s^k \leq \log v_{s'}^{k'} , \]
in a log-linearlized form. 

Testing SEU rationality boils down to checking for existence of a solution to the above system, which can be expressed as a system of linear equalities and inequalities: 
\begin{equation}\label{eq:rp_system}
\begin{cases}
A \cdot z = 0 \\
B \cdot z \geq 0 \\
E \cdot z > 0 
\end{cases} . 
\end{equation}

\paragraph{A system of linear equalities and inequalities.}
We now construct three key ingredients of the system~\eqref{eq:rp_system}, matrices $A$, $B$, and $E$ for testing SEU. 

The first matrix $A$ has $K \times S$ rows and $K \times S + S + K + 1$ columns, defined as follows: We have one row for every pair $(k, s)$; one column for every pair $(k, s)$; one column for every $s$, one column for each $k$; and one last column. 
In the row corresponding to $(k, s)$ the matrix has zeroes everywhere with the following exceptions: it has a $1$ in the column for $(k, s)$; it has a $1$ in the column for $s$; it has a $-1$ in the column for $k$; and $-\log p^k_s$ in the very last column. 
This finalizes the construction of $A$. 
The resulting matrix looks as follows: 
\begin{eqnarray*}
\kbordermatrix{
 & (1,1) & \cdots & (k,s) & \cdots & (K,S) &  & 1 & \cdots & s & \cdots & S & & 1 & \cdots & k & \cdots & K &  & p \\
(1,1) & 1 & \cdots & 0 & \cdots & 0 & \vrule & 1 & \cdots & 0 & \cdots & 0 & \vrule & -1 & \cdots & 0 & \cdots & 0 & \vrule & - \log p_1^1 \\
\vdots  & \vdots &  & \vdots &  & \vdots & \vrule & \vdots & & \vdots & & \vdots & \vrule & \vdots &  & \vdots &  & \vdots & \vrule & \vdots \\
(k,s) & 0 & \cdots & 1 & \cdots & 0 & \vrule & 0 & \cdots & 1 & \cdots & 0 & \vrule & 0 & \cdots & -1 & \cdots & 0 & \vrule & - \log p_s^k \\
\vdots  & \vdots &  & \vdots &  & \vdots & \vrule & \vdots & & \vdots & & \vdots & \vrule & \vdots &  & \vdots &  & \vdots & \vrule & \vdots \\
(K,S) & 0 & \cdots & 0 & \cdots & 1 & \vrule & 0 & \cdots & 0 & \cdots & 1 & \vrule & 0 & \cdots & 0 & \cdots & -1 & \vrule & - \log p_S^K \\
} . 
\end{eqnarray*}

Next, we construct matrix $B$ that has $K \times S + S + K + 1$ columns and there is one row for every pair $(k, s)$ and $(k', s')$ for which $x^k_s > x^{k'}_{s'}$. 
In the row corresponding to $x^k_s > x^{k'}_{s'}$ we have zeroes everywhere with the exception of a $-1$ in the column for $(k, s)$  and a $1$ in the column for $(k', s')$. 

Finally, we prepare a matrix that captures the requirement
that the last component of a solution be strictly positive. 
The matrix $E$ has a single row and $K \times S + S + K + 1$ columns. 
It has zeroes everywhere except for $1$ in the last column.

In order to test MEU, we need to modify matrices $A$, $B$, and $E$ appropriately, following the characterization provided by \citeSOMpapers{chambers2016} for the case of two states of the world. 
Let $K^0 = \{ k : x^k_1 = x^k_2 \}$, $K^1 = \{ k : x^k_1 < x^k_2 \}$ and $K^2 = \{ k : x^k_1 > x^k_2 \}$. 
The first-order conditions are: 
\[ \mu^k_s v^k_s = \lambda^k p^k_s , \] 
for $s = 1, 2$ and $k \in \{ 1, \dots, K \}$, where $\mu^k_1 = \bar{\mu}_1$ if $k \in K^1$, $\mu^k_1 = \underline{\mu}_1$ if $k \in K^2$, $\mu^k_1 \in [ \underline{\mu}_1, \bar{\mu}_1]$ if $k \in K^0$. 
We now use $\pi = \mu_1 / \mu_2$ instead of $\mu_1$. Then we can rewrite the first-order conditions: 
\[ \pi^k v^k_1 = \lambda^k p^k_1 \;\; \text{and} \;\; v^k_2 = \lambda^k p^k_2 , \] 
for $k\in\{1,\ldots,K\}$, where $\pi^k = \bar{\pi}$ if $k\in K^1$, $\pi^k = \underline{\pi}$ if $k \in K^2$, $\pi^k \in [\underline{\pi}, \bar{\pi}]$ if $k \in K^0$.

Let $A$ be a matrix with $2 K + 2 + | K^0 | + K + 1$ columns. The first $2 K$ columns are labeled with a different pair $(k, s)$. The next two columns are labeled $\bar{\pi}$ and $\underline{\pi}$. The next $|K^0|$ columns are for choices on the 45-degree line. The next $K$ columns are labeled with $k$. Finally the last column is labeled $p$. 

For each $(k, 2)$, $A$ has a row with all zero entries with the following exception: It has a $1$ in the column labeled $(k, 2)$; It has a $-1$ in the column labeled $k$; It has $- \log p_s^k$ in the column labeled $p$. 
For each $(k, 1)$ with $k \in K^1$, $A$ has a row with all zero entries with the following exception: It has a $1$ in the column labeled $(k, 1)$; It has a $-1$ in the column labeled $k$; it has $- \log p_s^k$ in the column labeled $p$; It has a $1$ in the column labeled $\bar{\pi}$. 
For each $(k, 1)$ with $k \in K^2 \cup K^0$, $A$ has a row defined as above. 
The only difference is that it has a~$1$ in the column labeled $\underline{\pi}$ if $k \in K^2$ and in the column labeled $\pi^k$ if $k \in K^0$, instead of having a~$1$ in the column labeled $\bar{\pi}$. 

The resulting matrix $A$ looks as follows: 
\begin{eqnarray*}
\kbordermatrix{
 & (1,1) & \cdots & (k,s) & \cdots & (K,S) & & \bar{\pi} & \underline{\pi} & \pi^k & \cdots & & 1 & \cdots & k & \cdots & K &  & p \\
\vdots  & \vdots &  & \vdots &  & \vdots & \vrule & \vdots & \vdots & \vdots & & \vrule & \vdots & & \vdots & & \vdots & \vrule & \vdots \\
(k,s) \in (K,2)  & 0 & \cdots  & 1 & \cdots & 0 & \vrule & 0 & 0 & 0 & \cdots & \vrule & 0 & \cdots & -1 & \cdots & 0 & \vrule & - \log p_s^k \\
(k,s) \in (K^1,1)  & 0 & \cdots  & 1 & \cdots & 0 & \vrule & 1 & 0 & 0 & \cdots & \vrule & 0 & \cdots & -1 & \cdots & 0 & \vrule & - \log p_s^k \\
(k,s) \in (K^2,1)  & 0 & \cdots  & 1 & \cdots & 0 & \vrule & 0 & 1 & 0 & \cdots & \vrule & 0 & \cdots & -1 & \cdots & 0 & \vrule & - \log p_s^k \\
(k,s) \in (K^0,1)  & 0 & \cdots  & 1 & \cdots & 0 & \vrule & 0 & 0 & 1 & \cdots & \vrule & 0 & \cdots & -1 & \cdots & 0 & \vrule & - \log p_s^k \\
\vdots  & \vdots &  & \vdots &  & \vdots & \vrule & \vdots & \vdots & \vdots & & \vrule & \vdots & & \vdots & & \vdots & \vrule & \vdots \\
} . 
\end{eqnarray*}

Let $B$ be a matrix with the same number of columns as $A$. 
The columns of $B$ are labeled like those of $A$. 
$B$ has a row for each pair $(x_s^k, x_{s'}^{k'})$ with $x_s^k > x_{s'}^{k'}$. 
The row for $x_s^k > x_{s'}^{k'}$ has all zeroes except for a $1$ in column $(k', s')$ and a $-1$ in column $(k, s)$. 
Finally, $B$ has a row which has a~$1$ in the column for $\bar{\pi}$ and a $-1$ in the column for $\underline{\pi}$ and $2 |K^0|$ additional rows to capture $\pi^k \in [\underline{\pi}, \bar{\pi}]$ for $k \in K^0$. 
The resulting matrix $B$ looks as follows: 
\begin{eqnarray*}
\kbordermatrix{
 & (1,1) & \cdots & (k,s) & \cdots & (k',s') & \cdots & (K,S) & & \bar{\pi} & \underline{\pi} & \pi^k & \cdots & & 1 & \cdots & k & \cdots & K &  & p \\
\vdots  & \vdots &  & \vdots &  & \vdots &  & \vdots & \vrule & \vdots & \vdots & \vdots & & \vrule & \vdots & & \vdots & & \vdots & \vrule & \vdots \\
x_s^k > x_{s'}^{k'}  & 0 & \cdots  & -1 & \cdots & 1 & \cdots & 0 & \vrule & 0 & 0 & 0 & \cdots & \vrule & 0 & \cdots & 0 & \cdots & 0 & \vrule & 0 \\
\vdots  & \vdots &  & \vdots &  & \vdots &  & \vdots & \vrule & \vdots & \vdots & \vdots & \cdots & \vrule & \vdots & & \vdots & & \vdots & \vrule & \vdots \\
\bar{\pi} \geq \underline{\pi} & 0 & \cdots  & 0 & \cdots & 0 & \cdots & 0 & \vrule & 1 & -1 & 0 & \cdots & \vrule & 0 & \cdots & 0 & \cdots & 0 & \vrule & 0 \\
\pi^k \geq \underline{\pi} & 0 & \cdots  & 0 & \cdots & 0 & \cdots & 0 & \vrule & 0 & -1 & 1 & \cdots & \vrule & 0 & \cdots & 0 & \cdots & 0 & \vrule & 0 \\
\pi^k \leq \bar{\pi} & 0 & \cdots  & 0 & \cdots & 0 & \cdots & 0 & \vrule & 1 & 0 & -1 & \cdots & \vrule & 0 & \cdots & 0 & \cdots & 0 & \vrule & 0 \\
} . 
\end{eqnarray*}

\paragraph{Solve the system.}
Our task is to check if there is a vector $z$ that solves the system~\eqref{eq:rp_system} of linear inequalities corresponding to model $\text{M} \in \{ \text{SEU}, \text{MEU} \}$. 
If there is a solution $z$ to this system, we say that the dataset is ``M rational.''

\paragraph{Extension.} 
There are three underlying states of the world, $S = \{ \omega_1, \omega_2, \omega_3 \}$, in the experiments. 
There are two types of questions: in type~1 questions, two events are $s_1 = \{ \omega_1 \}$ and $s_{23} = \{ \omega_2, \omega_3 \}$; in type~2 questions, two events are $s_{12} = \{ \omega_1, \omega_2 \}$ and $s_3 = \{ \omega_3 \}$. 
Let $S_1 = \{ s_1, s_{23} \}$ denote the set of events in type~1 questions and $S_2 = \{ s_{12}, s_3 \}$ denote the set of events in type~2 questions. 
See Figure~\ref{fig:state_spece_structure_in_exp}. 

Suppose we have $K$ observations in the data $(x^k, p^k)_{k=1}^K$. 
Let $k_1 \in K_1$ and $k_2 \in K_2$ denote indices for type~1 and type~2 questions, respectively (thus $K = K_1 \cup K_2$). 
Note that there is no type~1 question with state $s_3$. 
Therefore, indices for observations ($k$) and states ($s$) need to be consistent, i.e., $(k, s) \in K_i \times S_i$ for each $i = 1, 2$.

\begin{figure}[!t]
\centering 
\begin{tikzpicture}[scale=1, every node/.style={scale=1},
node distance=2.1cm,
textbox/.style={rectangle, minimum width=2cm, minimum height=0.75cm, inner sep=8pt},
state1/.style={rectangle, minimum width=2cm, minimum height=0.75cm, inner sep=8pt, text centered, draw=black, fill=myPaleBlue},
state2/.style={rectangle, minimum width=2cm, minimum height=0.75cm, inner sep=8pt, text centered, draw=black, fill=myPaleYellow},
state3/.style={rectangle, minimum width=2cm, minimum height=0.75cm, inner sep=8pt, text centered, draw=black, fill=myPaleRed},
event1/.style={rectangle, minimum width=2cm, minimum height=0.75cm, inner sep=5pt, text centered, draw=black, fill=gray!20},
event2/.style={rectangle, minimum width=4.1cm, minimum height=0.75cm, inner sep=5pt, text centered, draw=black, fill=gray!20},
]
\node (s1) [state1, align=left] {$\omega_1$};
\node (s2) [state2, right of=s1] {$\omega_2$};
\node (s3) [state3, right of=s2] {$\omega_3$};
\node (e11) [event1, above of=s1, yshift=-1.15cm] {$s_1$};
\node (e12) [event2, above of=s3, xshift=-1.05cm, yshift=-1.15cm] {$s_{23}$};
\node (e21) [event2, below of=s1, xshift=1.05cm, yshift=1.15cm] {$s_{12}$};
\node (e22) [event1, below of=s3, yshift=1.15cm] {$s_3$};
\node (text0) [textbox, left of=s1, xshift=-0.7cm] {State of the world};
\node (text1) [textbox, above of=text0, xshift=-0cm, yshift=-1.15cm] {Type~1 question};
\node (text2) [textbox, below of=text0, xshift=-0cm, yshift=1.15cm] {Type~2 question};
\end{tikzpicture}
\caption{Event structure in the experiment.}
\label{fig:state_spece_structure_in_exp}
\end{figure}

In order to test SEU in this environment, we use the following proposition. 

\begin{proposition}\label{prop:seu_test_in_exp}
There exist strictly positive numbers $\mu_{1}$, $\mu_{23}$, $\mu_{12}$, $\mu_{3}$, $v_1^{k_1}$, $v_{23}^{k_2}$, $v_{12}^{k_1}$, $v_3^{k_1}$, $\lambda^{k_1},\lambda^{k_2}$ such that 
\begin{align}
\mu_{1} v^{k_1}_{1} = \lambda^{k_2} p^{k_1}_{1} \;\; \text{for each} \;\; k_1, \tag{FOC1} \label{seu-foc1} \\
\mu_{23} v^{k_1}_{23} = \lambda^{k_2} p^{k_1}_{23} \;\; \text{for each} \;\; k_1, \tag{FOC2} \label{seu-foc2} \\
\mu_{12} v^{k_2}_{12} = \lambda^{k_1} p^{k_2}_{12} \;\; \text{for each} \;\; k_2, \tag{FOC3} \label{seu-foc3} \\
\mu_{3} v^{k_2}_{3} = \lambda^{k_1} p^{k_2}_{3} \;\; \text{for each} \;\; k_2, \tag{FOC4} \label{seu-foc4} \\
\mu_{12} + \mu_3 = 1, \tag{NO1} \label{seu-no1} \\
\mu_{1} + \mu_{23} = 1, \tag{NO2} \label{seu-no2} \\
\mu_{12} \geq \mu_1, \tag{MO1} \label{seu-mo1} \\
\mu_{23} \geq \mu_3, \tag{MO2} \label{seu-mo2} \\
\mu_{12}-\mu_{1} = \mu_{23}-\mu_3. \tag{EQ} \label{seu-eq}
\end{align}
if and only if there exist strictly positive numbers $\tilde{\mu}_{23}$, $\tilde{\mu}_{12}$, $v_1^{k_1}$, $v_{23}^{k_2}$, $v_{12}^{k_2}$, $v_3^{k_2}$, $\tilde{\lambda}^{k_1}, \tilde{\lambda}^{k_2}$ such that 
\begin{align}
v^{k_1}_{1} = \tilde{\lambda}^{k_1} p^{k_1}_{1} \;\; \text{for each} \;\; k_1, \tag{FOC1$'$} \label{seu-foc1'} \\
\tilde{\mu}_{23} v^{k_1}_{23} = \tilde{\lambda}^{k_1} p^{k_1}_{23} \;\; \text{for each} \;\; k_1, \tag{FOC2$'$} \label{seu-foc2'} \\
\tilde{\mu}_{12} v^{k_2}_{12} = \tilde{\lambda}^{k_2} p^{k_2}_{12} \;\; \text{for each} \;\; k_2, \tag{FOC3$'$} \label{seu-foc3'} \\
v^{k_2}_{3} = \tilde{\lambda}^{k_2} p^{k_2}_{3} \;\; \text{for each} \;\; k_2, \tag{FOC4$'$} \label{seu-foc4'} \\
\tilde{\mu}_{23} \tilde{\mu}_{12} \geq 1. \tag{MO1$'$} \label{seu-mo1'}
\end{align}
\end{proposition}
\begin{proof}
Define $\mu_1 = 1 / (1 + \tilde{\mu}_{23})$, $\mu_{23} = \tilde{\mu}_{23} / (1 + \tilde{\mu}_{23})$, $\mu_{12} = \tilde{\mu}_{12} / (1 + \tilde{\mu}_{12})$, $\mu_3 = 1 / (1 + \tilde{\mu}_{12})$, $\lambda^{k_1} = \tilde{\lambda}^{k_1} / (1 + \tilde{\mu}_{23})$, and $\lambda^{k_2} = \tilde{\lambda}^{k_2} / (1 + \tilde{\mu}_{12})$. 
Then, conditions \ref{seu-foc1}-\ref{seu-foc1} are equivalent to \ref{seu-foc1'}-\ref{seu-foc4'} since, for example, 
\begin{equation*}
\mu_{12} v^{k_2}_{12} = \lambda^{k_2} p^{k_2}_{12} 
\;\; \iff \;\;
\frac{\tilde{\mu}_{12}}{1 + \tilde{\mu}_{12}} v^{k_2}_{12} = \frac{\tilde{\lambda}^{k_2}}{1 + \tilde{\mu}_{12}} p^{k_2}_{12} 
\;\; \iff \;\;
\tilde{\mu}_{12} v^{k_2}_{12} = \tilde{\lambda}^{k_2} p^{k_2}_{12} . 
\end{equation*}
Condition \ref{seu-mo1} is equivalent to \ref{seu-mo1'} since 
\begin{equation*}
\mu_{12} \geq \mu_1 
\;\; \iff \;\; 
\frac{\tilde{\mu}_{12}}{1 + \tilde{\mu}_{12}} \geq \frac{1}{1 + \tilde{\mu}_{23}} 
\;\; \iff \;\; 
\tilde{\mu}_{12} + \tilde{\mu}_{12} \tilde{\mu}_{23} \geq 1 + \tilde{\mu}_{12} 
\;\; \iff \;\; 
\tilde{\mu}_{12} \tilde{\mu}_{23} \geq 1 , 
\end{equation*}
and similarly \ref{seu-mo2} is equivalent to \ref{seu-mo1'}. 
Condition \ref{seu-eq} is satisfied since 
\begin{equation*}
\mu_{12} - \mu_1 
= \frac{\tilde{\mu}_{12}}{1 + \tilde{\mu}_{12}} - \frac{1}{1 + \tilde{\mu}_{23}} 
= \frac{\tilde{\mu}_{12} \tilde{\mu}_{23} - 1}{(1 + \tilde{\mu}_{12}) (1 + \tilde{\mu}_{23})} 
= \frac{\tilde{\mu}_{23}}{1 + \tilde{\mu}_{23}} - \frac{1}{1 + \tilde{\mu}_{12}} 
= \mu_{23} - \mu_3 . 
\end{equation*}
\end{proof}

In order to implement the test, we first assemble matrices $A$ (capturing $v$, $\mu$, and $\lambda$; equality constraints in the linear programming problem) and $B$ (capturing concavity of $u$; weak inequality constraints). 
The above proposition has two implications: 
(i) We need to find only two strictly positive numbers capturing subjective beliefs, $\tilde{\mu}_{23}$ and $\tilde{\mu}_{12}$, instead of four numbers $\mu_1$, $\mu_{23}$, $\mu_{12}$, and $\mu_3$. 
(ii) We need to add one row in $B$ to take care of additional weak inequality constraint $\tilde{\mu}_{23} \tilde{\mu}_{12} \geq 1$ (or equivalently, $\log \tilde{\mu}_{23} + \log \tilde{\mu}_{12} \geq 0$). 

Let us now consider MEU. 
Suppose that in the first $m_1$ observations we have a partition $\{ \{ \omega_1 \}, \{ \omega_2, \omega_3 \} \}$ (i.e., type~1 questions), and in the second $m_2$ observations we have a partition $\{ \{ \omega_1, \omega_2 \}, \{ \omega_3 \} \}$  (i.e., type~2 questions). 
\begin{itemize}
\item Partition 1: $\{ \{ \omega_1 \}, \{ \omega_2, \omega_3 \} \}$. 
Let $O^0 = \{ k : x^k_{1} = x^k_{23} \}$, $O^1 = \{ k : x^k_{1} < x^k_{23} \}$ and $O^2 = \{ k : x^k_{1} > x^k_{23} \}$. 
Afriat inequalities are now: 
\[ \theta^k v^k_{1} = \lambda^k p^k_{1} \;\; \text{and} \;\; v^k_{23} = \lambda^k p^k_{23}, \] 
for $k \in \{ 1, \dots, m_1 \}$, where $\theta^k = \bar{\theta}$ if $k \in O^1$, $\theta^k = \underline{\theta}$ if $k \in O^2$, and $\theta^k \in [\underline{\theta}, \bar{\theta}]$ if $k \in O^0$. 

\item Partition 2: $\{ \{ \omega_1, \omega_2 \}, \{ \omega_3 \} \}$. 
Let $T^0 = \{ k : x^k_{12} = x^k_3\}$, $T^1 = \{ k : x^k_{12} < x^k_3\}$ and $T^2 = \{ k : x^k_{12} > x^k_3 \}$. 
Afriat inequalities are now: 
\[ \pi^k v^k_{12} = \lambda^k p^k_{12} \;\; \text{and} \;\; v^k_3 = \lambda^k p^k_3, \] 
for $k \in \{ m_1+1, \dots, m_1+m_2 \}$, where $\pi^k = \bar{\pi}$ if $k \in T^1$, $\pi^k = \underline{\pi}$ if $k \in T^2$, and $\pi^k \in [\underline{\pi}, \bar{\pi}]$ if $k \in T^0$. 
\end{itemize}

The unknowns are
\[ \underline{\theta}, \bar{\theta}, \theta^k, v^k_s, \lambda^k \] 
for all $k = 1, \dots, m_1$, and $s \in \{ \{ \omega_1 \}, \{ \omega_2, \omega_3 \} \}$, and 
\[ \underline{\pi}, \bar{\pi}, \pi^k, v^k_s, \lambda^k \] 
for all $k = m_1+1, \dots, m_1+m_2$, $s \in \{ \{ \omega_1, \omega_2 \}, \{ \omega_3 \} \}$. 
The system of inequalities are: 
\begin{equation*}
\begin{cases}
\theta^k v^k_{1} = \lambda^k p^k_{1} & \text{if } k \in O^0 \\
\bar{\theta} v^k_{1} = \lambda^k p^k_{1} & \text{if } k \in O^1 \\
\underline{\theta} v^k_{1} = \lambda^k p^k_{1} & \text{if } k \in O^2 \\
v^k_{23} = \lambda^k p^k_{23} & \text{if } k \in O^0 \cup O^1 \cup O^2 \\
\underline{\theta} \leq \theta^k \leq \bar{\theta} &\text{if } k \in O^0 \cup O^1 \cup O^2
\end{cases} , \;\;\;\; 
\begin{cases}
\pi^k v^k_{12} = \lambda^k p^k_{12} & \text{if } k \in T^0 \\
\bar{\pi} v^k_{12} = \lambda^k p^k_{12} & \text{if } k \in T^1 \\
\underline{\pi} v^k_{12} = \lambda^k p^k_{12} & \text{if } k \in T^2 \\
v^k_3 = \lambda^k p^k_3 & \text{if } k \in T^0 \cup T^1 \cup T^2 \\
\underline{\pi} \leq \pi^k \leq \bar{\pi} & \text{if } k \in T^0 \cup T^1 \cup T^2 \\
\end{cases}
\end{equation*}
and, in addition, the constraints $\underline{\theta} \leq \bar{\theta}$, $\underline{\pi} \leq \bar{\pi}$, $\underline{\theta} \leq \underline{\pi}$, and $\bar{\theta} \leq \bar{\pi}$. 

Note that this system of inequalities is linear after we take the $\log$ of each variable. In particular the constraint that  $\pi^k \in [\underline{\pi}, \bar{\pi}]$ is written as $\log{(\underline{\pi})} \leq \log{(\pi^k)} \leq \log{(\bar{\pi})}$. 

\begin{proposition}
A solution to the previous Afriat inequalities gives a solution to the FOCs. 
\end{proposition}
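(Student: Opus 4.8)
The plan is to turn a solution of the Afriat inequalities into an explicit MEU model on the three states---a convex set of priors $\Pi \subseteq \Delta_{++}$ together with a concave, strictly increasing $u$---and to read the FOCs off from it. Two elementary observations organize the whole construction. First, writing $\theta(\mu) = \mu_1/(\mu_2+\mu_3)$ and $\pi(\mu) = (\mu_1+\mu_2)/\mu_3$ for $\mu \in \Delta_{++}$, a direct computation using $\mu_1+\mu_2+\mu_3 = 1$ gives $\pi(\mu) - \theta(\mu) = \mu_2/[\mu_3(\mu_2+\mu_3)] \geq 0$. Evaluating this at the prior minimizing $\pi$ yields $\underline{\theta} \leq \underline{\pi}$, and at the prior maximizing $\theta$ yields $\bar{\theta} \leq \bar{\pi}$, so these linking constraints are \emph{necessary} for any genuine set of priors; the Afriat system carries exactly the right ones. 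Second, $\theta(\mu) = \mu_1/(1-\mu_1)$ is strictly increasing in $\mu_1$ and $\pi(\mu) = (1-\mu_3)/\mu_3$ is strictly decreasing in $\mu_3$, so each ratio bound is equivalent to a bound on a single coordinate.

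First I would build the utility function exactly as in the Afriat-style argument of \citeSOMpapers{echenique2015savage}: the concavity rows of $B$ guarantee $x^k_s > x^{k'}_{s'} \Rightarrow v^k_s \leq v^{k'}_{s'}$, which is precisely what is needed to interpolate a strictly increasing, concave, piecewise-linear $u$ whose supergradient at $x^k_s$ equals $v^k_s$. This step is routine and identical to the SEU case.

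The heart of the proof is the construction of $\Pi$. Using the coordinate translation above, set $\underline{a} = \underline{\theta}/(1+\underline{\theta})$, $\bar{a} = \bar{\theta}/(1+\bar{\theta})$, $\underline{c} = 1/(1+\bar{\pi})$, and $\bar{c} = 1/(1+\underline{\pi})$, and define $\Pi = \{\mu \in \Delta_{++} : \underline{a} \leq \mu_1 \leq \bar{a}, \ \underline{c} \leq \mu_3 \leq \bar{c}\}$, a box in the $(\mu_1,\mu_3)$ coordinates intersected with the open simplex. The linking constraints translate into $\underline{a} + \bar{c} \leq 1$ and $\bar{a} + \underline{c} \leq 1$, which are exactly what is needed for $\Pi$ to be a nonempty convex subset of $\Delta_{++}$ on which $\mu_1$ ranges over all of $[\underline{a},\bar{a}]$ and $\mu_3$ over all of $[\underline{c},\bar{c}]$; consequently $\theta$ realizes every value in $[\underline{\theta},\bar{\theta}]$ and $\pi$ every value in $[\underline{\pi},\bar{\pi}]$ over $\Pi$. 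Because each observation involves only one of the two partitions, its worst-case prior is pinned down by a single coordinate, so the two partitions can be controlled independently by $\mu_1$ and $\mu_3$---this is what makes the box construction succeed.

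Finally I would verify the MEU inequalities. For a type-1 observation with $x^k_1 < x^k_{23}$ the worst-case prior maximizes the weight on the low-utility state, i.e.\ drives $\theta$ to $\bar{\theta}$; the matching Afriat rows $\bar{\theta}\, v^k_1 = \lambda^k p^k_1$ and $v^k_{23} = \lambda^k p^k_{23}$ say that $\lambda^k p^k$ is a supergradient of the (concave) maxmin functional at $x^k$, so $x^k$ maximizes $\inf_{\mu\in\Pi}\sum_s \mu_s u(\cdot)$ on its budget; the cases $x^k_1 > x^k_{23}$, $x^k_1 = x^k_{23}$, and all type-2 observations are symmetric, using $\underline{\theta}$, $\theta^k$, and $\underline{\pi},\bar{\pi},\pi^k$. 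The main obstacle is the construction of $\Pi$: guaranteeing a \emph{single} strictly positive three-state set of priors whose induced worst-case ratios on both partitions equal the prescribed bounds. The one delicate point is strict positivity when a linking constraint binds (say $\bar{a} + \underline{c} = 1$), so that the value $\bar{\theta}$ is approached but not attained inside $\Delta_{++}$. This is harmless because the MEU inequalities involve only the infima $\inf_{\mu\in\Pi}\sum_s\mu_s u(\cdot)$, and by continuity these agree with the infima over the compact closure $\overline{\Pi}$; attainment inside $\Pi$ is never required, so the open set $\Pi \subseteq \Delta_{++}$ already delivers the correct worst-case evaluations.
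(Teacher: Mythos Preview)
Your proposal is correct but takes a substantially more elaborate route than the paper. The paper's proof is a bare change of variables: it sets $\mu^k_1 = \theta^k/(1+\theta^k)$, $\mu^k_{23} = 1/(1+\theta^k)$ for type-1 observations and $\mu^{k'}_{12} = \pi^{k'}/(1+\pi^{k'})$, $\mu^{k'}_3 = 1/(1+\pi^{k'})$ for type-2, then checks that normalization holds, that the four inequality constraints $\underline{\theta}\leq\bar{\theta}$, $\underline{\pi}\leq\bar{\pi}$, $\underline{\theta}\leq\underline{\pi}$, $\bar{\theta}\leq\bar{\pi}$ translate into the corresponding inequalities on the $\mu$'s, and that the implied $\underline{\mu}_2 = \underline{\mu}_{12}-\underline{\mu}_1$ and $\bar{\mu}_2 = \bar{\mu}_{12}-\bar{\mu}_1$ are nonnegative. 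That is the entire argument; no set $\Pi$ is constructed and no MEU inequality is verified, because the proposition only claims that the FOCs (as variables) are recovered.

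You instead build the full MEU rationalization: a concave $u$ via the standard Afriat interpolation and an explicit box $\Pi$ in $(\mu_1,\mu_3)$-coordinates, and then argue that the worst-case prior for each observation picks out exactly the prescribed $\theta$ or $\pi$. Your key structural observation---that type-1 and type-2 observations depend only on $\mu_1$ and $\mu_3$ respectively, so a product set controls both partitions independently---is what makes the box work, and it explains \emph{why} only the two linking constraints $\underline{\theta}\leq\underline{\pi}$ and $\bar{\theta}\leq\bar{\pi}$ are needed. The paper's proof buys brevity and stays at the level of first-order conditions; yours buys a complete rationalization and more insight, at the cost of handling the boundary case where $\Pi$ touches $\partial\Delta_{++}$, which you do correctly via closure and continuity of the infimum.
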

\begin{proof}
Define 
\begin{equation*}
\mu^k_{1} = \theta^k / (1+\theta^k) , \;\; 
\mu^k_{23} = 1 / (1+\theta^k) 
\end{equation*}
if $k = 1, \dots, m_1$, and 
\begin{equation*}
\mu^{k'}_{12} = \pi^{k'} / (1+\pi^{k'}) , \;\; 
\mu^{k'}_{3} = 1 / (1+\pi^{k'}) 
\end{equation*}
if $k' = m_1+1, \dots, m_1+m_2$. 
Then 
\[ 1 = \mu^k_{1} + \mu^k_{23} = \mu^{k'}_{12} + \mu^{k'}_{3} . \]

Observe that:
\begin{enumerate}[label=(\alph*),ref=(\alph*)]
\item $\underline{\theta} \leq  \bar{\theta} \Longrightarrow \underline{\mu}_{1} \leq \bar{\mu}_{1}$; 
\item $\underline{\pi} \leq \bar{\pi} \Longrightarrow \underline{\mu}_{12} \leq \bar{\mu}_{12}$; 
\item \label{it:mon1} $\underline{\theta} \leq  \underline{\pi} \Longrightarrow \underline{\mu}_{1} \leq \underline{\mu}_{12}$; and 
\item \label{it:mon2} $\bar{\theta} \leq \bar{\pi} \Longrightarrow \bar{\mu}_{1} \leq \bar{\mu}_{12}$. 
\end{enumerate}

Define $\underline{\mu}_2 = \underline{\mu}_{12} - \underline{\mu}_1$. Note that $\underline{\mu}_2 = \bar{\mu}_{23} - \bar{\mu}_3$ because 
\[ 1 = \underline{\mu}_{12} + \bar{\mu}_3 = \underline{\mu}_{1} + \bar{\mu}_{23} \Longrightarrow \underline{\mu}_{12} - \underline{\mu}_{1} = \bar{\mu}_{23} - \bar{\mu}_3 . \]
Note also that $\underline{\mu}_2 \geq 0$, as~\ref{it:mon1} implies that $\underline{\mu}_{12} \geq \underline{\mu}_1$. 
Similarly, if we define $\bar{\mu}_2 = \bar{\mu}_{12} - \bar{\mu}_1$. Then using~\ref{it:mon2} we obtain that 
\[ 0 \leq \bar{\mu}_2 = \underline{\mu}_{23} - \underline{\mu}_3 . \]
\end{proof}

\subsection{Approximate Revealed Preference Tests} 
\label{appendix:implementation_approximate} 

We can characterize $e$-price-perturbed SEU (Fact~\ref{fact:seu_approx_test} in Section~\ref{section:theoretical_background}) in the environment of the experiment. 

\begin{proposition}
\label{prop:p-seu_1}
Given $e \in \mathbf{R}_+$, a dataset $(x^k, p^k)_{k=1}^K$ is {\em $e$-price-perturbed SEU rational} if and only if there exist strictly positive numbers $v_s^k$, $\lambda^k$, $\mu_s$, and $\varepsilon_s^k$ such that: 
\begin{enumerate}
\item for all $(k, s), (k', s') \in \cup_{i = 1}^2 (K_i \times S_i)$, 
\begin{equation*}
\mu_s v_s^k = \lambda^k \varepsilon_s^k p_s^k 
, \quad 
x_s^k > x_{s'}^{k'} \Longrightarrow v_s^k \leq v_{s'}^{k'} , 
\end{equation*}

\item for all $i = 1, 2$, 
\begin{equation*}
\sum_{s \in S_i} \mu_s = 1 ,   
\end{equation*}

\item $\mu_{12} \geq \mu_1$ and $\mu_{23} \geq \mu_3$, 

\item $\mu_{12} - \mu_1 = \mu_{23} - \mu_3$, and 

\item for all $k, l \in K$ and $s, t \in S$, 
\begin{equation*}
\frac{\varepsilon_s^k / \varepsilon_t^k}{\varepsilon_s^l / \varepsilon_t^l} \leq 1 + e . 
\end{equation*}
\end{enumerate}
\end{proposition}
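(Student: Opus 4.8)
The strategy is to reduce the claim to the first-order (Afriat-type) characterization of price-perturbed SEU rationality, handling the three-state/two-partition structure exactly as in the preceding proposition. The one conceptual point is that a monetary act chosen in a type-$i$ question pays a single amount on each event, so under any strictly positive belief $\nu=(\nu_1,\nu_2,\nu_3)\in\Delta_{++}$ on $\{\omega_1,\omega_2,\omega_3\}$ its expected utility depends on $\nu$ only through the event-probabilities $\nu_1,\,\nu_2+\nu_3$ (type~1) and $\nu_1+\nu_2,\,\nu_3$ (type~2). Identifying $\mu_1=\nu_1$, $\mu_{23}=\nu_2+\nu_3$, $\mu_{12}=\nu_1+\nu_2$, $\mu_3=\nu_3$ turns each observation into a two-good SEU maximization in event-space, and items~2--4 are precisely the statement that these four numbers come from one additive, strictly positive $\nu$. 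I would prove the two implications separately.

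\emph{Necessity.} Suppose the dataset is $e$-price-perturbed SEU rational, so (mirroring the belief-perturbed definition, using the equivalence of the three perturbation sources) there are $\nu\in\Delta_{++}$, a concave strictly increasing $u$, and perturbed prices $\varepsilon_s^k p_s^k$ such that each $x^k$ is optimal on its perturbed budget and the perturbation bound holds. I would take $v_s^k$ to be a supergradient of $u$ at $x_s^k$; concavity of $u$ then gives the marginal-utility ordering $x_s^k>x_{s'}^{k'}\Rightarrow v_s^k\le v_{s'}^{k'}$. Writing the interior first-order conditions in event-space yields $\mu_s v_s^k=\lambda^k\varepsilon_s^k p_s^k$ with the event-beliefs $\mu_s$ defined above, which is item~1. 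Item~2 is the normalization of $\nu$, item~3 follows from $\nu_2\ge 0$, and item~4 holds because both sides equal $\nu_2$. The bound in item~5 is inherited verbatim.

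\emph{Sufficiency.} Conversely, given numbers satisfying items~1--5, I would set $\nu_1=\mu_1$, $\nu_3=\mu_3$, and $\nu_2=\mu_{12}-\mu_1$; item~4 makes this equal to $\mu_{23}-\mu_3$, item~3 makes it nonnegative, and item~2 gives $\nu_1+\nu_2+\nu_3=\mu_{12}+\mu_3=1$, so that $\nu_1+\nu_2=\mu_{12}$ and $\nu_2+\nu_3=\mu_{23}$ reproduce all four event-beliefs from one additive $\nu$. I would then construct a concave strictly increasing $u$ from the numbers $v_s^k$ by the standard Afriat construction, using the ordering in item~1 to guarantee concavity, so that each $v_s^k$ is a supergradient of $u$ at $x_s^k$. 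Item~1 then reads as the optimality conditions certifying that $x^k$ maximizes $\sum_s\nu_s u(\cdot)$ on the budget generated by the perturbed prices $\varepsilon_s^k p_s^k$; together with item~5 this is exactly $e$-price-perturbed SEU rationality.

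\emph{Main obstacle.} The delicate step is the reverse direction: one must recover a \emph{single} additive belief $\nu$ on the three underlying states and a \emph{globally} concave utility compatible with all observations from both partitions at once, while the perturbation factors are absorbed into prices without disturbing that common belief. The point requiring care is the strict positivity of $\nu_2=\mu_{12}-\mu_1$: item~3 is only a weak inequality, so the boundary case $\mu_{12}=\mu_1$ forces $\nu_2=0$ and must be accommodated (either by a limiting argument or by reading strict positivity in event-space), exactly as the weak inequalities MO1--MO2 are handled in the preceding proposition. Once the event-additivity constraints are shown to be both necessary and sufficient for such a $\nu$, the concave-utility construction and the perturbation bookkeeping are routine applications of the cited characterizations.
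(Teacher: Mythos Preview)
The paper does not supply a proof of this proposition; it states it as the direct adaptation of the Afriat-type characterization (Lemma~7 of \cite{echenique2015savage}, together with the perturbation from \cite{echenique2018approximate}) to the two-partition design, and then immediately reformulates it as Proposition~\ref{prop:p-seu_2}. Your plan is exactly the argument one would fill in: pass to event-space, read items~2--4 as additivity of a single prior $\nu$ on $\{\omega_1,\omega_2,\omega_3\}$, and run the standard supergradient/Afriat construction in both directions with the $\varepsilon$'s riding along on prices. This is the same route the paper takes implicitly, and it matches how the paper proves the neighbouring Propositions~1 and~2.

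The one point you flag---strict positivity of $\nu_2=\mu_{12}-\mu_1$---is real and is indeed the only place requiring care. The paper's stated conditions~3 are weak inequalities, so the boundary $\nu_2=0$ is formally allowed by items~1--5 even though the definition of SEU rationality asks for $\nu\in\Delta_{++}$. In the paper's treatment of the unperturbed case (Proposition~1) this is handled by the reformulation with $\tilde\mu_{12}\tilde\mu_{23}\ge 1$, and in practice the linear-programming implementation works in logs, so a slack solution is generic; a clean write-up would either strengthen item~3 to strict inequalities or note the standard limiting argument, as you suggest.
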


Following the same reasoning as Proposition~\ref{prop:seu_test_in_exp}, we obtain an equivalent characterization which is useful for setting up a linear programming problem. 

\begin{proposition} 
\label{prop:p-seu_2}
Given $e \in \mathbf{R}_+$, a dataset $(x^k, p^k)_{k=1}^K$ is {\em $e$-price-perturbed SEU rational} if and only if there exist strictly positive numbers $v_s^k$, $\tilde{\lambda}^k$, $\tilde{\mu}_s$, and $\varepsilon_s^k$ such that: 
\begin{enumerate}
\item for all $k_1 \in K_1$, 
\begin{equation}
v_1^{k_1} = \tilde{\lambda}^{k_1} \varepsilon_1^{k_1} p_1^{k_1} , 
\quad \text{and} \quad 
\tilde{\mu}_{23} v_{23}^{k_1} = \tilde{\lambda}^{k_1} \varepsilon_{23}^{k_1} p_{23}^{k_1} , 
\tag{$e$-FOC1} 
\label{eq:rSEU-foc-1}
\end{equation}

\item for all $k_2 \in K_2$, 
\begin{equation}
\tilde{\mu}_{12} v_{12}^{k_2} = \tilde{\lambda}^{k_2} \varepsilon_{12}^{k_2} p_{12}^{k_2} 
\quad \text{and} \quad 
v_3^{k_2} = \tilde{\lambda}^{k_2} \varepsilon_3^{k_2} p_3^{k_2} , 
\tag{$e$-FOC2} 
\label{eq:rSEU-foc-2}
\end{equation}

\item for all $(k, s), (k', s') \in \cup_{i = 1}^2 (K_i \times S_i)$, 
\begin{equation*}
x_s^k > x_{s'}^{k'} \Longrightarrow v_s^k \leq v_{s'}^{k'} , 
\tag{$e$-CON} 
\label{eq:rSEU-con}
\end{equation*}

\item $\tilde{\mu}_{23} \tilde{\mu}_{12} \geq 1$ ($e$-MON), and 

\item for all $k, l \in K$ and $s, t \in S$, 
\begin{equation*}
\frac{\varepsilon_s^k / \varepsilon_t^k}{\varepsilon_s^l / \varepsilon_t^l} \leq 1 + e . 
\end{equation*}
\end{enumerate}
\end{proposition}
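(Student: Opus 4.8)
The plan is to derive Proposition~\ref{prop:p-seu_2} from Proposition~\ref{prop:p-seu_1}, which is already established. Both propositions characterize the same property, namely a dataset being $e$-price-perturbed SEU rational, so it is enough to show that the four-belief system (1)--(5) of Proposition~\ref{prop:p-seu_1} admits a strictly positive solution if and only if the two-belief system of Proposition~\ref{prop:p-seu_2} does. The argument mirrors the proof of the exact (non-perturbed) case given in the ``Extension'' subsection, the only new feature being that the price-perturbation multipliers $\varepsilon_s^k$ ride along untouched.

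First I would introduce the change of variables
\[ \mu_1 = \frac{1}{1 + \tilde{\mu}_{23}}, \quad \mu_{23} = \frac{\tilde{\mu}_{23}}{1 + \tilde{\mu}_{23}}, \quad \mu_{12} = \frac{\tilde{\mu}_{12}}{1 + \tilde{\mu}_{12}}, \quad \mu_3 = \frac{1}{1 + \tilde{\mu}_{12}}, \]
together with $\lambda^{k_1} = \tilde{\lambda}^{k_1} / (1 + \tilde{\mu}_{23})$ and $\lambda^{k_2} = \tilde{\lambda}^{k_2} / (1 + \tilde{\mu}_{12})$, while keeping every $v_s^k$ and every $\varepsilon_s^k$ the same in both systems. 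This is a bijection between strictly positive tuples: its inverse sets $\tilde{\mu}_{23} = \mu_{23} / \mu_1$, $\tilde{\mu}_{12} = \mu_{12} / \mu_3$, and rescales the $\lambda$'s accordingly. Because the $\tilde{\mu}$'s are strictly positive, each $\mu_s$ lies in $(0,1)$, and conversely, so positivity is preserved in both directions.

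I would then match the constraints. Substituting the reparametrization into the first-order conditions $\mu_s v_s^k = \lambda^k \varepsilon_s^k p_s^k$ (condition~1) cancels the common factor $(1 + \tilde{\mu}_{23})^{-1}$ or $(1 + \tilde{\mu}_{12})^{-1}$ from both sides; the belief coefficient then collapses to $1$ when $s \in \{1, 3\}$ and to $\tilde{\mu}_{23}$ or $\tilde{\mu}_{12}$ when $s \in \{23, 12\}$, turning condition~1 into exactly \eqref{eq:rSEU-foc-1} and \eqref{eq:rSEU-foc-2}. The normalizations $\mu_1 + \mu_{23} = 1$ and $\mu_{12} + \mu_3 = 1$ (condition~2) hold identically by construction. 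Both monotonicity inequalities $\mu_{12} \geq \mu_1$ and $\mu_{23} \geq \mu_3$ (condition~3) reduce, after clearing denominators, to the single inequality $\tilde{\mu}_{12} \tilde{\mu}_{23} \geq 1$, which is $e$-MON; and the equality $\mu_{12} - \mu_1 = \mu_{23} - \mu_3$ (condition~4) becomes the common expression $(\tilde{\mu}_{12}\tilde{\mu}_{23} - 1) / \bigl((1 + \tilde{\mu}_{12})(1 + \tilde{\mu}_{23})\bigr)$ on each side, so it is automatic. Finally, the consistency implication $x_s^k > x_{s'}^{k'} \Rightarrow v_s^k \leq v_{s'}^{k'}$ (the second half of condition~1) and the perturbation bound (condition~5) involve only $v_s^k$ and $\varepsilon_s^k$, which the change of variables leaves fixed, so they pass verbatim to \eqref{eq:rSEU-con} and condition~5 of Proposition~\ref{prop:p-seu_2}.

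I do not expect a substantive obstacle; the work is bookkeeping. The one point that warrants care is that conditions~2--4 of Proposition~\ref{prop:p-seu_1} collapse into the lone inequality $e$-MON once the reparametrization is in force: the two normalizations and the equality constraint are absorbed automatically, and the two separate monotonicity inequalities coincide. Verifying that this collapse loses no information---so that a solution of the reduced system genuinely reconstructs a solution of the full one via the inverse map---is the crux, and it follows directly from the computations above.
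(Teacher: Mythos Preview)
Your proposal is correct and follows exactly the approach the paper intends: the paper does not spell out a separate proof for Proposition~\ref{prop:p-seu_2}, but its placement immediately after Proposition~\ref{prop:p-seu_1} signals that the same reparametrization used in the exact-case proposition of Section~\ref{appendix:implementation_exact} carries over verbatim, with the $\varepsilon_s^k$ untouched---precisely what you do. Your observation that conditions~2--4 of Proposition~\ref{prop:p-seu_1} collapse to the single inequality $e$-MON, and that the inverse map $\tilde{\mu}_{23}=\mu_{23}/\mu_1$, $\tilde{\mu}_{12}=\mu_{12}/\mu_3$ recovers a full solution, is the only point requiring care, and you handle it correctly.
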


\begin{definition} \normalfont 
Given a dataset $(x^k, p^k)_{k=1}^K$, {\it minimal~$e$} for SEU, $e_*^\text{SEU}$, is the solution to the following problem: 
\begin{equation}
\min_{(\tilde{\mu}_s, v_s^k, \tilde{\lambda}^k, \varepsilon_s^k)_{k, s}} \max_{k, l \in K, s, t \in S} \frac{\varepsilon_s^k / \varepsilon_t^k}{\varepsilon_s^l / \varepsilon_t^l} 
\tag{$\bigstar$}
\label{eq:e_problem}
\end{equation}
subject to \eqref{eq:rSEU-foc-1}, \eqref{eq:rSEU-foc-2}, \eqref{eq:rSEU-con}, and ($e$-MON).  
\end{definition}

\begin{remark} \normalfont 
Notice that in the objective function of the problem~\eqref{eq:e_problem}, 
\begin{equation*}
\frac{\varepsilon_s^k / \varepsilon_t^k}{\varepsilon_s^l / \varepsilon_t^l} , 
\end{equation*}
two states $s, t$ are fixed and observations $k, l$ are different. 
In our experimental setup, it means that either $k, l \in K_1$ or $k, l \in K_2$. 
\end{remark}

\begin{remark} \normalfont 
\label{remark:log-linear_obj}
By log-linearizing and substituting equality constraints in the objective function in the problem~\eqref{eq:e_problem}, we obtain 
\begin{equation*}
\begin{aligned}
\log \left( \frac{\varepsilon_s^k / \varepsilon_t^k}{\varepsilon_s^l / \varepsilon_t^l} \right) 
&= \log \varepsilon_s^k - \log \varepsilon_t^k - \log \varepsilon_s^l + \log \varepsilon_t^l \\
&= (\log \tilde{\mu}_s + \log v_s^k - \log \lambda^k - \log p_s^k) - (\log \tilde{\mu}_t + \log v_t^k - \log \lambda^k - \log p_t^k) \\
&\phantom{==} - (\log \tilde{\mu}_s + \log v_s^l - \log \lambda^l - \log p_s^l) + (\log \tilde{\mu}_t + \log v_t^l - \log \lambda^l - \log p_t^l) \\
&= (\log \tilde{\mu}_s + \log v_s^k - \log p_s^k) - (\log \tilde{\mu}_t + \log v_t^k - \log p_t^k) \\
&\phantom{==} - (\log \tilde{\mu}_s + \log v_s^l - \log p_s^l) + (\log \tilde{\mu}_t + \log v_t^l - \log p_t^l) \\
&= (\log v_s^k - \log p_s^k) - (\log v_t^k - \log p_t^k) - (\log v_s^l - \log p_s^l) + (\log v_t^l - \log p_t^l) . 
\end{aligned}
\end{equation*}
\end{remark}

\clearpage 
\section{Supplementary Figures and Tables} 
\label{appendix:additional_results} 

\subsection{Laboratory Data} 
\label{appendix:additional_lab} 

\begin{figure}[!h]
\centering 
\includegraphics[width=0.6\textwidth]{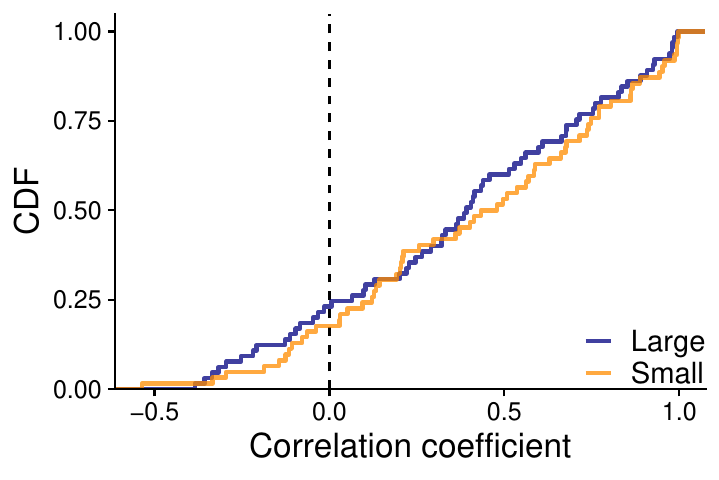}
\caption{Empirical CDF of within-subject (Spearman's) correlation between allocations in the market-stock task and the market-Ellsberg task. The two distributions are not significantly different (two-sample Kolmogorov-Smirnov test, $p = 0.91$).}
\label{fig:uci_allocation_corr_by_var}
\end{figure}

\begin{figure}[!h]
\centering 
\includegraphics[width=0.8\textwidth]{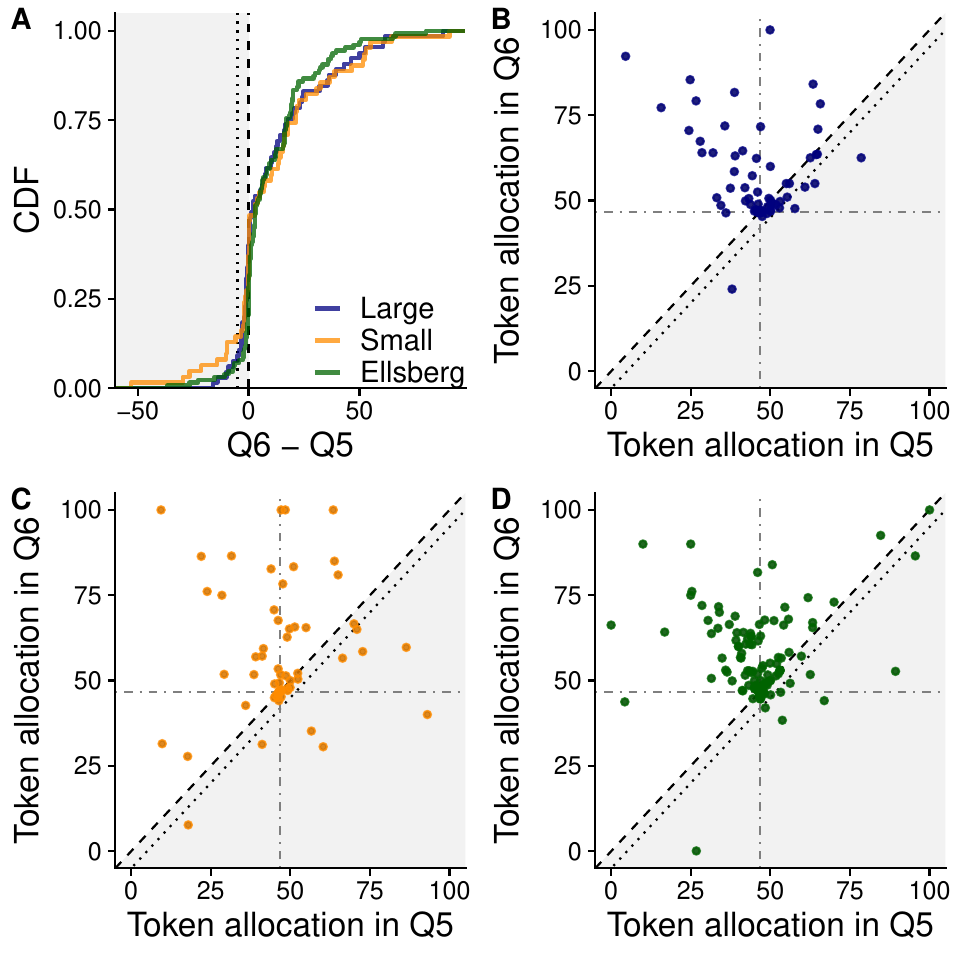}
\caption{Event monotonicity. (A) Empirical CDFs of token allocation difference. The dotted line represents a 5-token margin. No two pairs of distributions is significantly different (two-sample Kolmogorov-Smirnov test). (B-D) Token allocations in two questions. The dot-dashed lines at $46.67$ indicate the number of tokens which equalizes payouts in two events.}
\label{fig:uci_monotonicity}
\end{figure}

\begin{figure}[!h]
\centering 
\includegraphics[width=\textwidth]{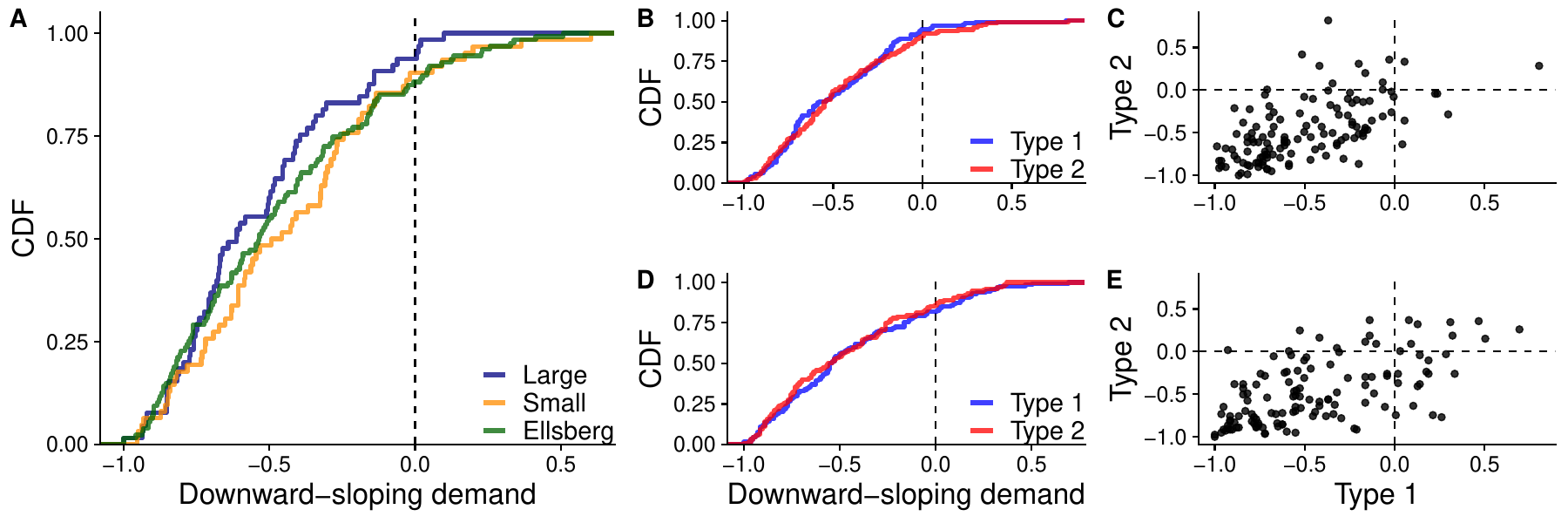}
\caption{Downward-sloping demand at the individual level (measured by $\rho^\text{dsd}$). (A) Empirical CDFs comparing across tasks. (BC) Comparing two question types in market-stock. (DE) Comparing two question types in market-Ellsberg.}
\label{fig:uci_dsd_individual}
\end{figure}

\begin{table}[!p]
\centering 
\caption{Pass rates. (C.f. Table~\ref{table:uci_pass_rates}.)}
\label{table:uci_pass_rates_all}
\begin{tabular}{l rrr rrr}
\toprule 
 & \multicolumn{3}{c}{GARP} & \multicolumn{3}{c}{PS} \\
 \cmidrule(lr){2-4} \cmidrule(lr){5-7} 
Task & Type~1 & Type~2 & Joint & Type~1 & Type~2 & Joint \\
\midrule 
Market-stock & 0.7638 & 0.6850 & 0.5827 & 0.7323 & 0.8110 & 0.4803 \\ 
Market-Ellsberg & 0.8268 & 0.7480 & 0.6693 & 0.8110 & 0.8346 & 0.6220 \\ 
\bottomrule 
\end{tabular}

\begin{tabular}{l rrr rrr}
\toprule 
 & \multicolumn{3}{c}{SEU} & \multicolumn{3}{c}{MEU} \\
 \cmidrule(lr){2-4} \cmidrule(lr){5-7}
Task & Type~1 & Type~2 & Joint & Type~1 & Type~2 & Joint \\
\midrule 
Market-stock & 0.0472 & 0.0157 & 0.0000 & 0.0472 & 0.0157 & 0.0000 \\ 
Market-Ellsberg & 0.0787 & 0.0315 & 0.0157 & 0.0787 & 0.0315 & 0.0157 \\ 
\bottomrule 
\end{tabular}
\caption*{\footnotesize {\em Note}: A subject satisfies GARP ``jointly'' if the subject passes GARP for both types. A subject is not inconsistent with PS ``jointly'' if the subject is not inconsistent with PS in the sense of Epstein for both types, and satisfies event monotonicity. Since \citeSOMpapersapos{epstein2000probabilities} condition is only necessary for probabilistic sophistication, the numbers reported here capture the fraction of the subjects who are not inconsistent with probabilistic sophistication.} 
\end{table}

\begin{figure}[!h]
\centering 
\includegraphics[width=\textwidth]{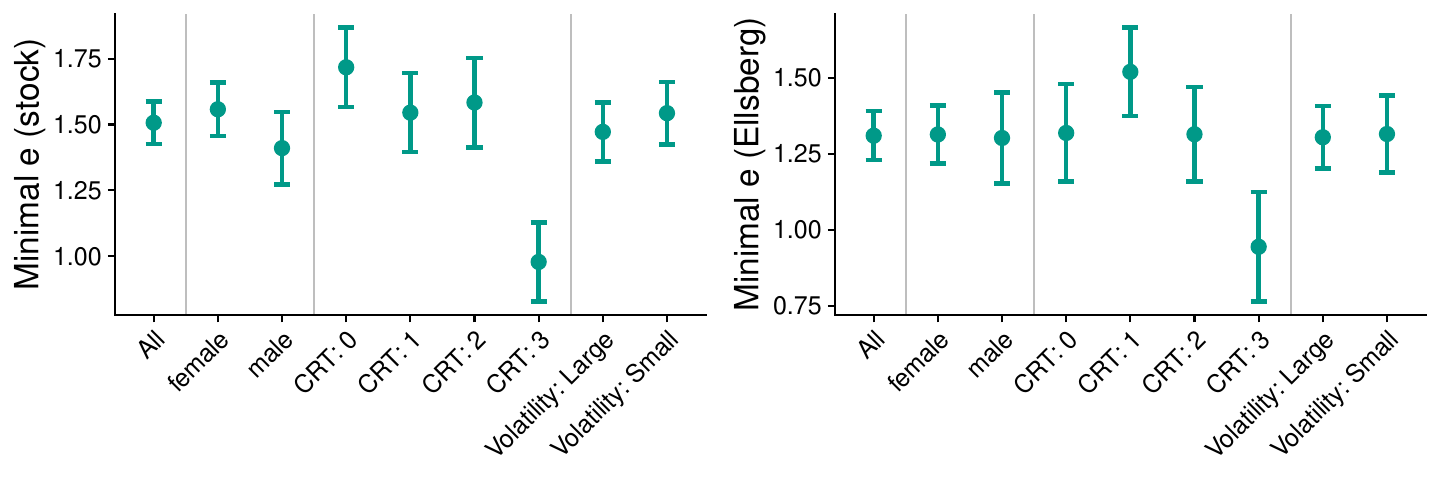}
\caption{Average $e_*$ for SEU in each group of subjects. {\em Notes}: Bars indicate standard errors of means.}
\label{fig:uci_e_demographics}
\end{figure}

\clearpage 
\subsection{Panel Data} 
\label{appendix:additional_panel} 

\begin{figure}[!h]
\centering 
\includegraphics[width=0.8\textwidth]{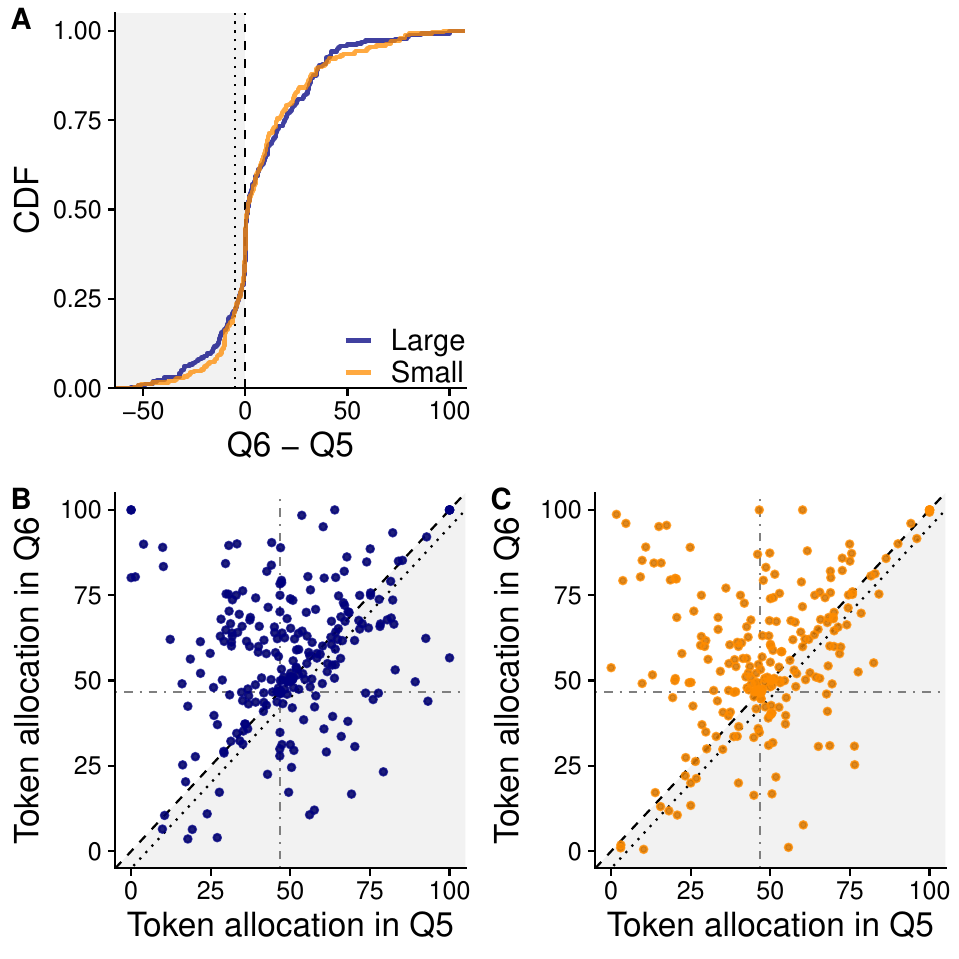}
\caption{Event monotonicity. (A) Empirical CDFs of token allocation difference. The dotted line represents a 5-token margin. No two pairs of distributions are significantly different (two-sample Kolmogorov-Smirnov test). (B-C) Token allocations in two questions. The dot-dashed lines at $46.67$ indicate the number of tokens which equalizes payouts in two events.}
\label{fig:uas_monotonicity}
\end{figure}

\begin{figure}[!h]
\centering 
\includegraphics[width=0.6\textwidth]{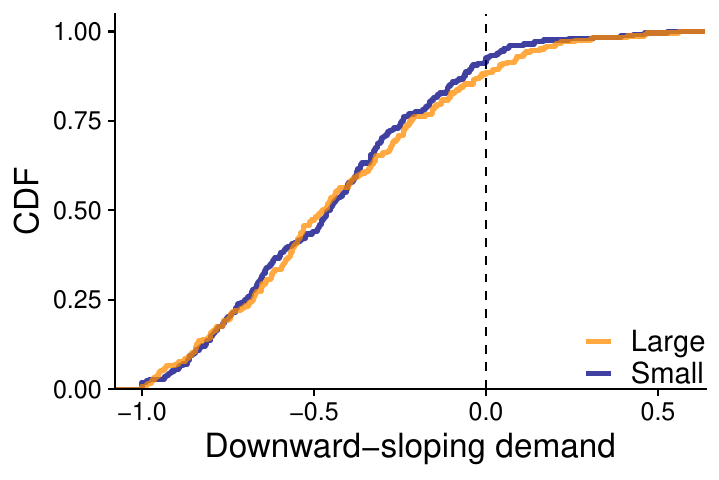}
\caption{Downward-sloping demand at the individual level (measured by $\rho^\text{dsd}$).}
\label{fig:uas_dsd_individual}
\end{figure}

\begin{figure}[!h]
\centering 
\includegraphics[width=0.8\textwidth]{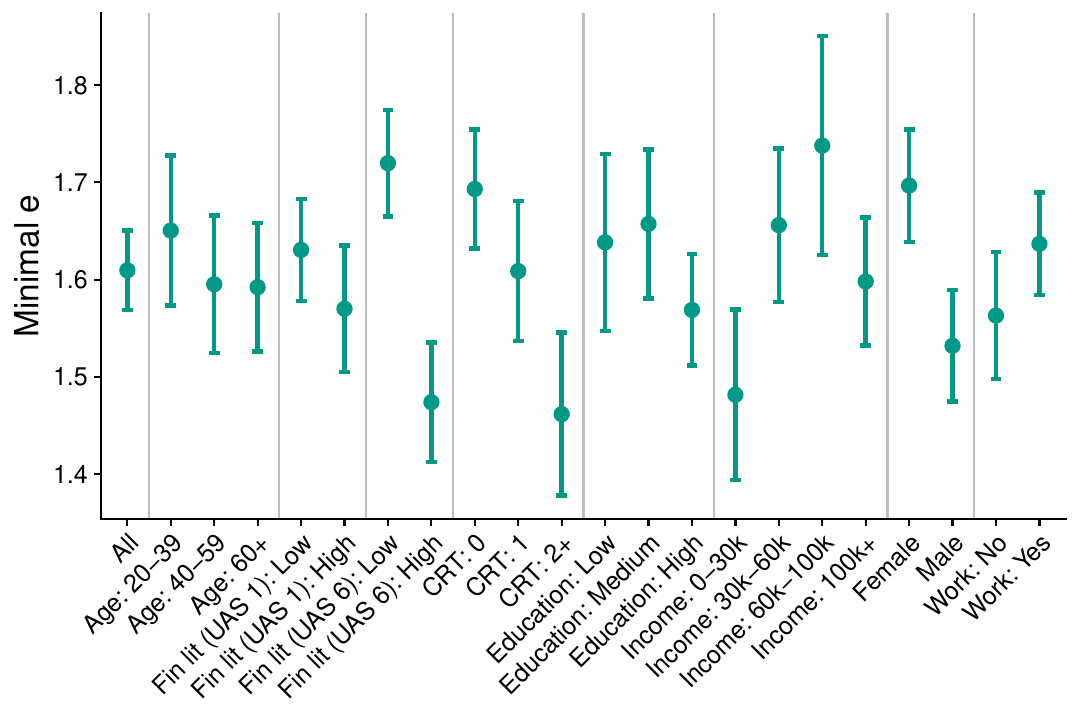}
\caption{Average $e_*$ for SEU in each group of subjects. {\em Notes}: Bars indicate standard errors of means.}
\label{fig:uas_e_demographics}
\end{figure}

\begin{table}[!h]
\centering 
\caption{Pass rates. (C.f. Table~\ref{table:uas_pass_rates}.)}
\label{table:uas_pass_rates_all}
\begin{tabular}{ll rrr rrr rrr rrr}
\toprule 
 & & \multicolumn{3}{c}{GARP} & \multicolumn{3}{c}{PS} \\
 \cmidrule(lr){3-5} \cmidrule(lr){6-8}
Treatment & $N$ & Type~1 & Type~2 & Joint & Type~1 & Type~2 & Joint \\
\midrule 
Large volatility & 245 & 0.6653 & 0.5918 & 0.4367 & 0.6776 & 0.8041 & 0.3959 \\ 
Small volatility & 256 & 0.6523 & 0.6016 & 0.4492 & 0.6680 & 0.7812 & 0.3945 \\ 
\midrule 
Combined & 501 & 0.6587 & 0.5968 & 0.4431 & 0.6727 & 0.7924 & 0.3952 \\ 
\bottomrule
\end{tabular}

\begin{tabular}{ll rrr rrr rrr rrr}
\toprule 
 & & \multicolumn{3}{c}{SEU} & \multicolumn{3}{c}{MEU} \\
 \cmidrule(lr){3-5} \cmidrule(lr){6-8}
Treatment & $N$ & Type~1 & Type~2 & Joint & Type~1 & Type~2 & Joint \\
\midrule 
Large volatility & 245 & 0.0735 & 0.0490 & 0.0122 & 0.0735 & 0.0490 & 0.0122 \\ 
Small volatility & 256 & 0.0508 & 0.0547 & 0.0234 & 0.0508 & 0.0547 & 0.0273 \\ 
\midrule 
Combined & 501 & 0.0619 & 0.0519 & 0.0180 & 0.0619 & 0.0519 & 0.0200 \\ 
\bottomrule
\end{tabular}
\caption*{\footnotesize {\em Note}: A subject satisfies GARP ``jointly'' if the subject passes GARP for both types. A subject is not inconsistent with PS ``jointly'' if the subject is not inconsistent with PS in the sense of Epstein for both types, and satisfies event monotonicity. Since \citeSOMpapersapos{epstein2000probabilities} condition is only necessary for probabilistic sophistication, the numbers reported here capture the fraction of the subjects who are not inconsistent with probabilistic sophistication.} 
\end{table}

\begin{table}[!t]
\centering
\caption{Sociodemographic information. Treatment assignments (small volatility or large volatility) are balanced.}
\label{table:uas_sociodemographic_by_treatment}
\begin{tabular}{l ccc r}
\toprule 
 & & \multicolumn{2}{c}{Treatment} \\
\cmidrule(l){3-4}
Variable & All & Small volatility & Large volatility & Test \\
\midrule 
\emph{Gender} \\
\hspace{3mm} Female & 0.496 & 0.487 & 0.505 & $\chi^2 (1) = 0.19$ \\ 
\hspace{3mm} Male & 0.504 & 0.513 & 0.495 & $p = 0.6666$ \\ 
\emph{Age group} \\
\hspace{3mm} 20-39 & 0.319 & 0.317 & 0.322 &  \\ 
\hspace{3mm} 40-59 & 0.353 & 0.366 & 0.340 & $\chi^2 (2) = 0.62$ \\ 
\hspace{3mm} 60+ & 0.327 & 0.317 & 0.338 & $p = 0.7342$ \\ 
\emph{Education level} \\
\hspace{3mm} Less than high school & 0.258 & 0.256 & 0.261 &  \\ 
\hspace{3mm} Some college & 0.219 & 0.235 & 0.202 &  \\ 
\hspace{3mm} Assoc./professional degree & 0.187 & 0.189 & 0.186 & $\chi^2 (3) = 1.51$ \\ 
\hspace{3mm} College or post-graduate & 0.336 & 0.320 & 0.351 & $p = 0.6806$ \\ 
\emph{Household annual income} \\
\hspace{3mm} -- \$25k & 0.211 & 0.247 & 0.173 &  \\ 
\hspace{3mm} \phantom{-- }\$25k -- \$50k & 0.258 & 0.245 & 0.271 &  \\ 
\hspace{3mm} \phantom{-- }\$50k -- \$75k & 0.202 & 0.196 & 0.207 &  \\ 
\hspace{3mm} \phantom{-- }\$75k -- \$150k & 0.262 & 0.240 & 0.285 & $\chi^2 (4) = 7.34$ \\ 
\hspace{3mm} \phantom{-- }\$150k -- & 0.068 & 0.072 & 0.064 & $p = 0.1188$ \\ 
\emph{Occupation type} \\
\hspace{3mm} Full-time & 0.497 & 0.474 & 0.521 & $\chi^2 (2) = 2.45$ \\ 
\hspace{3mm} Part-time & 0.102 & 0.098 & 0.106 & $p = 0.2935$ \\ 
\hspace{3mm} Not working & 0.401 & 0.428 & 0.372 &  \\ 
\emph{Marital status} \\
\hspace{3mm} Married/live with partner & 0.690 & 0.673 & 0.707 & $\chi^2 (1) = 0.92$ \\ 
\hspace{3mm} Other & 0.310 & 0.327 & 0.293 & $p = 0.3369$ \\ 
\midrule 
\# of observations in the sample & 764 & 388 & 376 \\
\bottomrule
\end{tabular}
\end{table}

\clearpage 
\subsection{Sample Comparison in the Panel Study}
\label{appendix:additional_panel_sample_comparison}

\begin{figure}[h]
\centering 
\includegraphics[width=0.8\textwidth]{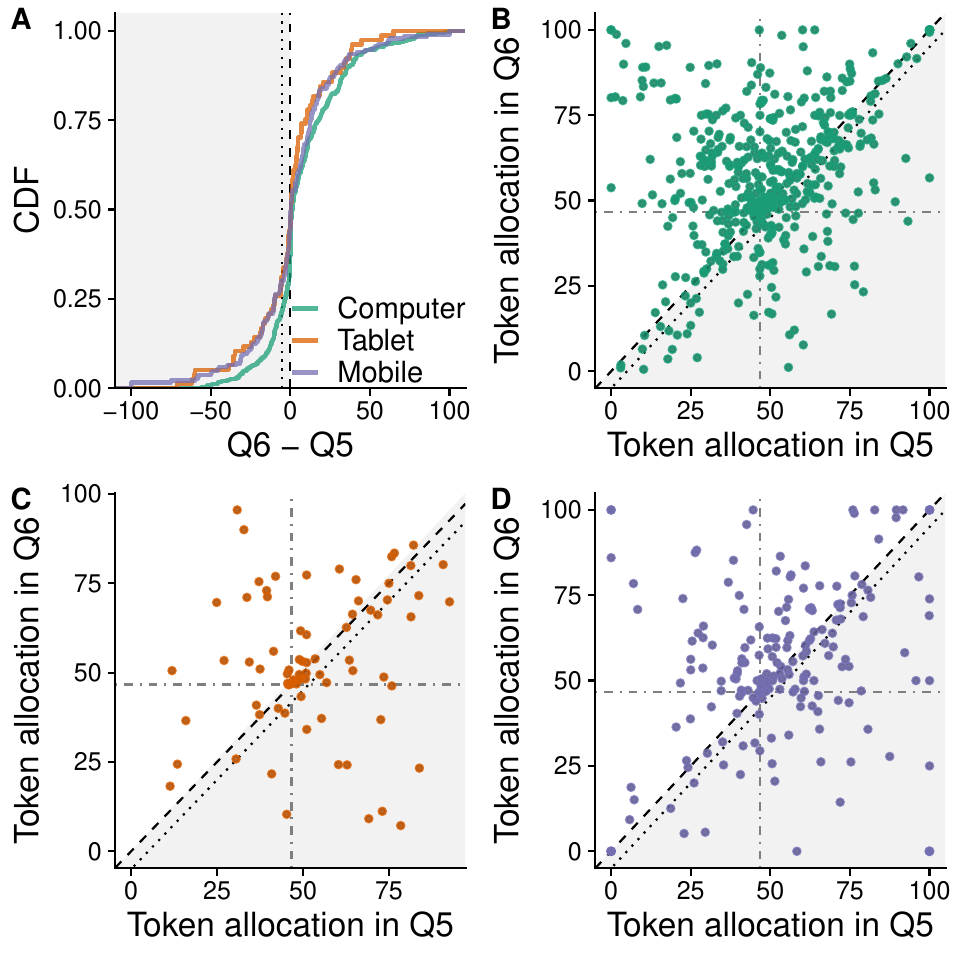}
\caption{(Panel) Event monotonicity. (A) Empirical CDFs of token allocation difference. The dotted line represents 5-token margin. Two-sample Kolmogorov-Smirnov test $p$-values: Computer vs. Tablet, $p = 0.116$; Computer vs. Mobile, $p = 0.044$; Tablet vs. Mobile, $p = 0.575$. (B-D) Token allocations in two questions. The dot-dashed lines at $46.67$ indicate the number of tokens which equalizes payouts in two events.}
\label{fig:uas_monotonicity_by_device}
\end{figure}

\begin{figure}[h]
\centering 
\includegraphics[width=\textwidth]{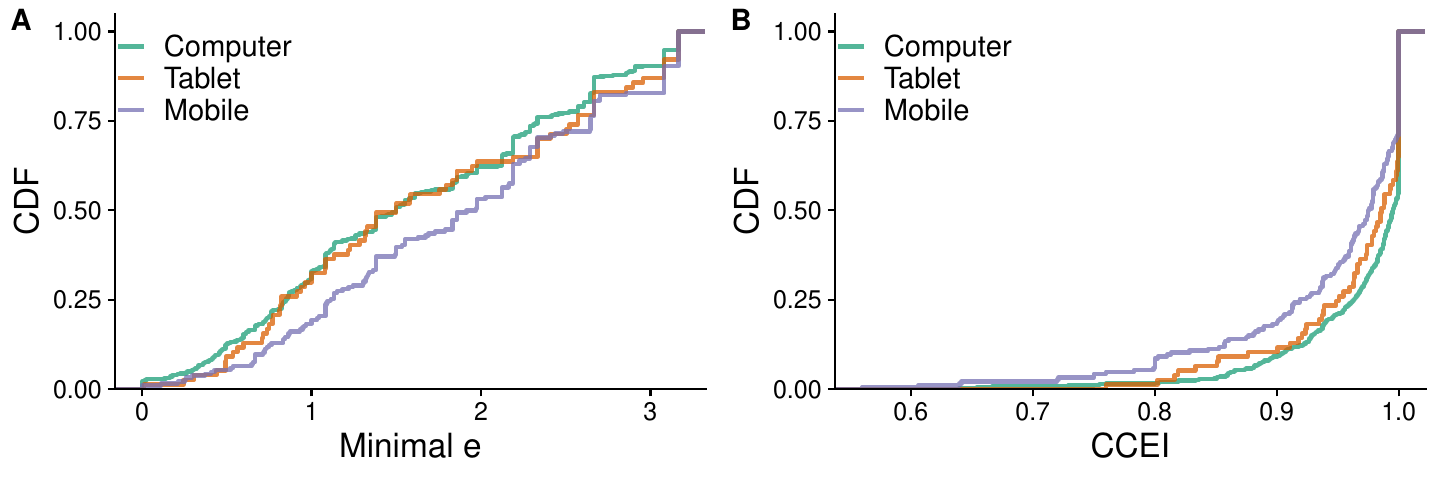}
\caption{Distribution of measures, by subjects using computer,tablet, or mobile. Two-sample Kolmogorov-Smirnov tests between the ``computer'' sample and the ``mobile'' sample: $p = 0.007$ for $e_*$ and $p < 0.001$ for CCEI.}
\label{fig:uas_measures_by_device}
\end{figure}

\begin{figure}[h]
\centering 
\includegraphics[width=0.9\textwidth]{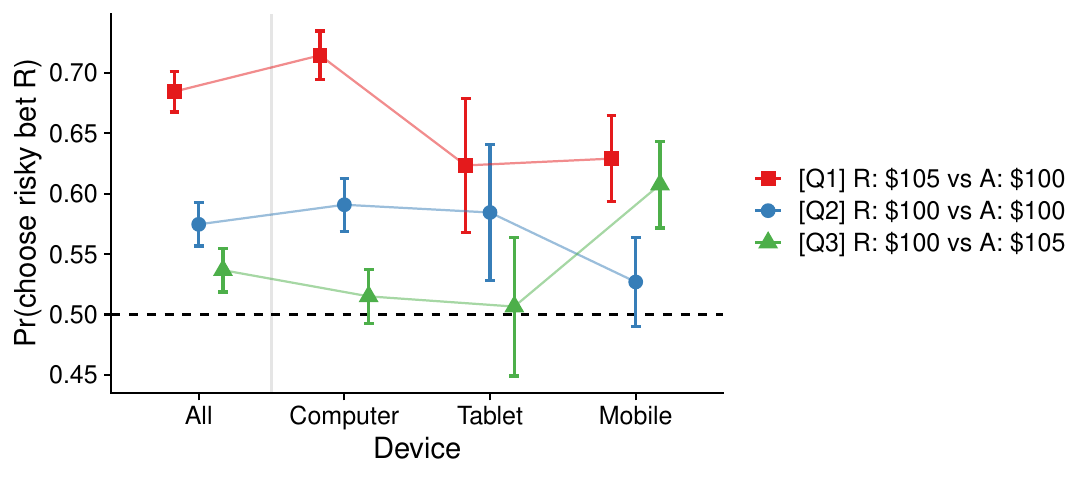}
\caption{(Panel) Probability of choosing a risky bet in each question in the standard-Ellsberg task, by subjects using computer, tablet, or mobile.}
\label{fig:uas_ellsberg_choice_freq_by_device}
\end{figure}

\clearpage 
\section{Power Calculation}
\label{appendix:power_calc}

It is well known that tests in revealed preference theory can have low power when used on certain configurations of budget sets. As a result, it is common to assess the power of a test by comparing the pass rates (the fraction of choices that pass the relevant revealed preference axiom) of the observed choice data from some benchmark behavior such as purely random choices.~\footnote{The idea of using random choices as a benchmark is first applied to revealed preference theory by \citeSOMpapers{bronars1987}. This approach is the most popular in empirical application: see, among others, \citeSOMpapers{andreoni2002}, \citeSOMpapers{fisman2007}, \citeSOMpapers{choi2007}, \citeSOMpapers{crawford2010habits}, \citeSOMpapers{beatty2011}, \citeSOMpapers{adams2014}, and \citeSOMpapers{dean2016}. For overview of power calculation, see discussion in \citeSOMpapers{andreoni2013power} and \citeSOMpapers{crawford2014empirical}.} 

We assess the power of the tests using two kinds of data-generating processes. In the first benchmark, we use the simple bootstrap procedure to look at the power from an ex-post perspective \citepSOMpapers{andreoni2002}. More precisely, for each budget set, we randomly pick one choice from the set of choices observed in the experiment. We repeat this to generate 10,000 synthetic choice data. In the second benchmark, we generate 10,000 datasets in which choices are made at random and uniformly distributed on the frontier of the budget set \citepSOMpapers[Method 1 of][]{bronars1987}. Table~\ref{table:test_power} report pass rates. The simulated choices almost always violate SEU and MEU. Pass rates for GARP test range from 0.23 to 0.68, depending on the underlying data-generating process. These numbers are higher than those reported in other studies \citep[e.g.,][]{choi2007,choi2014who}, but given that each type of problem has only 10 budgets, the configuration of budgets in our design has reasonable power to detect GARP violations.

\begin{sidewaystable}
\centering 
\caption{Pass rates under different data generating process. Simulation sample size $N = 10,000$.}
\label{table:test_power}
\resizebox{\textwidth}{!}{%
\begin{tabular}{ll rrr rrr rrr rrr}
\toprule 
 & & \multicolumn{3}{c}{GARP} & \multicolumn{3}{c}{PS} & \multicolumn{3}{c}{SEU} & \multicolumn{3}{c}{MEU} \\
 \cmidrule(lr){3-5} \cmidrule(lr){6-8} \cmidrule(lr){9-11} \cmidrule(lr){12-14} 
Data-generating process & Device & Type~1 & Type~2 & Joint & Type~1 & Type~2 & Joint & Type~1 & Type~2 & Joint & Type~1 & Type~2 & Joint \\
\midrule 
Bootstrap: Laboratory \\
\hspace{3mm} Market-stock (Large) & & 0.4591 & 0.5577 & 0.2562 & 0.5762 & 0.8742 & 0.3913 & 0.0007 & 0.0001 & 0.0000 & 0.0011 & 0.0001 & 0.0000 \\ 
\hspace{3mm} Market-stock (Small) & & 0.4769 & 0.4005 & 0.1934 & 0.6689 & 0.7269 & 0.4003 & 0.0002 & 0.0000 & 0.0000 & 0.0002 & 0.0000 & 0.0000 \\ 
\hspace{3mm} Market-Ellsberg & & 0.4712 & 0.5365 & 0.2560 & 0.7429 & 0.8373 & 0.5196 & 0.0002 & 0.0002 & 0.0000 & 0.0002 & 0.0004 & 0.0000 \\ 
Bootstrap: Panel \\
\hspace{3mm} Market-stock (Large) & All & 0.3642 & 0.3162 & 0.1172 & 0.4894 & 0.7299 & 0.2545 & 0.0001 & 0.0000 & 0.0000 & 0.0001 & 0.0000 & 0.0000 \\ 
\hspace{3mm} from Market-stock (Small) & All & 0.3184 & 0.2842 & 0.0881 & 0.4857 & 0.7136 & 0.2521 & 0.0001 & 0.0000 & 0.0000 & 0.0001 & 0.0000 & 0.0000 \\ 
\hspace{3mm} Market-stock (Large) & Desktop/laptop & 0.3940 & 0.3533 & 0.1389 & 0.4878 & 0.7319 & 0.2661 & 0.0002 & 0.0000 & 0.0000 & 0.0002 & 0.0000 & 0.0000 \\ 
\hspace{3mm} Market-stock (Small) & Desktop/laptop & 0.3314 & 0.2908 & 0.0967 & 0.5106 & 0.7245 & 0.2789 & 0.0001 & 0.0001 & 0.0000 & 0.0001 & 0.0001 & 0.0000 \\ 
\midrule 
Uniform random & & 0.2270 & 0.1440 & 0.0332 & 0.5002 & 0.5350 & 0.1925 & 0.0000 & 0.0000 & 0.0000 & 0.0000 & 0.0000 & 0.0000 \\
\bottomrule 
\end{tabular}
}
\end{sidewaystable}

\begin{table}[h]
\centering 
\caption{CCEI and $e^*$ calculated with randomly generated choice data. Simulation sample size $N = 10,000$.}
\label{table:test_power_distance}
\begin{tabular}{l rrr rrr rrr}
\toprule 
 & \multicolumn{3}{c}{CCEI} & \multicolumn{3}{c}{$e^* (\text{SEU})$} & \multicolumn{3}{c}{$e^* (\text{MEU})$} \\
 \cmidrule(lr){2-4} \cmidrule(lr){5-7} \cmidrule(lr){8-10} 
 & Type~1 & Type~2 & Joint & Type~1 & Type~2 & Joint & Type~1 & Type~2 & Joint \\
\midrule 
Mean & 0.9123 & 0.9194 & 0.8716 & 1.2476 & 1.3770 & 1.7121 & 1.2217 & 1.3770 & 1.7012 \\ 
Median & 0.9256 & 0.9436 & 0.8841 & 1.0833 & 1.5510 & 1.7639 & 1.0804 & 1.5510 & 1.7580 \\ 
SD & 0.0829 & 0.0817 & 0.0855 & 0.5968 & 0.6523 & 0.5293 & 0.6060 & 0.6523 & 0.5410 \\
\bottomrule
\end{tabular}
\end{table}

\clearpage 
\section{Design Detail}
\label{appendix:design_detail}

\subsection{The Set of Budgets}
\label{appendix:design_detail_budgets}

\begin{table}[h]
\centering 
\caption{The set of 20 budgets. The numbers indicate ``exchange value'' for each account $(z_1, z_2)$.}
\label{table:set_budgets}
\begin{tabular}{rclcccc}
\toprule 
 & & & \multicolumn{2}{c}{Lab} & \multicolumn{2}{c}{Panel} \\
 \cmidrule(lr){4-5} \cmidrule(lr){6-7}
 & Type & Order & Account 1 & Account 2 & Account 1 & Account 2 \\
\midrule 
1 & 1 & random & 0.30 & 0.18 & 3.0 & 1.8 \\
2 & 1 & random & 0.30 & 0.24 & 3.0 & 2.4 \\
3 & 1 & random & 0.38 & 0.30 & 3.8 & 3.0 \\
4 & 1 & random & 0.40 & 0.40 & 4.0 & 4.0 \\
5 & 1 & random & 0.50 & 0.12 & 5.0 & 1.2 \\
6 & 1 & random & 0.50 & 0.24 & 5.0 & 2.4 \\
7 & 1 & random & 0.50 & 0.34 & 5.0 & 3.4 \\
8 & 1 & random & 0.50 & 0.44 & 5.0 & 4.4 \\
9 & 1 & random & 0.60 & 0.30 & 6.0 & 3.0 \\
10 & 1 & fixed (5$^\text{th}$) & 0.32 & 0.28 & 3.2 & 2.8 \\
11 & 2 & random & 0.14 & 0.50 & 1.4 & 5.0 \\
12 & 2 & random & 0.24 & 0.50 & 2.4 & 5.0 \\
13 & 2 & random & 0.28 & 0.32 & 2.8 & 3.2 \\
14 & 2 & random & 0.30 & 0.36 & 3.0 & 3.6 \\
15 & 2 & random & 0.30 & 0.42 & 3.0 & 4.2 \\
16 & 2 & random & 0.30 & 0.56 & 3.0 & 5.6 \\
17 & 2 & random & 0.38 & 0.52 & 3.8 & 5.2 \\
18 & 2 & random & 0.40 & 0.50 & 4.0 & 5.0 \\
19 & 2 & random & 0.50 & 0.56 & 5.0 & 5.6 \\
20 & 2 & fixed (6$^\text{th}$) & 0.32 & 0.28 & 3.2 & 2.8 \\
\bottomrule 
\end{tabular}
\end{table}

\clearpage 
\subsection{Simulated Price Paths for the Market-Stock Task}
\label{appendix:design_detail_price_path}

In order to simulate price paths, we use a Geometric Brownian Motion (GBM): 
\begin{equation*}
dS_t = \mu S_t dt + \sigma S_t dW_t , 
\end{equation*}
where $W_t$ is a Wiener process, $\mu$ is a drift parameter, and $\sigma$ is a volatility parameter. To simulate trajectories of GBM, we calculate increments of $S$: 
\begin{equation*}
S_{t+h} = S_t \times \exp \left( (\mu - \sigma^2/2) h + \sigma \sqrt{h} Z \right) , 
\end{equation*}
with $Z \sim N (0, 1)$. 

As a first step, we generated $N$ paths of GBM, where each path $P^n = (P_0^n, P_1^n, \dots, P_T^n)$ has the common starting price $P_0$ and $T$ periods of prices. 
We then group these $N$ paths into several categories, based on several observable features: (i) absolute movement within $T$ periods; (ii) final price is higher than the initial price; (iii) final price is lower than the initial price; (iv) trends such as up-down, down-up, straight-gain, straight-loss, and cycle. 

After visually inspecting the pattern of each price path, we handpicked 28 paths and then asked workers on Amazon Mechanical Turk what they believed the future price of each path would be. We used a ``bins-and-balls'' belief elicitation task (also known as a histogram elicitation method) introduced by \citeSOMpapers{delavande2008eliciting} to elicit subjective belief distribution. The method, later refined by \citepSOMpapers{delavande2011measuring} and \citeSOMpapers{rothchild2012expectations}, is simple and easy to understand. It has been shown to work well in experiments conducted at developing countries \citepSOMpapers{delavande2011measuring} and online survey \citepSOMpapers{breunig2020standard}.

The idea of the task is as follows. First, the (continuous) state of the world (ranges of future prices) is partitioned into 20 disjoint and exhaustive bins. Second, subjects are asked to place 20 ``balls,'' each representing 5\% probability mass, into these bins. The subjects were then asked to express how likely they believed that the price to be in each or the 20 ranges. Figure~\ref{fig:design_bins-and-balls} illustrates the task. 

\begin{figure}[h]
\centering 
\includegraphics[width=0.95\textwidth]{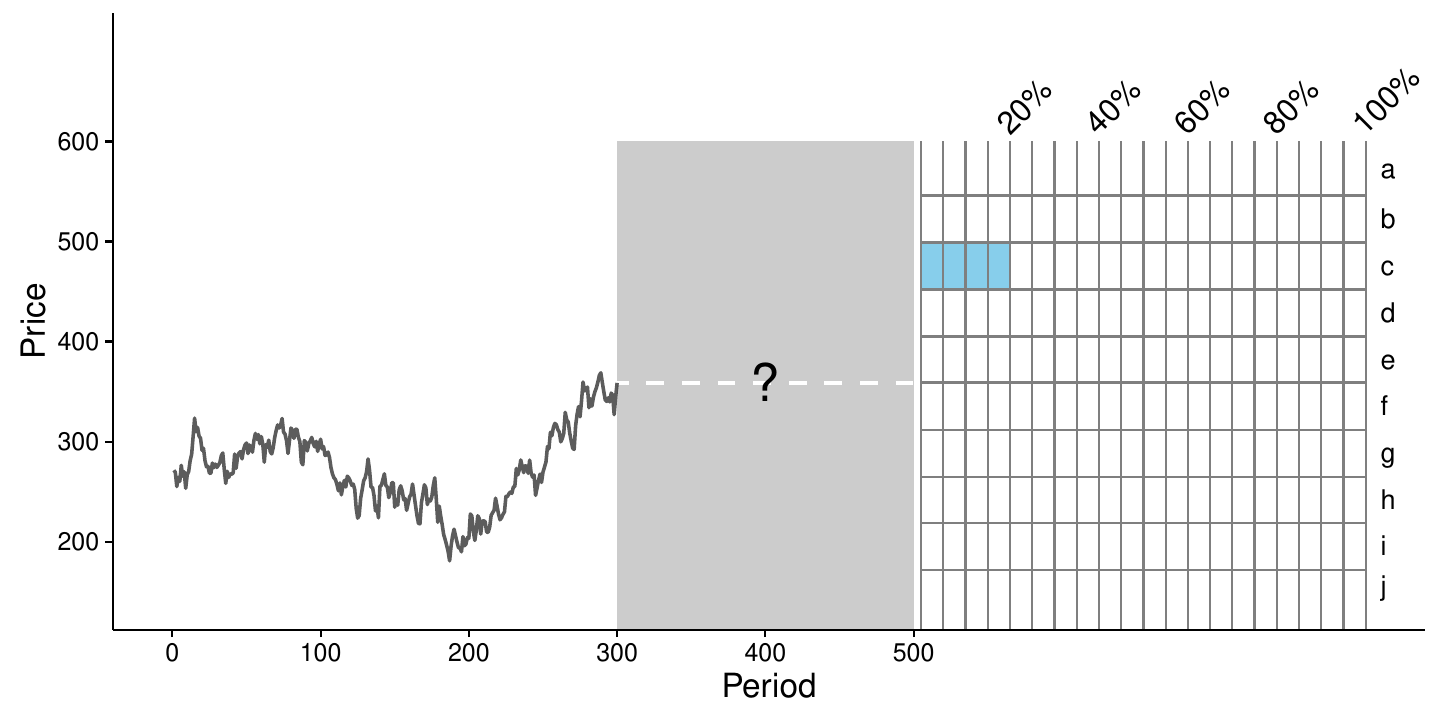}
\begin{tikzpicture}[scale=0.5]
\tikzstyle{vertex}=[circle,minimum size=10pt,inner sep=0pt]
\foreach \name/\x in {1/1, 2/2, 3/3, 4/4, 5/5, 6/6, 7/7, 8/8, 9/9, 10/10, 11/11, 12/12, 13/13, 14/14, 15/15, 16/16}
\node[vertex,fill=blue!25,draw=black] (G-\name) at (\x,0) {};
\foreach \name/\x in {17/17, 18/18, 19/19, 20/20}
\node[vertex,draw=black] (G-\name) at (\x,0) {};
\end{tikzpicture}
\caption{Illustration of the bins-and-balls belief elicitation task.}
\label{fig:design_bins-and-balls}
\end{figure}

The elicited belief distributions were then averaged across subjects. Some price paths, especially those with clear upward or downward trend, tend to be associated with skewed distributions. Others have more symmetric distributions. We thus selected two relatively ``neutral'' ones from the latter set for the main experiment.

\clearpage 
\subsection{Post-Experiment Survey in the Laboratory Study}
\label{appendix:design_detail_post-exp_survey}

\paragraph{Demographic information.} 
\begin{enumerate}
\item What is your age? 
\item What is your gender? 
\item What is your ethnicity? 
\item What is your major? 
\end{enumerate}

\paragraph{Three-item cognitive reflection test.} 
\begin{enumerate}
\item If it takes 5 people 5 months to save a total of \$5,000, how many months would it take 100 people to save a total of \$100,000? 
\item A TV and a radio cost \$110 in total. The TV costs \$100 more than the radio. How much does the radio cost? 
\item In a lake, there is a patch of lily pads. Each day, the patch doubles in size. If it takes 48 days for the patch to cover the entire lake, how long would it take for the patch to cover half of the lake? 
\end{enumerate}

\clearpage 
\section{Instructions for the Experiments} 
\label{appendix:instruction} 

\newcommand{\numTask}{3}
\newcommand{\numSurvey}{2}
\newcommand{\showUpFee}{\$10}
\newcommand{\numPracticeAllocation}{3}
\newcommand{\numQallocationPath}{20}
\newcommand{\pathLengthPast}{300} 
\newcommand{\pathLengthForecast}{200} 
\newcommand{\pathLengthFuture}{500} 
\newcommand{\pathThreshold}{10\%}
\newcommand{\numQallocationBalls}{20}
\newcommand{\numBeliefBox}{20}
\newcommand{\numBeliefGrid}{10}
\newcommand{\percentBeliefBox}{five}

\begin{myparindent}{0pt}
\noindent\makebox[\linewidth]{\rule{\textwidth}{1pt}} 

\vspace{-1.5em}
\section*{Welcome!} 
Thank you for participating in today's experiment. 
\\

Please turn off all electronic devices, especially phones and tablets. 
During the experiment you are {\bf not} allowed to open or use any other applications on these laboratory computers, except for the interface of the experiment. 
\\

This experiment is designed to study decision making. 
You will be paid for your participation in cash privately at the end of the session. 
Please follow the instructions carefully and do not hesitate to ask the experimenter any questions by raising your hand. 
The experimenter will then come to your desk. 

\subsection*{Structure of the experiment} 
The experiment consists of \numTask{} tasks and a survey. 
We will hand out specific instructions for each of the tasks just before you are to perform that task. 
\\

At the front of this laboratory you will see several opaque bags labeled {\bf A}, {\bf B}, and so on, which we will use in some of the tasks during the experiment. 
Each of these bags contains colored chips. 
The exact composition of chips in each bag (for example, how many of them are red) may or may not be announced to you. 
If you wish, you can inspect these bags {\bf after} completing all sections of the experiment. 

\subsection*{Payment}
In order to determine the payment, {\bf one task} and {\bf one question from that task} will be randomly selected. 
Your payoff in the experiment will consist of the amounts you earned in the selected question plus a \showUpFee{} show-up fee if you complete the experiment as announced. 
The specific rules applied to determine payoffs for each section will be described in detail in the instructions for that part. 
\\

To select {\bf one task} and {\bf one question} that will determine your payment, the assistant rolled two fair dice for each participant. 
The assistant wrote down two numbers, one indicating the task and another indicating the question in that particular task that counts for payment. 
The note was placed into a sealed envelope. 
Please write your participant ID on the envelope once you receive it. 
Please do not open the envelope until you are instructed by the experimenter. 
\\

Remember that the question determining your payment is selected before you make any decisions in the experiment. 
This protocol of determining payments suggests that you should make a decision in each question as if it is the question that determines your payment. 

\subsection*{Important rules}
In the experiment we use a web browser. 
It is important that you ... 
\begin{enumerate}
\item do not close or refresh the browser, 
\item do not open other windows/tabs on the browser, 
\item do not exit the full screen mode, and 
\item do not open other applications and programs. 
\end{enumerate}
If you exit the full screen mode, please click the button at the top right corner to enter the full screen mode again. 
\\

Please raise your hand if you have any questions regarding the structure of the experiment. 

\noindent\makebox[\linewidth]{\rule{\textwidth}{1pt}}

\clearpage 
\noindent\makebox[\linewidth]{\rule{\textwidth}{1pt}} 

\vspace{-1.5em}
\section*{Task 1} 

\subsection*{Overview}
In this part of the experiment, you will be asked a total of~\numQallocationPath{} independent questions that share a common structure. 
Your goal is to invest tokens in two different accounts. 
The accounts pay off according to the value of a stock. 
\\

There is a {\bf hypothetical} company which we refer to as {\bf Company X}. 
We simulated a history of stock prices of this company using a model frequently used in financial economics. 
You will be presented a chart plotting the history of Company X's stock prices. 
The figure below is an illustration of such chart. Note that the price history presented in the image is meant to be an example, and is not the same one as you will see in the task. 

\begin{figure}[h]
\centering 
\includegraphics[width=0.5\textwidth]{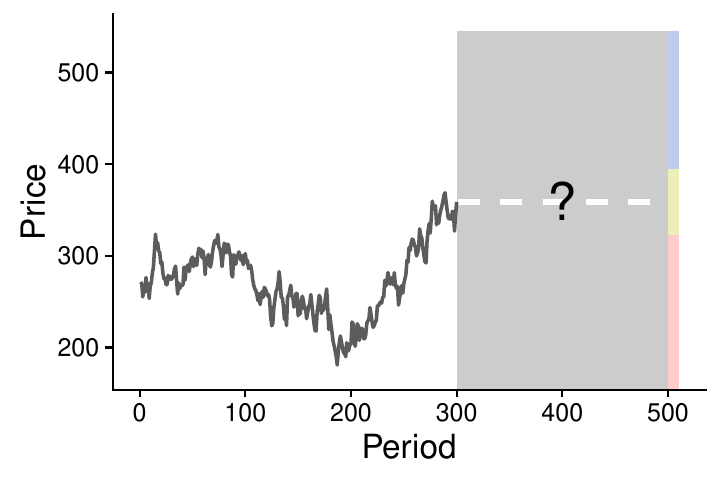}
\end{figure}

The chart shows stock prices from period~1 to~\pathLengthPast{} (imagine that period~\pathLengthPast{} is ``today''). 
You do not know the movement beyond period~\pathLengthPast{}, the area shaded in gray. 
Your payoff in this task depends on the ``future'' value of Company X's stock price at period~\pathLengthFuture{}, i.e., at the end of the chart. 
More precisely, it depends on whether the final price lies in the Blue area (increase by more than the threshold), Yellow area (change up to the threshold), or Red area (decrease by more than the threshold). 
In this example, the threshold is set to \pathThreshold{}. 

\subsection*{How it works}
Now we will explain the task in detail. 
\\

You will be asked a total of~\numQallocationPath{} independent questions. 
In each decision problem, you will be endowed with~100 tokens and asked to choose the portion of this amount (between~0 and~100 tokens, inclusive and divisible) that you wish to allocate between two accounts. 
Tokens allocated to each account may have different monetary values. 
Your payoff in this task will be determined by the following three components: 
\begin{enumerate}[label=(\roman*)]
\item the monetary value of tokens in each account, which is given in the question, 
\item your allocation of tokens in each of the two accounts, and
\item in which colored area Company X's stock price lies at period \pathLengthFuture{}. 
\end{enumerate}

\subsection*{Two types of questions}
There are two types of questions. 
\\

In Type 1 questions, two accounts are 
\begin{table}[h]
\begin{tabular}{lcl}
Account {\bf Blue-or-Yellow} & : & Stock price increases by a positive percentage \\
& & or decreases by at most \pathThreshold{}. \\
Account {\bf Red} & : & Stock price decreases by more than \pathThreshold{}. 
\end{tabular}
\end{table}

In Type 2 questions, two accounts are 
\begin{table}[h]
\begin{tabular}{lcl}
Account {\bf Blue} & : & Stock price increases by more than \pathThreshold{}. \\
Account {\bf Yellow-or-Red} & : & Stock price decreases by any percentage \\
& & or increases at most \pathThreshold{}. 
\end{tabular}
\end{table}

These two types of questions appear in random order. 
To understand the decision problem for the given trial correctly, it would be of your best interest to check the type of the question (1 or~2) on the right of the stock chart and also at the top of the ``allocation table'' which will be explained below. 

\begin{figure}[tp]
\centering 
\begin{subfigure}{0.48\textwidth}
\includegraphics[width=\textwidth]{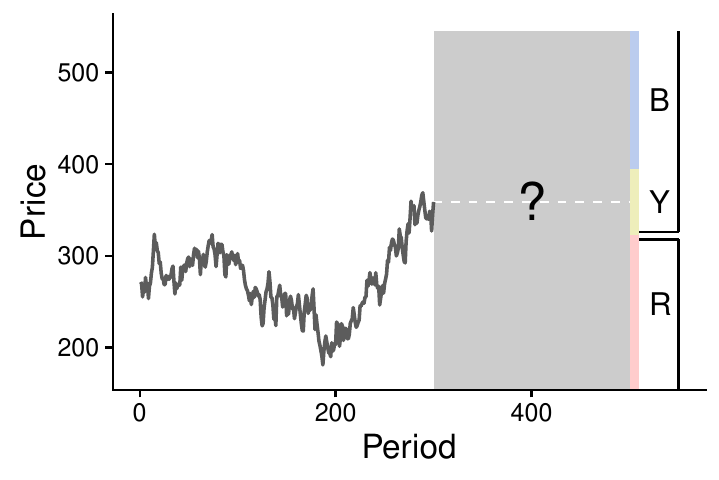}
\subcaption{Example: Type 1 question}
\end{subfigure}
\begin{subfigure}{0.48\textwidth}
\includegraphics[width=\textwidth]{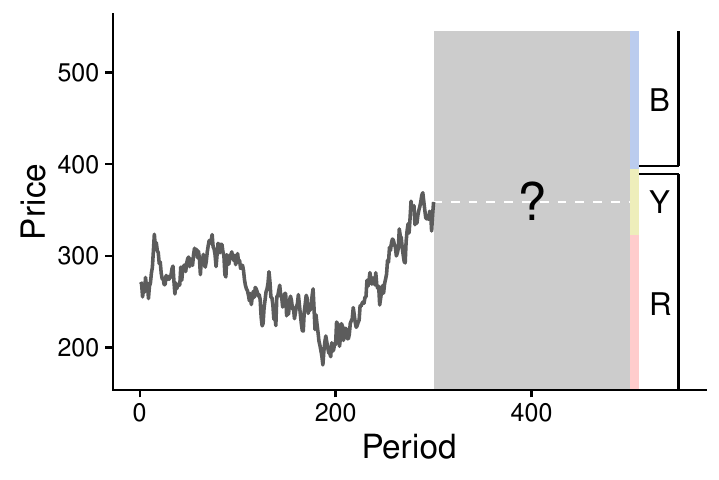}
\subcaption{Example: Type 2 question}
\end{subfigure}
\end{figure}

The ``allocation table'' at the bottom block of the screen shows information on the monetary values of tokens. 
In the example of Type~1 question below, each token you allocate to the {\bf Blue-or-Yellow} account is worth \$0.30 (30 cents), while each token you allocate to the {\bf Red} account is worth \$0.25 (25 cents). 
Notice that monetary values of tokens may change across questions. 

\begin{table}[h]
\newcolumntype{C}{>{\centering\arraybackslash}p{0.1\linewidth}}
\newcolumntype{D}{>{\centering\arraybackslash}p{0.2\linewidth}}
\centering 
\begin{tabular}{l C C p{0.02\linewidth} D}
& \cellcolor{myPaleBlue} B & \cellcolor{myPaleYellow} Y & & \cellcolor{myPaleRed} R \\
Token value & \multicolumn{2}{c}{\$0.30} & & \multicolumn{1}{c}{\$0.25} \\
Tokens & \multicolumn{2}{c}{75} & & \multicolumn{1}{c}{25} \\
Account value & \multicolumn{2}{c}{\$22.50} & & \multicolumn{1}{c}{\$6.25} \\
\end{tabular}
\end{table}

\subsection*{How to make a decision} 
You can allocate 100 tokens between two accounts using the slider. 
The table will be updated instantly once you move the slider, showing current allocations of tokens and their implied payment amounts if the stock ends up in the corresponding color region. 
No cursor appears at the start of the experiment---you need to click anywhere on the slider line to activate it. 
\\

Alternatively, you can allocate tokens by directly putting numbers in one of the boxes, or clicking {\bf up}/{\bf down} arrow (which appears when you mouse-over the box) to make small adjustments. 

\subsection*{How your payoff for this task is determined}
Your payoff is determined by the number of tokens in your two accounts, and the ``future'' value of stock X at the end of the chart (period \pathLengthFuture{}), which will be simulated after you complete all questions. 
\\

Suppose you allocated~75 tokens to account {\bf Blue-or-Yellow} and~25 tokens to account {\bf Red} as in the above example. 
If this question has been chosen to determine your payoff, your payoff will be determined by the price of company X's stock in period~\pathLengthFuture{}. 
If stock X hits blue or yellow area at period \pathLengthFuture{} (as in panel (a) below), then you will earn $75 \times \$0.30 = \$22.50$ (22 dollars 50 cents). 
On the other hand, if stock X hits red area (as in panel (b) below), then you will earn $25 \times \$0.25 = \$6.25$ (6 dollars 25 cents). 
Amount below one cent will be rounded up. 

\begin{figure}[h]
\centering 
\begin{subfigure}{0.48\textwidth}
\includegraphics[width=\textwidth]{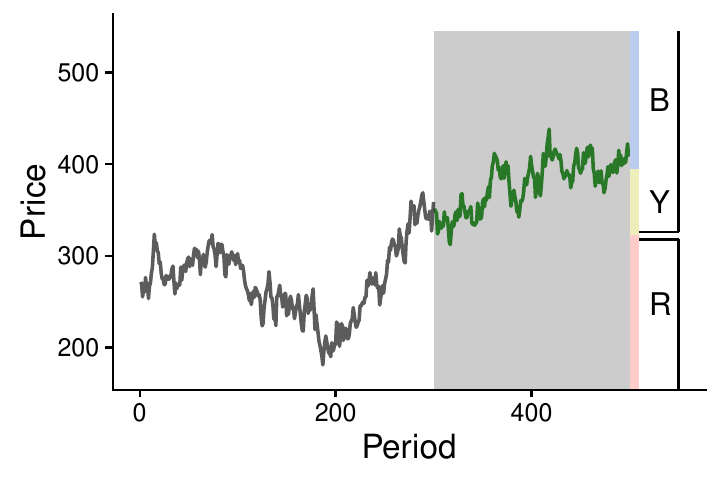}
\subcaption{Example: Stock price hits Blue area}
\end{subfigure}
\begin{subfigure}{0.48\textwidth}
\includegraphics[width=\textwidth]{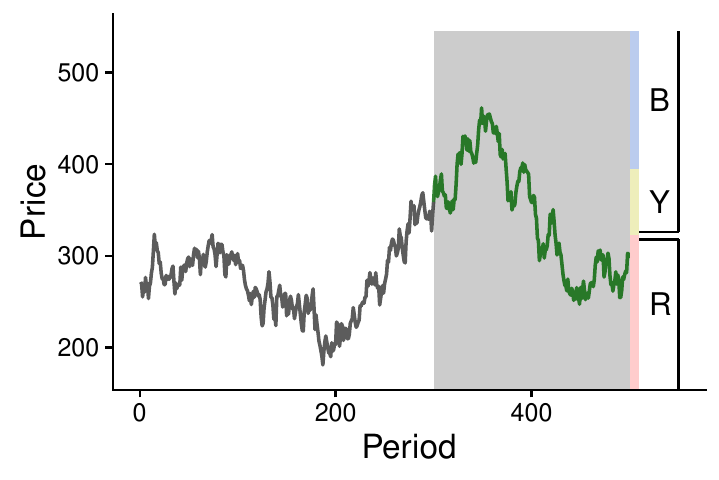}
\subcaption{Example: Stock price hits Red area}
\end{subfigure}
\end{figure}

\subsection*{Important}
\begin{itemize}
\item The history of stock prices of Company X up to period 300 (``today'') is the same throughout this task. 
\item You will not know ``future'' prices (between period~301 and period~500) until you complete all tasks in the experiment. 
\item Token values and question types can vary between questions. 
\item Once you hit the Proceed button, you cannot change your decision. 
You {\bf cannot} go back to previous pages, either. 
Note also that you {\bf cannot} change the question by refreshing the browser once it is displayed. 
\item Remember that the question that will determine your payment has already been selected at the start of the experiment. 
It is your best interest to treat each question as if it is the question that determines you payment. 
\end{itemize}

\thispagestyle{empty}
\subsection*{Hypothetical Stock Market}

As we mentioned before, we simulated a history of stock prices of this company using a model frequently used in financial economics. 
The following chart illustrates eight such simulated stocks in our ``hypothetical stock market''. 
\\

Let's imagine that we are at period 300 (``today'') and we do not know the ``future'' stock prices (periods 301 to 500). 
\\

The black solid line represents the price history of our Company X. 
You will see ONLY this price history during this task. 

\begin{figure}[h]
\centering 
\includegraphics[width=0.5\textwidth]{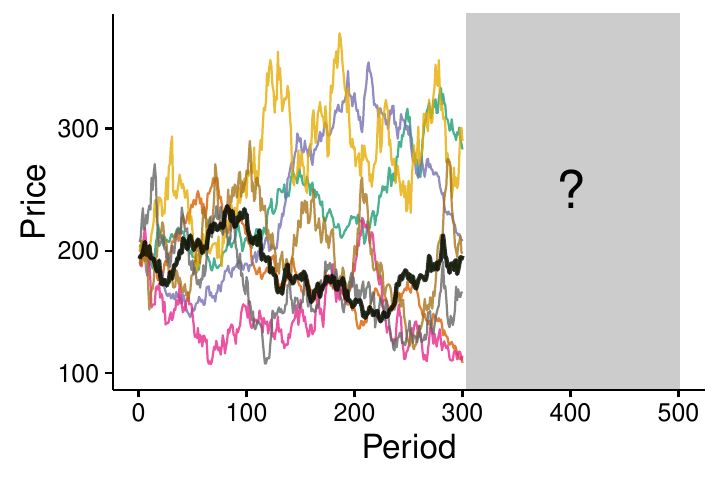}
\end{figure}

\noindent\makebox[\linewidth]{\rule{\textwidth}{1pt}} 

\clearpage 
\noindent\makebox[\linewidth]{\rule{\textwidth}{1pt}} 

\vspace{-1.5em}
\section*{Task 2} 

\subsection*{Overview}
In this part of the experiment, you will be asked a total of \numQallocationBalls{} independent questions that share a common structure. 
Your goal is to invest tokens in two different accounts. 
The accounts pay off according to the color of a chip drawn from a bag at the end of the experiment. 
\\

There is an opaque {\bf bag Z} which contains~30 colored chips. 
Each chip is either Blue, Yellow, or Red. 
The number of chips of each color is unknown to you: There can be anywhere from~0 to~30 Blue chips, anywhere from~0 to~30 Yellow chips, and anywhere from~0 to~30 Red chips, as long as the total number of Blue, Yellow, and Red chips sums to~30. 
Your payoff in this task depends on the color of a chip you will draw at the end of the experiment. 

\begin{figure}[h]
\centering 
\begin{tikzpicture}
\fill[myPaleBlue] (6.0,0.6) circle (0.2);
\fill[myPaleYellow] (6.0,0.0) circle (0.2);
\fill[myPaleRed] (6.0,-0.6) circle (0.2);
\node at (6.6,0.6) {$\times ?$};
\node at (6.6,0.0) {$\times ?$};
\node at (6.6,-0.6) {$\times ?$};
\draw[black, very thick] (5.6,1.0) -- (5.6,-1.0) -- (7.1,-1.0) -- (7.1,1.0);
\node at (6.35,1.4) {Bag Z: Total 30~chips};
\end{tikzpicture}
\end{figure}

The contents of bag Z has already been determined by an assistant at the beginning of the experiment. 
If you wish, you can inspect the contents of the bag {\bf after} completing the experiment. 

\subsection*{How it works}
Now we will explain the task in detail. 
\\

You will be asked a total of~\numQallocationPath{} independent questions. 
In each decision problem, you will be endowed with~100 tokens and asked to choose the portion of this amount (between~0 and~100 tokens, inclusive and divisible) that you wish to allocate between two accounts. 
Tokens allocated to each account may have different monetary values. 
Your payoff in this task will be determined by the following three components: 
\begin{enumerate}[label=(\roman*)]
\item the monetary value of tokens in each account, 
\item your allocation of tokens in each of the two accounts, and
\item the color of the chip you will draw from the bag at the end of the experiment. 
\end{enumerate}

\subsection*{Two types of questions}
There are two types of questions. 
\\

In Type 1 questions, two accounts are 
\begin{table}[h]
\begin{tabular}{ll}
Account {\bf Blue-or-Yellow} & : The color of chip drawn from the bag is either Blue or Yellow. \\
Account {\bf Red} & : The color of chip drawn from the bag is Red. 
\end{tabular}
\end{table}

In Type 2 questions, two accounts are 
\begin{table}[h]
\begin{tabular}{ll}
Account {\bf Blue} & : The color of chip drawn from the bag is Blue. \\
Account {\bf Yellow-or-Red} &: The color of chip drawn from the bag is either Yellow or Red. 
\end{tabular}
\end{table}

These two types of questions appear in random order. 
To understand the decision problem for the given trial correctly, it would be of your best interest to check the account structure at the top of the ``allocation table'' which will be explained below. 
\\

The ``allocation table'' at the bottom block of the screen shows information on the monetary values of tokens. 
In the example of Type~2 question below, each token you allocate to the {\bf Blue} account is worth \$0.36 (36 cents), while each token you allocate to the {\bf Yellow-or-Red} account is worth \$0.24 (24 cents). 
Notice that monetary values of tokens may change across questions. 

\begin{table}[h]
\newcolumntype{C}{>{\centering\arraybackslash}p{0.1\linewidth}}
\newcolumntype{D}{>{\centering\arraybackslash}p{0.2\linewidth}}
\centering 
\begin{tabular}{l D p{0.02\linewidth} C C}
& \cellcolor{myPaleBlue} B & & \cellcolor{myPaleYellow} Y & \cellcolor{myPaleRed} R \\
Token value & \multicolumn{1}{c}{\$0.36} & & \multicolumn{2}{c}{\$0.24} \\
Token & \multicolumn{1}{c}{30} & & \multicolumn{2}{c}{70} \\
Account value & \multicolumn{1}{c}{\$10.80} & & \multicolumn{2}{c}{\$16.80} \\
\end{tabular}
\end{table}

\subsection*{How to make a decision} 
You can allocate 100 tokens between two accounts using the slider. 
The table will be updated instantly once you move the slider, showing current allocations of tokens and their implied payment amounts. 
No cursor appears at the start of the experiment---you need to click anywhere on the slider line to activate it. 
\\

Alternatively, you can allocate tokens by directly putting numbers in one of the boxes, or clicking {\bf up}/{\bf down} arrow (which appears when you mouse-over the box) to make small adjustments. 

\subsection*{How your payoff for this task is determined}
Your payoff is determined by the number of tokens in your two accounts, and the color of a chip you will draw from the bag at the end of the experiment. 
\\

Suppose you allocated~30 tokens to account {\bf Blue} and~70 tokens to account {\bf Yellow-or-Red} as in the above example. 
If this question has been chosen to determine your payoff, your payoff will be determined by the color of a chip you will draw at the end of the experiment. 
If it is Blue, then you will earn $30 \times \$0.36 = \$10.80$ (10 dollars 80 cents). 
On the other hand, if it is Yellow or Red, then you will earn $70 \times \$0.24 = \$16.80$ (16 dollars 80 cents). 
Amount below one cent will be rounded up. 

\subsection*{Important}
\begin{itemize}
\item The composition of the bag (how many chips are blue, yellow, or red) is the same throughout this task. 
\item You will not know the actual composition until you complete all tasks in the experiment. 
\item Token values and question types can vary between questions. 
\item Once you hit the Proceed button, you cannot change your decision. 
You {\bf cannot} go back to previous pages, either. 
Note also that you {\bf cannot} change the question by refreshing the browser once it is displayed. 
\item Remember that the question that will determine your payment has already been selected at the start of the experiment. 
It is your best interest to treat each question as if it is the only question that determines you payment. 
\end{itemize}

\noindent\makebox[\linewidth]{\rule{\textwidth}{1pt}} 

\clearpage 
\noindent\makebox[\linewidth]{\rule{\textwidth}{1pt}} 

\vspace{-1.5em}
\section*{Task 3} 

There are two bags, bag A and bag B, each of which contains~30 chips. 
Each chip is either orange or green. 
The contents of each bag is as follows: 
\begin{itemize}
\item Bag A contains~10 orange chips and~10 green chips. 
\item Bag B contains 20 chips. Each chip is either orange or green. The number of chips of each color is unknown to you: 
There can be anywhere from~0 to~20 orange chips, and anywhere from~0 to~20 green chips, as long as the total number of orange and green chips sums to~20. 
\end{itemize}

\begin{figure}[h]
\centering 
\begin{tikzpicture}
\fill[Orange] (0.0,0.6) circle (0.2);
\fill[Green] (0.0,0.0) circle (0.2);
\node at (0.6,0.6) {$\times 10$};
\node at (0.6,0.0) {$\times 10$};
\draw[black, very thick] (-0.4,1.0) -- (-0.4,-0.4) -- (1.1,-0.4) -- (1.1,1.0);
\node at (0.35,1.4) {Bag A: Total 20 chips};
\fill[Orange] (6.0,0.6) circle (0.2);
\fill[Green] (6.0,0.0) circle (0.2);
\node at (6.6,0.6) {$\times ?$};
\node at (6.6,0.0) {$\times ?$};
\draw[black, very thick] (5.6,1.0) -- (5.6,-0.4) -- (7.1,-0.4) -- (7.1,1.0);
\node at (6.35,1.4) {Bag B: Total 20 chips};
\end{tikzpicture}
\end{figure}

The contents of bag B has already been determined at the beginning of the experiment. 
If you wish, you can inspect the contents of each bag {\bf after} completing the experiment. 
\\

You will now answer several questions, each of which offers you a choice between bets on the color of a chip that you will draw from one of two bags at the end of the experiment (if this section is chosen for payment). 
\\

\clearpage 
You will first be asked to choose one of the two colors. 
We will call this {\bf Your Color}. 
You will be paid only if a chip of this color is drawn from the bag at the end of the experiment. 
\\

You will then be asked to answer the following three questions. 
\begin{itemize}
\item Question: Please select a bet 
  \begin{enumerate}
  \item \$10.50 if a chip drawn from bag A is of {\bf Your Color} and \$0 otherwise. 
  \item \$10.00 if a chip drawn from bag B is of {\bf Your Color} and \$0 otherwise. 
  \end{enumerate}
\item Question: Please select a bet 
  \begin{enumerate}
  \item \$10.00 if a chip drawn from bag A is of {\bf Your Color} and \$0 otherwise. 
  \item \$10.00 if a chip drawn from bag B is of {\bf Your Color} and \$0 otherwise. 
  \end{enumerate}
\item Question: Please select a bet 
  \begin{enumerate}
  \item \$10.00 if a chip drawn from bag A is of {\bf Your Color} and \$0 otherwise. 
  \item \$10.50 if a chip drawn from bag B is of {\bf Your Color} and \$0 otherwise. 
  \end{enumerate}
\end{itemize}

\subsection*{How your payoff for this section is determined}
Suppose one of the three questions in this section is selected for determining your payment. 
If you chose bet~1 in that particular question, you will draw a chip from bag A. 
On the other hand, if you chose bet~2 in that particular question, you will draw a chip from bag B. 
In either case, you will get payment if the color of the drawn chip matches with Your Color. 
\\

\subsection*{How to make a decision}
You will see four questions on the screen. 
The first one asks which color you want to use as {\bf Your Color} and the following three questions ask which bet you would like to play. 
\\

For each question, you can make your selection by clicking on the check box for the option you would like to choose. 
You can change your selection as many times as you want, and there is no time limit. 
Once you make your selections for all three questions, you can submit them by clicking Proceed. 
You will not be able to change your decision after that. 

\noindent\makebox[\linewidth]{\rule{\textwidth}{1pt}} 

\end{myparindent}

\clearpage 
\bibliographystyleSOMpapers{ecta}
\bibliographySOMpapers{seu_exp}

\end{document}